\newcommand{\bb}[1]{\boldsymbol{#1}}
\newcommand{\bbu}[1]{\underline{\boldsymbol{#1}}}
\newcommand{\bsigma}{\bb\sigma}
\newcommand{\bxi}{\bb\xi}
\newcommand{\btau}{\bb\tau}
\newcommand{\sumsigma}{\sum_{\bb\sigma\in\Sigma_N}}
\newcommand{\sumsigmap}{\sum_{\bb\sigma'\in\Sigma_N}}
\newcommand{\mR}{\mathbb R}
\newcommand{\vx}{\bb x}
\newcommand{\vv}{\bb v}
\newcommand{\average}[2]{\omega_{#1}(#2)}
\newcommand{\E}{\mathbb{E}}
\newcommand{\qavJ}{\mathbb E_{\bb J}}
\theoremstyle{plain}
\newtheorem{Rmk}{Remark}
\newtheorem{Thm}{Theorem}
\newtheorem{Lem}{Lemma}
\newtheorem{Prp}{Proposition}
\newtheorem{Cor}{Corollary}
\newtheorem{Def}{Definition}
\title{Non-linear PDEs approach to statistical mechanics of Dense Associative Memories}
\author[a,b]{Elena Agliari,}
\author[a,b]{Alberto Fachechi,}
\author[a,b]{Chiara Marullo}
\affiliation[a]{Dipartimento di Matematica “Guido Castelnuovo”, Sapienza Università di Roma, Roma, Italy}
\affiliation[b]{GNFM-INdAM, Gruppo Nazionale di Fisica Matematica, Istituto Nazionale di Alta Matematica,
	Lecce, Italy}
\emailAdd{agliari@mat.uniroma1.it}
\abstract{
Dense associative memories (DAM), are widespread models in artificial intelligence used for pattern recognition tasks; computationally, they have been proven to be robust against adversarial input and theoretically, leveraging their analogy with spin-glass systems, they are usually treated by means of statistical-mechanics tools.
Here we develop analytical methods, based on nonlinear PDEs, to investigate their functioning. In particular, we prove differential identities involving DAM’s partition function and macroscopic observables useful for a qualitative and quantitative analysis of the system. These results allow for a deeper comprehension of the mechanisms underlying DAMs and provide interdisciplinary tools for their study.
}
\keywords{$p$-spin, Statistical mechanics, Burgers hierarchy, Nonlinear systems, Mean-field theory, PDE}
\begin{document}
\maketitle

\section{Introduction} \label{sec:intro}
{Artificial intelligence (AI) is rapidly changing the face of our society thanks to its impressive abilities in accomplishing complex tasks and extracting information from non-trivially structured, high-dimensional datasets. The successful applications of modern AI range from hard sciences and technology to more practical scenarios ({\it e.g.} medical sciences, economics and finance, and many daily tasks). Its important success is primarily due to the ascent of deep learning \cite{LeCun2015,Schmidhuber2015}, a set of semi-heuristic techniques consisting in stacking together several minimal building blocks in complex architectures with extremely-high learning performances. Despite its success, a rigorous theoretical framework guiding the development of such machine-learning architectures is still lacking. In this context, statistical mechanics of complex systems offers ideal tools to study neural network models from a more theoretical (and rigorous) point of view, thus drawing a feasible path which makes AI less empirical and more explainable.

In statistical mechanics, a primary classification of physical systems is the following. On the one side, we have simple systems, which are essentially characterized by the fact that the number of equilibrium configurations does not depend on the system size $N$. A paradigmatic (mean-field) realization of this situation is the Curie-Weiss (CW) model, in which all the spins $\sigma_i$, $i=1,\dots,N$, making up the system interact pairwisely by a constant, positive ({\it i.e.} ferromagnetic) coupling $J$. Below the critical temperature, in fact, the system exhibits ordered collective behaviors, and the equilibrium configurations of the system are characterized by only two possible values of the global magnetization $m(\bb\sigma):=\frac1N\sum_{i=1}^N\sigma_i$ (which are further related by a spin-flip symmetry $\sigma_i \to -\sigma_i $ for each $i=1,\dots,N$). On the opposite side, we have complex systems, in which the number of equilibrium configurations increases according to an appropriate function of the system size $N$ due to the presence of frustrated interactions \cite{mezard1988spin}. The prototypical example of mean-field spin-glass is the Sherrington-Kirkpatrick (SK) model \cite{sherrington1975}, in which the interaction strengths between the spin pairs are i.i.d. Gaussian variables. With respect to simple systems, spin-glass models exhibit a richer physical and mathematical structure, as shown by the presence of the spontaneous replica-symmetry breaking and an infinite number of phase transitions (\emph{e.g.} see \cite{Parisi1979rsb1,Parisi1979rsb2,Parisi1980rsb3,Parisi1980sequence,mezard1984replica,guerra2003broken,ghirlanda1998general,talagrand2000rsb}) as well as the ultrametric organization of pure states  (\emph{e.g.} see \cite{panchenko2010ultra,panchenko2011ultra,panchenko2013parisi}). Statistical mechanics of spin glasses has acquired a prominent role during the last decades due to its ability to describe the equilibrium dynamics of several paradigmatic models for AI, in particular thanks to the work by Amit, Gutfreund, and Sompolinsky \cite{amit1985} for associative neural networks. For our concerns, the relevant ones are the Hopfield model \cite{hopfield1982hopfield,pastur1977exactly} and its $p$-spin extensions, the Dense Associative Memories (DAMs) \cite{Krotov2016DenseAM,Krotov2018DAMS,AD-EPJ2020}, exhibiting features which are peculiar both of ferromagnetic (simple) and spin-glass (complex) systems. In these models, the interactions between the spins are designed in order to store $K$ ``information patterns'', each encoded by $K$ binary vectors of length $N$ and denoted by $\{\bb \xi^\mu\}_{\mu=1,\dots,K}$, with $\boldsymbol \xi^{\mu}=(\xi_1^{\mu},\xi_2^{\mu},...,\xi_N^{\mu}) \in \{-1, +1\}^N$ and $\xi^\mu_i$ is a Rademacher random variable for any $i=1,\dots,N$ and $\mu=1,\dots,K$; the $\mu$-th pattern is said to be stored if the configuration $\boldsymbol \sigma = \boldsymbol \xi^{\mu}$ is an equilibrium state and the relaxation to this configuration, starting from a relatively close one (\emph{i.e.} a corrupted version of $\boldsymbol \xi^{\mu}$), is interpreted as the retrieval of that pattern. The Hamilton function (or the energy in physics jargon) of these systems can be expressed as
$$
H_{N,p}(\bb\sigma)\propto -\sum_{\mu=1}^K( m_{\mu }(\bb\sigma))^p,
$$
where $p$ is the interaction order (for the Hopfield model $p=2$, while $p>2$ for the DAMs) and $m_\mu (\bb\sigma):=\frac1N\sum_{i=1}^N \xi^\mu_i \sigma_i$ is the so-called Mattis magnetizations measuring the retrieval of the $\mu$-th pattern. It has been shown that the number of storable patterns scales, at most, as a function of the system size, more precisely, $K < \alpha_c(p) N^{p-1}$, where $\alpha_c(p) \in \mathbb R$ depends on the interaction order $p$ and is referred to as critical storage capacity \cite{baldi1987number,bovier2001spin}. By a statistical-mechanics investigation of these models one can highlight the macroscopic observables (order parameters) useful to describe the overall behavior of the system, namely to assess whether it exhibits retrieval capabilities, and the natural control parameters whose tuning can qualitatively change the system behavior; such knowledge can then be summarized in \emph{phase diagram}. 

Regarding the methods, statistical mechanics offers a wide set of techniques for analyzing the equilibrium dynamics of complex systems, and in particular to solve for their free-energy. Historically, the first method (which was applied to the SK model and the Hopfield model \cite{amit1985,steffan1994replica}) is the {\it replica trick}, which -- despite being straightforward and effective -- is semi-heuristic and suffers from delicate points, see for example \cite{tanaka2007moment}. Alternative, rigorous approaches were developed during the years and, among these, the relevant one for our concerns is Guerra's interpolating framework. In this case, we can take advantage of rigorous mathematical methods by applying sum rules \cite{guerra2001sum} or by mapping the relevant quantities (the free energy or the model order parameters) of the statistical setting to the solutions of PDE systems. Indeed, differential equations involving the partition functions (or related quantities) of thermodynamic models have been extensively investigated in the literature, see for example \cite{agliari2012notes,barra2013mean, barra2008mean, barra2010replica,barra2014proc,Moro2018annals,AMT-JSP2018, Moro2019PRE, ABN-JMP2019, fachechi2021pde,AAAF-JPA2021}. In particular, they allow us to express the equation of state (or the self-consistency equations) governing the equilibrium dynamics of the system in terms of solutions of non-linear differential equations, and to describe phase transition phenomena as the development of shock waves, thus linking critical behaviours to gradient catastrophe theory \cite{Barra2015Annals,DeNittis2012PrsA,Giglio2016Physica,Moro2014Annals}. In a recent work \cite{fachechi2021pde}, a direct connection between the thermodynamics of ferromagnetic models with interactions of order $p$ and the equations of the Burgers hierarchy was established by linking the solution of the latter as the equilibrium solution of the order parameter of the former ({\it i.e.} the global magnetization $m$). In the present paper, we extend these results to complex models, in particular to the Hopfield model and the DAMs.
\par\medskip
The paper is organised as follows. In Section 2, we introduce the relevant tools for our investigations, in particular Guerra's interpolating scheme for the PDE duality. In Section 3, as a warm up, we review some basic results about the $p$-spin ferromagnetic models. 
In Section 4, we extend our results to the Derrida models (constituting the $p-$spin extension of the SK spin glass) \cite{derrida1981rem}.
In Section 5, we merge our results in a unified methodology for dealing with the DAMs, especially in the so-called high storage limit, and re-derive the self-consistency equations for the order parameters by means of PDE technology.
}

\section{Generalities and notation}
In this Section, we present the thermodynamic objects we aim to study. We start with a system made up of $N$ spins whose configurations $\bb \sigma\in\Sigma_N  \equiv \{-1,+1\}^N$ are the nodes of a hypercube and interacting via a suitable tensor $\bb J$ of order $p$. The Hamilton functions of the system we will consider in this paper are of the form
\begin{equation}
	\label{eq:Gen_Ham}
	H_{N,p,\bb J}(\bb \sigma)= -\frac1{D_{p,N,J}} \sum_{i_1,\dots,i_p=1}^{N}J_{i_1,i_2,\dots,i_p}\sigma_{i_1}\sigma_{i_2}\dots \sigma_{i_p},
\end{equation}
where $D_{p,N,J}$ is a normalization factor ensuring the linear extensivity of the energy with the system size.
Once the Hamiltonian is fixed, we introduce the partition function in the usual Boltzmann-Gibbs form. Thus, given $\beta \in \mR_+$ the level of thermal noise of the system, the partition function is defined as
\begin{equation}\label{eq:Gen_part_func}
	Z_{N,p,\bb J }(\beta ):= \sum_{\bb\sigma\in\Sigma_N} \exp\left[-\beta H_{N,p,\bb J }(\bb\sigma)\right].
\end{equation}
For simple systems, the partition function can be computed exactly for any $\bb J$ coupling matrix. As is standard in statistical mechanics, it is convenient to compute intensive quantities which are well-defined in the thermodynamic limit $N\to\infty$. Since the partition function is a sum of $2^N$ contribution, it is sufficient to take the intensive logarithm of the partition function, {\it i.e.}
$$
A_{N,p,\bb J}(\beta):=\frac1N \log Z_{N,p,\bb J }(\beta ),
$$
which is the intensive statistical pressure (which is, apart for a factor $-\beta$, the usual free energy) of the system.
However, when dealing with spin-glass systems, the coupling tensor $\bb J$ is a multidimensional random variable, thus the partition function defines a random measure on the configuration space. For good enough probability distributions of the coupling matrix, the intensive logarithm of the partition function is expected to converge to its expectation value in the thermodynamic limit $N\to\infty$ by virtue of self-averaging theorems \cite{ShcherbinaPastur-JSP1991,Bovier-JPA1994}, so it is natural to consider the quenched intensive pressure associated to the partition function \eqref{eq:Gen_part_func}, which is defined as
 \begin{equation}
 	A_{N,p}(\beta):=\frac 1 N\E_{\bb J}\log Z_{N,p, \bb J }(\beta),
 \end{equation}
 where $\E_{\bb J}$ denotes the average over the quenched disorder $\bb J$ (we stress that, in this case, the free energy does not depend any longer on the coupling matrix because of the average operation).\par\medskip
Rather than working with the quantity \eqref{eq:Gen_part_func}, we will use as a fundamental object Guerra's interpolated partition function and its associated interpolating intensive pressure. For instance, for spin-glass systems we would have
\begin{equation}
\begin{split}
	\label{eq:Gen_Guerra}
Z_{N,p, \bb J }(t,\vx)&:=\sum_{\bb \sigma\in\Sigma_N} \exp\left[-{H_{N,p, \bb J }(t,\vx)}\right],
\\  A_{N,p}(t,\vx)&:=\frac 1 N \E_{\bb J } \log Z_{N,p, \bb J }(t,\vx),
\end{split}
\end{equation}
where $H_{N,p, \bb J }(t,\vx)$ denotes the interpolating Hamiltonian satisfying the properties that, at $\bb x=0$ and $t \neq0$, it recovers the Hamiltonian \eqref{eq:Gen_Ham} times $\beta$, and at $t=0$ and $\bb x \neq \bb 0$ it corresponds to an exactly-solvable $1$-body system, namely a system where spins interact only with an external field that has to be set a posteriori. 
The interpolating parameters $t$ and $\bb x$ are interpreted, in a mechanical analogy, as spacetime coordinates with suitable dimensionality. 
\par\medskip
The interpolating structure \eqref{eq:Gen_Guerra} implies a generalized measure, whose related Boltzmann factor is 
\begin{equation}
B_{N,p, \bb J }(t,\vx):=\exp\left[-{H_{N,p, \bb J }(t,\vx)}\right].
\end{equation}
Thus, for an arbitrary observable $ O(\bb \sigma)$ in the configuration space $\Sigma_N$, we can introduce the Boltzmann average induced by the partition function \eqref{eq:Gen_Guerra} as
\begin{equation}\label{B_average}
\omega_{t,\vx}( O):= \frac{1}{Z_{N,p, \bb J }(t,\vx)}\sum_{\bb \sigma\in\Sigma_N} O(\bb \sigma)B_{N,p, \bb J }(t,\vx).
\end{equation}
Usually, in spin-glass systems, the quenched average is performed after taking the Boltzmann expectation values on the $s$-replicated space $\Sigma_N^{(s)}=( \Sigma_N)^{\otimes s}\equiv \{-1,+1\}^{sN}$, which is naturally endowed with a random Gibbs measure corresponding to the partition function $Z_{N,p, \bb J }^{(s)}(t,\vx)=Z_{N,p, \bb J }(t,\vx)^s$. Given a function $O:\Sigma_N^{(s)}\to \mR$, the Boltzmann average in the $s$-replicated space are straightforwardly defined as
$$
\Omega_{t,\vx}^{(s)}(O):= \frac{1}{Z_{N,p, \bb J }^{(s)}(t,\vx)}\sum_{\underline{\bb \sigma}\in\Sigma_N^{(s)}} O(\underline{\bb \sigma})B_{N,p, \bb J }^{(s)}(t,\vx)
$$
where $\underline{\bb \sigma} \in \Sigma_N^{(s)}$ is the global conﬁguration of the replicated system, and $B_{N,p, \bb J }^{(s)}(t,\vx)$ is the Boltzmann factor associated to the $s$-replicated partition function. Of course, in spin-glass theory, the relevant quantities are the {\it quenched} expectation values, which are defined as
\begin{equation}
\langle  O\rangle_{t,\vx}:=\E_{\bb J}\Omega_{t,\vx}^{(s)}(O).
\end{equation}
For the sake of simplicity, we dropped the index $s$ from the quenched averages, as it would be clear from the context.\par\medskip

With all these definitions in mind, we are then able to find the link between the resolution of the statistical mechanics of a given spin-like model and a specific PDE problem in the fictitious space $(t,\vx)$. Before concluding this section it is worth recalling that here we will work under the replica-symmetry (RS) assumption, meaning that we assume the self-averaging property for any order parameter $X$, {\it i.e.} the fluctuations around their expectation values vanish in the thermodynamic limit. In distributional sense, this corresponds to
\begin{equation}
\lim_{N\to\infty} \mathcal P_{t,\vx} (X)=\delta(X-\bar X).
\end{equation}
where $\bar X= \langle X\rangle _{t,\vx}$ is the expectation value w.r.t. the interpolating measure $ \mathcal P_{t,\vx} (X)$. Typically, for simple systems this assumption is correct, conversely, for complex systems this is not always the case, for instance, in spin-glasses the RS is broken at low temperature \cite{mezard1988spin}. When dealing with neural-network models, RS constitutes a standard working assumption as it usually applies (at least) in a limited region of the parameter space, while elsewhere it yields only small quantitative discrepancies with respect to the exact solution \cite{amit1989,SpecialIssue-JPA}. The latter, accounting for RSB phenomena, can be obtained by iteratively perturbing the RS interpolation scheme (\emph{e.g.} see \cite{AABO-JPA2020,AAAF-JPA2021,albanese2021}), thus, our results find direct application on the practical side and provide the starting point for further refinements on the theoretical side.

\section{$p$-spin ferromagnetic models: how to deal with simple systems}
The present section is a compendium of the results reported in \cite{fachechi2021pde}, so we refer to that work for a detailed derivation. In $p$-spin ferromagnets, the interaction between spins is fixed by the requirement $J_{i_1,i_2,\dots,i_p}=J$ for each $i_1,\dots,i_p=1,\dots ,N$ and $J>0$; without loss of generality, one can set $J=1$, since it corresponds to a rescaling of the thermal noise. Thus, the Hamilton function of the model simply reads as
\begin{equation}
\label{eq:CW_original}
H_{N,p}(\bb\sigma):=-\frac{1}{ N^{p-1}}\sum_{i_1,\dots,i_p=1}^N\sigma_{i_1}\dots \sigma_{i_p} = -N (m (\bb\sigma))^p,
\end{equation}
with $$m(\bb\sigma):=\frac 1N \sum_i\sigma_i,$$ being the global magnetization of the system. By following the same lines of \cite{fachechi2021pde}, Guerra's interpolating partition function reads as
\begin{eqnarray}
	Z_{N,p}(t,x)&=&\sum_{\bb \sigma\in\Sigma_N}\exp\left(-H_{N,p}(t,x)\right),\label{Z_int_pspin}\\
	H_{N,p}(t,x)&=&t N m(\bb\sigma)^p-Nx m(\bb\sigma),
\end{eqnarray}
where $(t,x)\in \mR^2$.  The starting point is to notice that the interpolating statistical pressure associated to the partition function \eqref{Z_int_pspin} has spacetime derivatives
\begin{eqnarray}
	\partial_t A_{N,p}(t,x)&=&-\omega_{t,x} (m(\bb\sigma)^p),\\
	\partial _x A_{N,p}(t,x)&=& \omega_{t,x}(m (\bb\sigma)).
\end{eqnarray}
The expectation value of monomials of the global magnetization satisfies the following relation \cite{fachechi2021pde}:
\begin{equation}
\partial_x \omega_{t,x} (m(\bb\sigma)^s)=N (\omega_{t,x}(m (\bb\sigma)^{s+1})-\omega_{t,x} (m(\bb\sigma)^s)\omega_{t,x}(m(\bb\sigma))).
\end{equation}
This means that we can act on the expectation value $\omega_{t,x}(m(\bb\sigma))$ to generate higher momenta. In particular, calling $u(t,x)=\omega_{t,x}(m(\bb\sigma))$ and setting $s=p-1$, we directly get the Burgers hierarchy
\begin{equation}\label{Burger_hierarchy}
\partial_t u(t,x)+\partial_x\left(\frac 1 N \partial_x+u(t,x)\right)^{p-1} u(t,x)=0.
\end{equation}
This duality also allows us to analyse the thermodynamic limit, corresponding to the inviscid scenario for the Burgers hierarchy. Indeed, posing $\bar u(t,x)=\lim_{N\to\infty}u(t,x)=\lim_{N\to\infty}\omega_{t,x}(m(\bb\sigma))$, we have the initial value problem
\begin{equation}\label{Burger_inviscid}
\begin{cases}
\partial_t \bar u(t,x)+p\bar u(t,x)^{p-1} \partial_x\bar u(t,x)=0\\
\bar u(0,x)=\tanh(x)
\end{cases},
\end{equation}
where the initial profile is easily computed by straightforward calculations (since it is a 1-body problem). This system describes the propagation of non-linear waves, and can be solved by assuming a solution in implicit form $\bar u(t,x)=\tanh (x-v(t,x)t)$, where $v(t,x)=p \bar u(t,x)^{p-1}$ is the effective velocity. Recalling that the thermodynamics of the original $p$-spin model associated to the Hamilton function \eqref{eq:CW_original} is recovered by setting $t =-\beta$ and $x = 0$, we directly obtain
\begin{equation}\label{eq:pferr_sc}
\bar m=\tanh(\beta p\bar m^{p-1}),
\end{equation}
where $\bar m=\bar u(-\beta,0)$. This is precisely the self-consistency equation for the global magnetization for the $p$-spin ferromagnetic model \cite{fachechi2021pde}. The phase transition of the system is expected to take place where the gradient of the solution explodes, which, on the Burgers side, corresponds to the development of a shock wave at $x=0$. Since the temporal coordinate $t$ is directly related to the thermal noise at which the phase transition occurs, with standard PDE methods we can analytically determine the critical temperature according to the simple system
$$
\begin{cases}
	\bar \xi =\frac{F(\bar \xi)}{F'(\bar \xi)},\\
	T_c = F'(\bar \xi),
\end{cases}
$$ 
where $F(\xi)=p\tanh (\xi)^{p-1}$. This prediction is in perfect agreement with the numerical solutions of the self-consistency equation \eqref{eq:pferr_sc}.

\section{Derrida models: how to deal with complex systems}
In this section, we adapt the previous methodologies to treat complex systems with $p$-spin interactions. The paradigmatic case is given by the $p$-spin SK model, also referred to as Derrida model, defined as follows 

\begin{Def}
	Let $\bb\sigma $ be the generic point in the configuration space $\Sigma_N=\{-1,+1\}^N$ of the system. Let $\bb J$ be a $p$-rank random tensor with entries $J_{i_1\dots i_p}\sim \mathcal N(0,1)$ i.i.d. The Hamilton function of the $p$-spin Derrida model is defined as
	\begin{equation} \label{eq:def_deridda}
		H_{N,p, \bb J }(\bb\sigma)= -\sqrt{\frac{p!}{2 N^{p-1}}}\sum_{1\le i_1<\dots<i_p \le N}^N J_{i_1 \dots i_p}\sigma_{i_1}\dots \sigma_{i_p}.
	\end{equation}	
\end{Def}
\begin{Rmk}
	Clearly, for $p=2$ we recover the Sherrington-Kirkpatrick model \cite{sherrington1975}.
\end{Rmk}
\begin{Rmk}\label{Rmk:sum}
In the usual definition of the $p$-spin SK model, the sum is performed with the constraint $1\le i_1<i_2<\dots i_p\le N$ like in \eqref{eq:def_deridda}. Beyond that formulation, it is possible to consider an alternative one, where summation is realized independently over all the indices, the difference between the two prescriptions being vanishing in the thermodynamic limit, that is
\begin{equation}
	\label{eq:subleading}
		\sum_{1\le i_1<\dots<i_p \le N}(\cdot)= \frac1{p!}\sum_{i_1,\dots,i_p =1}^N(\cdot) + \textnormal{ contributions vanishing as } N\to\infty.
\end{equation}
Since we are interested in the thermodynamic limit, we will often use the equality
	\begin{equation}
		\sum_{1\le i_1<\dots<i_p \le N}(\cdot)= \frac1{p!}\sum_{i_1,\dots,i_p =1}^N(\cdot),
	\end{equation}
holding in the $N\to\infty$ limit.
\end{Rmk}

\begin{Def} \label{def:DM}
	Given $(t,x)\in \mR^2$ and given a family $\{J_i\}_{i=1}^N$ of i.i.d. $\mathcal N(0,1)$-distributed random variables, Guerra's interpolating partition function for the $p$-spin SK model is
	\begin{eqnarray}\label{ZSK}
		Z_{N,p, \bb J } (t,x)&=&\sum_{\bb \sigma \in \Sigma_N} \exp\big(-H_{N,p, \bb J }(t,x)\big),\\
		H_{N,p, \bb J }(t,x)&=&-\sqrt{\frac{ t p!}{2 N^{p-1}}}\sum_{1\le i_1<\dots<i_p \le N}^N J_{i_1 \dots i_p}\sigma_{i_1}\dots \sigma_{i_p}-\sqrt x \sum_{i=1}^N J_i \sigma_i.
	\end{eqnarray}
The Boltzmann factor associated to this partition function is denoted with $B_{N,p,\bb J} (t,x) $.
\end{Def}
As stated in Sec.~\ref{sec:intro}, when dealing with spin glasses we need to enlarge our analysis to the $s$-replicated version of the configuration space. To this aim, we use the following
\begin{Def}
	Let $\Sigma_N^{(s)}=( \Sigma_N)^{\otimes s}\equiv \{-1,+1\}^{sN}$ be the $s$-replicated configuration space. We denote with $\bb {\underline{\sigma}}= (\bb\sigma^1,\dots ,\bb\sigma^s)\in \Sigma^{(s)}_N$ the global configuration of the replicated system.
	The space $\Sigma_N^{(s)}$ is naturally endowed with the $s$-replicated Boltzmann-Gibbs measure associated to the partition function
	\begin{equation}
		Z^{(s)}_{N,p, \bb J } (t,x)= \sum_{\bb {\underline{\sigma}} \in \Sigma_N^{(s)}} \exp\Big(\sqrt{\frac{ t p!}{2 N^{p-1}}}\sum_{a=1}^s\sum_{1\le i_1<\dots<i_p \le N}^N J_{i_1 \dots i_p}\sigma_{i_1}^{(a)}\dots \sigma_{i_p}^{(a)} +\sqrt x \sum_{a=1}^s \sum_{i=1}^N J_i \sigma_i ^{(a)}\Big).
	\end{equation}
	We will denote with $B^{(s)}_{N,p, \bb J } (t,x)$ the Boltzmann factor appearing in the $s$-replicated partition function. Given an observable $O:\Sigma_N^{(s)}\to \mR$ on the replicated space, the Boltzmann average w.r.t. the $s$-replicated partition function is
	\begin{equation}
				\Omega^{(s)}_{t,x}(O)= \frac{\sum_{\bb {\underline{\sigma}}} O(\bb {\underline{\sigma}} ) B_{N,p, \bb J } ^{(s)}(t,x)}{ \sum_{\bb {\underline{\sigma}}}  B_{N,p, \bb J }^{(s)} (t,x)}.
	\end{equation}
\end{Def}

\begin{Rmk}
	Clearly, the thermodynamics of the original model is recovered with $t=\beta^2$ and $x=0$.
\end{Rmk}

\begin{Rmk}
	Since replicas are independent $Z^{(s)}_{N,p, \bb J } (t,x)\equiv (Z_{N,p, \bb J } (t,x))^s$.
\end{Rmk}

In the following, in order to lighten the notation, the replica index $s$ of the Boltzmann average $\Omega^{(s)}_{t,x}$ can be dropped, since it is understood directly from the function to be averaged.

\begin{Def}
	Given an observable $O:\Sigma_N^{(s)}\to \mR$ on the replicated space, the quenched average is defined as
	\begin{equation}
		\langle O \rangle _{t,x} = \mathbb E_{\bb J} \Omega_{t,x}(O).
	\end{equation}
\end{Def}

\begin{Rmk}
	In the last definition, the average $\mathbb E_{\bb J}$ is again the expectation value performed over all the quenched disorder, thus including the auxiliary random variables in the interpolating setup.
\end{Rmk}

\begin{Def}\label{Def:Overlap_pSK}
	The order parameter for the $p$-spin SK model is the replica overlap
	\begin{equation}
		q_{ab}=\frac 1N \sum_{i=1}^N \sigma_i ^{(a)} \sigma_i ^{(b)},
	\end{equation}
	where $\bb \sigma^{(a)}$ and $\bb \sigma^{(b)}$ are two generic configurations of different replicas of the system.
\end{Def}

We can now focus on the PDE approach to the statistical mechanics of the $p$-spin SK model. To this aim, we compute the spacetime derivative of the quenched intensive pressure, as given in the following

\begin{Def}\label{Def:GuerraAction_pSK}
	For all $p \ge 2$, Guerra's action functional is defined as
	\begin{equation}
		\label{eq:GuerraAction_pSK}
		S_{N,p}(t,x)= 2A_{N,p}(t,x)-x-\frac t2.
	\end{equation}
\end{Def}

\begin{Lem}\label{Prp:Sderivs_pSK}
	The spacetime derivatives of the Guerra's action functional read as
	\begin{eqnarray}
		\partial_ t S_{N,p}(t,x)&=&-\frac 12 \langle q_{12}^p\rangle_{t,x} + R_N(t,x), \label{eq:SNt_derrida}
		\\
		\partial_ x S_{N,p}(t,x)&=&- \langle q_{12}\rangle_{t,x}. \label{eq:SNx_derrida}
	\end{eqnarray}
	where $R_N(t,x)$ takes into account the contributions coming from \eqref{eq:subleading} and vanishing in the $N\to\infty $ limit. 
\end{Lem}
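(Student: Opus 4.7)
The plan is to differentiate $A_{N,p}(t,x)$ directly from its definition as the quenched log-partition function, apply Gaussian integration by parts to eliminate the bare disorder variables, and then reorganize the resulting sums as replica overlaps. First, I would differentiate under the $\mathbb{E}_{\bb J}$: since $Z_{N,p,\bb J}(t,x)=\sum_{\bb\sigma}\exp(-H_{N,p,\bb J}(t,x))$, one gets
\begin{equation*}
\partial_t A_{N,p}(t,x)=\frac{1}{N}\,\mathbb{E}_{\bb J}\,\omega_{t,x}(-\partial_t H_{N,p,\bb J}),\qquad \partial_x A_{N,p}(t,x)=\frac{1}{N}\,\mathbb{E}_{\bb J}\,\omega_{t,x}(-\partial_x H_{N,p,\bb J}),
\end{equation*}
and both $-\partial_t H$ and $-\partial_x H$ are linear in the Gaussians $J_{i_1\dots i_p}$ and $J_i$, respectively, with explicit prefactors $\tfrac{1}{2\sqrt{t}}\sqrt{p!/(2N^{p-1})}$ and $\tfrac{1}{2\sqrt{x}}$.

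Next, I would apply Stein's lemma $\mathbb{E}[J\,f(\bb J)]=\mathbb{E}[\partial_J f(\bb J)]$ to each term. For the generic disorder variable one has the standard Boltzmann identity
\begin{equation*}
\partial_{J}\,\omega_{t,x}(g)=-\omega_{t,x}(g\,\partial_J H)+\omega_{t,x}(g)\,\omega_{t,x}(\partial_J H),
\end{equation*}
and since $\partial_{J_{i_1\dots i_p}}H=-\sqrt{tp!/(2N^{p-1})}\,\sigma_{i_1}\cdots\sigma_{i_p}$ (respectively $\partial_{J_i}H=-\sqrt{x}\,\sigma_i$), the squares $\sigma_{i_k}^2=1$ collapse the diagonal term to $1$. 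The off-diagonal piece $\omega_{t,x}(\sigma_{i_1}\cdots\sigma_{i_p})^2$ is recognized as a two-replica average $\Omega^{(2)}_{t,x}(\sigma^{(1)}_{i_1}\sigma^{(2)}_{i_1}\cdots\sigma^{(1)}_{i_p}\sigma^{(2)}_{i_p})$, and taking $\mathbb{E}_{\bb J}$ turns it into a quenched average $\langle\cdot\rangle_{t,x}$.

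The third step is to reassemble the $i$-sums. For $\partial_x A_{N,p}$ the constants $\sqrt x$ cancel against $1/(2\sqrt x)$ and the sum $\sum_i(1-\langle\sigma_i^{(1)}\sigma_i^{(2)}\rangle_{t,x})=N(1-\langle q_{12}\rangle_{t,x})$ yields $\tfrac12(1-\langle q_{12}\rangle_{t,x})$; subtracting the $\partial_x(-x)=-1$ contribution from $S_{N,p}$ gives precisely $-\langle q_{12}\rangle_{t,x}$. For $\partial_t A_{N,p}$, the constants multiply to $p!/(4N^p)$, and invoking Remark~\ref{Rmk:sum} I would replace
\begin{equation*}
\sum_{1\le i_1<\dots<i_p\le N}\prod_{k=1}^p\sigma_{i_k}^{(1)}\sigma_{i_k}^{(2)}=\frac{1}{p!}\Bigl(\sum_{i=1}^N\sigma_i^{(1)}\sigma_i^{(2)}\Bigr)^{p}+(\text{lower-order})=\frac{N^p}{p!}\,q_{12}^p+(\text{lower-order}),
\end{equation*}
and similarly $\binom{N}{p}=N^p/p!+O(N^{p-1})$. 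Collecting everything, $\partial_t A_{N,p}=\tfrac14(1-\langle q_{12}^p\rangle_{t,x})+R'_N(t,x)$, and combining with $\partial_t(-t/2)=-1/2$ inside $S_{N,p}$ produces $-\tfrac12\langle q_{12}^p\rangle_{t,x}+R_N(t,x)$ as claimed.

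The conceptually delicate point is the Gaussian integration by parts, but for i.i.d.\ standard normals and bounded Boltzmann observables it is entirely standard. The only real bookkeeping obstacle is tracking the $N^{p-1}$ normalization together with the passage from the ordered sum $\sum_{i_1<\dots<i_p}$ to the free sum $\sum_{i_1,\dots,i_p}$: the diagonal terms (where some indices coincide) constitute exactly the $R_N(t,x)$ remainder, which is $O(1/N)$ and therefore vanishes in the thermodynamic limit. No non-trivial concentration of measure is needed at this stage, since the statement is at finite $N$ modulo the explicit subleading term $R_N$.
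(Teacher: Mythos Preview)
Your proposal is correct and follows essentially the same route as the paper's own proof in Appendix~\ref{AA:Sderivs_pSK}: differentiate $A_{N,p}$, apply the Wick--Isserlis/Stein identity to remove the bare Gaussian, rewrite the squared Boltzmann average as a two-replica quenched expectation, and then invoke Remark~\ref{Rmk:sum} to pass from the ordered sum to $q_{12}^p$ (absorbing the diagonal corrections into $R_N$), before finally using $S_{N,p}=2A_{N,p}-x-\tfrac t2$. The bookkeeping of the prefactors $p!/(4N^p)$ and the cancellation of the $1$ and $1/2$ against the affine shifts in $S_{N,p}$ is exactly as in the paper.
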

The proof of this lemma can be found in Appendix \ref{AA:Sderivs_pSK}.

\begin{Lem}\label{Prp:Streaming_pSK}
	Given an observable $O:\Sigma^{(s)}_N \to \mR $ on the replicated space, the following streaming equation holds:
	\begin{equation}
		\begin{split}
			\partial_x \langle O(\bb{\underline\sigma})\rangle_{t,x}&= \frac  N2 \sum_{a,b=1}^s \langle O(\underline{\bb \sigma})q_{ab}\rangle_{t,x}-s N \sum_{a=1}^s \langle O(\underline{\bb\sigma})q_{a,s+1}\rangle_{t,x}\\
			&-\frac s2 N \langle O(\bb{\underline{\sigma}})\rangle_{t,x}+\frac{s(s+1)}{2} \langle O(\bb{\underline{\sigma}})q_{s+1,s+2}\rangle_{t,x}.
		\end{split}
	\end{equation}
\end{Lem}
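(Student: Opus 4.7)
The plan is to differentiate in $x$ the quenched Boltzmann average $\langle O\rangle_{t,x}$ and then to process the result via Gaussian integration by parts (Stein's lemma $\E[J_i G]=\E[\partial_{J_i}G]$) applied to each $J_i$, paying careful attention to the replica bookkeeping that any product of two Gibbs averages forces upon us. Two differential identities drive the whole computation: on the one hand, $\partial_x\log B^{(s)}_{N,p,\bb J}(t,x)=\tfrac{1}{2\sqrt{x}}\,T$ with $T:=\sum_{a=1}^s\sum_{i=1}^N J_i\sigma_i^{(a)}$, which yields
\begin{equation*}
2\sqrt{x}\,\partial_x\langle O\rangle_{t,x}=\E_{\bb J}\bigl[\Omega_{t,x}(OT)-\Omega_{t,x}(O)\,\Omega_{t,x}(T)\bigr];
\end{equation*}
on the other, $\partial_{J_i}\log B^{(s)}_{N,p,\bb J}(t,x)=\sqrt{x}\sum_{c=1}^s\sigma_i^{(c)}$, producing
\begin{equation*}
\partial_{J_i}\Omega_{t,x}(F)=\sqrt{x}\sum_{c=1}^s\bigl[\Omega_{t,x}(F\sigma_i^{(c)})-\Omega_{t,x}(F)\,\Omega_{t,x}(\sigma_i^{(c)})\bigr]
\end{equation*}
for any $F$ on $\Sigma_N^{(s)}$.

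Applied with $F=O\sigma_i^{(a)}$ and summed over $(a,i)$, this processes $\E_{\bb J}[\Omega(OT)]$: using $\sum_i\sigma_i^{(a)}\sigma_i^{(c)}=Nq_{ac}$, the $\Omega(F\sigma_i^{(c)})$ piece contributes $\tfrac{N}{2}\sum_{a,c=1}^s\langle Oq_{ac}\rangle_{t,x}$ once we divide by $2\sqrt{x}$, whereas each product $\Omega(O\sigma_i^{(a)})\Omega(\sigma_i^{(c)})$ is realized as a single $(s+1)$-replica average by spawning a fresh replica labelled $s+1$ (the $c$-sum then collapses to a factor $s$), contributing $-\tfrac{sN}{2}\sum_a\langle Oq_{a,s+1}\rangle_{t,x}$.

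For $\E_{\bb J}[\Omega(O)\Omega(T)]$ I would first convert $\Omega(O)\Omega(\sigma_i^{(a)})=\Omega^{(s+1)}_{t,x}(O\sigma_i^{(s+1)})$ so the outer $a$-sum yields a factor $s$, and then apply Stein's lemma to the $(s+1)$-replica object using $\partial_{J_i}\log B^{(s+1)}_{N,p,\bb J}(t,x)=\sqrt{x}\sum_{c=1}^{s+1}\sigma_i^{(c)}$. Three contributions then arise: the diagonal $c=s+1$ term gives $(\sigma_i^{(s+1)})^2=1$ and produces $-\tfrac{s}{2}N\langle O\rangle_{t,x}$; the $c\in\{1,\dots,s\}$ part produces a second copy of $-\tfrac{sN}{2}\sum_c\langle Oq_{c,s+1}\rangle_{t,x}$ which merges with the twin from the previous paragraph into the claimed $-sN\sum_a\langle Oq_{a,s+1}\rangle_{t,x}$; and the residual product $\Omega^{(s+1)}(O\sigma_i^{(s+1)})\Omega^{(s+1)}(\sigma_i^{(c)})$ forces a further replica $s+2$, producing the $\langle Oq_{s+1,s+2}\rangle_{t,x}$ contribution with combinatorial weight $s(s+1)$ (one $s$ from the outer $a$-sum, one $s+1$ from the inner $c$-sum).

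I expect the hard part to be not computational but strictly notational: every time $\Omega(\cdot)\Omega(\cdot)$ is encountered, it must be realized as a single Gibbs expectation on an enlarged replica space by introducing a \emph{fresh} replica sector that shares the same disorder $\bb J$; collapsing such a new replica onto an existing one, or forgetting that the product forces yet another Stein step, would corrupt both the combinatorial coefficients and the overlap labels appearing in the final identity.
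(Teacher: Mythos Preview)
Your proposal is correct and follows essentially the same approach as the paper: differentiate the quenched average in $x$, split into the contribution from the replicated Boltzmann factor and the contribution from the normalization $Z^{-s}$, apply the Wick--Isserlis (Stein) identity to each $J_i$, and absorb every product of Gibbs expectations into a single expectation on an enlarged replica space, collecting the $\sigma_i^{(a)}\sigma_i^{(b)}$ sums into overlaps. Your replica bookkeeping (spawning $s+1$ and then $s+2$) and the way the two $-\tfrac{sN}{2}\sum_a\langle O q_{a,s+1}\rangle$ pieces merge is exactly what the paper sketches; just make sure that after the global division by $2\sqrt{x}$ your ``combinatorial weight $s(s+1)$'' for the $q_{s+1,s+2}$ term becomes $\tfrac{s(s+1)}{2}N$ (the missing $N$ in the displayed statement is a typo in the paper, as the subsequent Corollary confirms).
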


\begin{proof}
	The proof is long and rather cumbersome, so we will just give a sketch. First of all, we recall that
	\begin{equation}
		\langle O(\bbu \sigma)\rangle_{t,x}=\mathbb E_{\bb J} \frac1{Z_{N,p, \bb J }(t,x)^s} \sum_{\bb {\underline{\sigma}} \in  \Sigma_N^{(s)}}  O(\bbu \sigma)B_{N,p, \bb J } ^{(s)}(t,x).
	\end{equation}
	When taking the $x$-derivative of this quantity, we will get two contributions: the first one follows from the derivative of $B_{N,p, \bb J }^{(s)}(t,x)$, and the second one follows from the derivative of $1/Z_{N,p, \bb J }^s$ (which results in adding a new replica). In quantitative terms:
	\begin{equation}
		\begin{split}			
		\partial_x \langle O(\bbu \sigma)\rangle_{t,x} = \frac1{2\sqrt x}&\mathbb E_{\bb J}\Bigl(\sum_{a=1}^s \sum_{i=1}^N J_i \frac1{Z_{N,p, \bb J }(t,x)^s}\sum_{\bb \sigma^{(1)}}\dots \sum_{\bb \sigma^{(s)}}O(\bbu\sigma) \sigma_i ^{(a)} B_{N,p, \bb J } ^{(s)}(t,x)\\
		&-s \sum_{i=1}^N J_i \frac1{Z_{N,p, \bb J }(t,x)^{s+1}}\sum_{\bb \sigma^{(1)}}\dots \sum_{\bb \sigma^{(s+1)}}O(\bbu\sigma)\sigma_i^{(s+1)}B_{N,p, \bb J }^{(s+1)}(t,x)\Bigr).
		\end{split}
	\end{equation}
	The presence of $J_i$ in both terms of the right-hand-side can be carried out by applying the Wick-Isserlis theorem. Each $J_i$-derivative would result in two different contributions, and its action on the denominators (involving the partition functions) will again result in the appearance of Boltzmann averages with more replicas. Further, the explicit $x$-dependence of the derivative precisely cancels (since the $J_i$-derivative will produce factors proportional to $\sqrt x$). Indeed, after all the computations and recalling $\langle \cdot \rangle_{t,x}=\mathbb E_{\bb J}\Omega_{t,x}(\cdot)$, we get
	\begin{equation}
		\begin{split}
			\partial_x \langle O(\bbu \sigma)\rangle_{t,x} &=\frac12 \sum_{a,b=1}^s\sum_{i=1}^N  \langle O\bbu\sigma)\sigma_i ^{(a)}\sigma_i ^{(b)}\rangle_{t,x}-\frac s2 \sum_{a=1}^s\sum_{i=1}^N\langle O(\bbu\sigma)\sigma_i ^{(a)} \sigma_i ^{(s+1)}\rangle_{t,x}\\
			&-\frac s2 \sum_{s=1}^{s+1}\sum_{i=1}^N \langle O(\bbu \sigma)\sigma_i^{(a)} \sigma_i ^{(s+1)}\rangle_{t,x}+\frac{s(s+1)}{2}\sum_{i=1}^N \langle O(\bbu \sigma)\sigma_i^{(s+1)}\sigma_i^{(s+2)}\rangle_{t,x}.
		\end{split}
	\end{equation}
	Recalling Def. \ref{Def:Overlap_pSK} and after some rearrangements of the quantities, we get the thesis.
\end{proof}

\begin{Cor}
	For each $l\in \mathbb N$, the following equality holds:
	\begin{equation}
		\label{eq:StreamingQ_pSK}
		\partial_x \langle q_{12}^l\rangle_{t,x}= N\bigl(\langle q_{12}^{l+1} \rangle_{t,x}-4 \langle q^l_{12}q_{23}\rangle_{t,x}+3 \langle q_{12 }^l q_{34} \rangle_{t,x}\bigr).
	\end{equation}
\end{Cor}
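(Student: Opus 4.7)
The plan is to specialize Lemma \ref{Prp:Streaming_pSK} to $s=2$ replicas with the observable $O(\bbu\sigma)=q_{12}^l$, which depends only on the configurations of the first two replicas. The streaming equation then supplies four contributions on the right-hand side, and I would evaluate them one by one using only (i) the self-overlap identity $q_{aa}=1$, (ii) the symmetry $q_{ab}=q_{ba}$, and (iii) the exchangeability of replica labels under the quenched Gibbs state.

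First I would compute the self-interaction sum $\tfrac{N}{2}\sum_{a,b=1}^{2}\langle q_{12}^l q_{ab}\rangle_{t,x}$: the two diagonal terms each yield $\langle q_{12}^l\rangle_{t,x}$ because $q_{aa}=\tfrac{1}{N}\sum_{i}(\sigma_i^{(a)})^{2}=1$, while the two off-diagonal terms ($a=1,b=2$ and $a=2,b=1$) each yield $\langle q_{12}^{l+1}\rangle_{t,x}$, so this contribution totals $N\langle q_{12}^l\rangle_{t,x}+N\langle q_{12}^{l+1}\rangle_{t,x}$. Next, I would handle the coupling to the $(s+1)$-st replica, $-2N\sum_{a=1}^{2}\langle q_{12}^l q_{a,3}\rangle_{t,x}$. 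The key observation is that swapping the replica labels $1\leftrightarrow 2$ leaves the joint quenched expectation invariant (the measure is symmetric under permutations of replica indices), so $\langle q_{12}^l q_{13}\rangle_{t,x}=\langle q_{12}^l q_{23}\rangle_{t,x}$, collapsing the sum to $-4N\langle q_{12}^l q_{23}\rangle_{t,x}$. The remaining two terms of the streaming equation are already in the desired form and contribute $-N\langle q_{12}^l\rangle_{t,x}$ and $3N\langle q_{12}^l q_{34}\rangle_{t,x}$.

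Summing the four contributions, the two occurrences of $\pm N\langle q_{12}^l\rangle_{t,x}$ cancel, and what remains is exactly
\[
N\bigl(\langle q_{12}^{l+1}\rangle_{t,x}-4\langle q_{12}^l q_{23}\rangle_{t,x}+3\langle q_{12}^l q_{34}\rangle_{t,x}\bigr),
\]
which is \eqref{eq:StreamingQ_pSK}. There is essentially no analytical obstacle: the whole derivation is careful bookkeeping of the four summands produced by Lemma \ref{Prp:Streaming_pSK}, and the only structural input beyond algebra is the replica-label exchangeability used to identify $\langle q_{12}^l q_{13}\rangle_{t,x}$ with $\langle q_{12}^l q_{23}\rangle_{t,x}$.
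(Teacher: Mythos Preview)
Your proposal is correct and follows exactly the approach the paper indicates: specialize Lemma~\ref{Prp:Streaming_pSK} to $s=2$ with $O(\bbu\sigma)=q_{12}^l$. The paper's own proof consists of that single sentence, whereas you have additionally written out the term-by-term bookkeeping (including the replica-exchangeability step identifying $\langle q_{12}^l q_{13}\rangle_{t,x}$ with $\langle q_{12}^l q_{23}\rangle_{t,x}$), which is accurate and matches the same computation carried out later in the paper for the DAM case.
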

\begin{proof}
	The proof works simply by putting $O(\bbu\sigma)= q_{12}^l$ (which is a function of two replicas of the system) in Prop. \ref{Prp:Streaming_pSK}.
\end{proof}

In order to proceed, we have now to make some physical assumptions on the model. As standard in spin-glass theory, the simplest requirement is the RS in the thermodynamic limit. In fact, as we are going to show, this makes the PDE approach feasible, due to the fact that we can express non-trivial expectation values of function of the replicas in a very simple form.

\begin{Prp}\label{Prp:degreep_equality_pSK}
	For the interpolated Derrida model (\ref{def:DM}), the following equality holds:
	\begin{equation}
		\langle q_{12}^p\rangle_{t,x}= \left(\frac1N \partial_x + \langle q_{12}\rangle_{t,x}\right)^{p-1}\langle q_{12}\rangle_{t,x}+Q_N^{(p-1)}(t,x),
	\end{equation}
	where $Q_N^{(p-1)}(t,x)$ vanishes in the $N\to\infty$ limit and under the RS assumption.
\end{Prp}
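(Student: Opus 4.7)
The plan is to build the identity inductively by exploiting the streaming equation of the preceding Corollary as a recursion that raises the power of $q_{12}$ by one at each step. Rewriting \eqref{eq:StreamingQ_pSK} in the form
\begin{equation*}
\langle q_{12}^{l+1}\rangle_{t,x} = \frac{1}{N}\partial_x \langle q_{12}^l\rangle_{t,x} + 4 \langle q_{12}^l q_{23}\rangle_{t,x} - 3 \langle q_{12}^l q_{34}\rangle_{t,x},
\end{equation*}
I would then invoke the replica-symmetric assumption, under which the overlap between any two distinct replicas concentrates on the same value $\bar q = \langle q_{12}\rangle_{t,x}$ in the $N\to\infty$ limit. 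This concentration produces factorization identities of the type $\langle q_{12}^l q_{23}\rangle_{t,x} = \langle q_{12}^l\rangle_{t,x}\,\langle q_{12}\rangle_{t,x} + \eta^{(1)}_{l,N}(t,x)$ and $\langle q_{12}^l q_{34}\rangle_{t,x} = \langle q_{12}^l\rangle_{t,x}\,\langle q_{12}\rangle_{t,x} + \eta^{(2)}_{l,N}(t,x)$, with $\eta^{(i)}_{l,N}(t,x)\to 0$ as $N\to\infty$.

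Plugging these factorizations into the rearranged streaming identity and using the cancellation $4-3=1$ on the factorized pieces yields the one-step recursion
\begin{equation*}
\langle q_{12}^{l+1}\rangle_{t,x} = \left(\frac{1}{N}\partial_x + \langle q_{12}\rangle_{t,x}\right)\langle q_{12}^l\rangle_{t,x} + \varepsilon_{l,N}(t,x),
\end{equation*}
where $\varepsilon_{l,N}(t,x) = 4\eta^{(1)}_{l,N} - 3\eta^{(2)}_{l,N}$ vanishes in the thermodynamic limit under RS. Setting $L := \frac{1}{N}\partial_x + \langle q_{12}\rangle_{t,x}$ and iterating this recursion $p-1$ times starting from $\langle q_{12}\rangle_{t,x}$, I would obtain
\begin{equation*}
\langle q_{12}^p\rangle_{t,x} = L^{p-1}\langle q_{12}\rangle_{t,x} + \sum_{k=0}^{p-2} L^{k}\,\varepsilon_{p-1-k,N}(t,x),
\end{equation*}
which matches the claimed identity upon defining $Q_N^{(p-1)}(t,x)$ as the residual sum on the right.

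The main obstacle is the control of $Q_N^{(p-1)}(t,x)$: although each individual $\varepsilon_{l,N}$ is $o(1)$ under RS, the operator $L^k$ acts on them with up to $k$ spatial derivatives, and one must guarantee that these derivatives do not amplify the residuals enough to spoil the vanishing. This is precisely the point where the factor $1/N$ accompanying $\partial_x$ in $L$ plays its role: since the streaming identities keep producing Boltzmann averages of bounded observables whose $x$-derivatives generate extra factors of $N$ (as already visible in Lemma \ref{Prp:Streaming_pSK}), each application of $\partial_x/N$ leaves scaling untouched, and the concentration rate of the overlaps under RS is enough to ensure that $L^k\varepsilon_{p-1-k,N}\to 0$ for every finite $k\le p-2$. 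Checking this quantitatively — essentially, verifying that the overlap fluctuations controlling $\varepsilon_{l,N}$ survive a bounded number of streaming operations with the prefactor $1/N$ — is the only non-routine step, and the rest reduces to the algebraic iteration described above.
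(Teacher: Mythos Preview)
Your proposal is correct and follows essentially the same route as the paper: rearrange the streaming identity \eqref{eq:StreamingQ_pSK} into a one-step recursion, use the RS factorization $\langle q_{12}^l q_{ab}\rangle = \langle q_{12}^l\rangle\langle q_{12}\rangle + (\text{fluctuation})$ with the $4-3=1$ cancellation, and iterate $p-1$ times. The paper's proof is in fact less detailed than yours on the residual term, simply collecting the fluctuation remainders into $Q_N^{(p-1)}$ without discussing how the operator powers $L^k$ act on them; your observation about the $1/N$ prefactor in $L$ balancing the $N$ generated by each $\partial_x$ via the streaming lemma is the right heuristic and goes slightly beyond what the paper spells out.
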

\begin{proof}
 Let us consider the $x$-derivative of $\langle q_{12}\rangle$ and try to rearrange the first contribution:
\begin{equation}
	\begin{split}
		\langle q_{12}^{l}q_{23}\rangle_{t,x}=\langle q_{12}^{l}\Delta({q_{23}})\rangle_{t,x}+\langle q_{23}\rangle_{t,x}\langle q_{12}^l \rangle_{t,x},
	\end{split}
\end{equation}
where $\Delta(q_{ab})= q_{ab}-\langle q_{ab}\rangle_{t,x}$ $\forall a,b$ is the fluctuation of the overlap w.r.t its thermodynamic value. Further
\begin{equation}
	\begin{split}
		\langle q_{12}^{l}q_{23}\rangle_{t,x}&= \langle \Delta(q_{12}^{l})\Delta({q_{23}})\rangle_{t,x}+\langle q_{12}^l \rangle_{t,x}\langle \Delta(q_{23}) \rangle_{t,x}+\langle q_{23}\rangle_{t,x}\langle q_{12}^l \rangle_{t,x}\\&=\langle q_{12}\rangle_{t,x}\langle q_{12}^l \rangle_{t,x}+R^{(1,l)}_{N}(t,x),
	\end{split}
\end{equation}
where, $R^{(1,l)}_N(t,x)$ represents the terms involving the fluctuation functions of the overlap. In the last equality, we also used the fact that $\langle q_{23}\rangle_{t,x}=\langle q_{12}\rangle_{t,x}$ since the average is independent on the replicas labelling. 
The last term in \eqref{eq:StreamingQ_pSK} has a similar expansion:
\begin{equation}
	\begin{split}
		\langle q_{12}^{l}q_{34}\rangle_{t,x}=\langle q_{12}\rangle_{t,x}\langle q_{12}^l \rangle_{t,x}+R^{(2,l)}_{N}(t,x),
	\end{split}
\end{equation}
thus we finally get
\begin{equation}
	\begin{split}
		\langle q_{12}^{l+1}\rangle_{t,x}=\frac1N\partial_x \langle q_{12}^l \rangle_{t,x}+\langle q_{12}\rangle_{t,x}\langle q_{12}^l \rangle_{t,x}+R_{N}^{(l)}(t,x),
	\end{split}
\end{equation}
where $R_{N}^{(l)}(t,x)=4R^{(1,l)}_{N}(t,x)-3R^{(2,l)}_{N}(t,x)$. We can then express higher moments of the overlap in terms of lower ones: 
\begin{equation}
	\begin{split}
		\langle q_{12}^{l+1}\rangle_{t,x}=
		\left(\frac1N\partial_x+ \langle q_{12}\rangle_{t,x}\right) \langle q_{12}^l \rangle_{t,x} +R^{(l)}_{N}(t,x).
	\end{split}
\end{equation}
Iterating this procedure from $l=1$ up to $l=p-1$, we obtain 
\begin{equation}
	\begin{split}
		\label{eq:qp_derrida}
		\langle q_{12}^{p}\rangle_{t,x}=
		\left(\frac1N\partial_x+ \langle q_{12}\rangle_{t,x}\right)^{p-1} \langle q_{12} \rangle_{t,x} +Q^{(p-1)}_{N}(t,x),
	\end{split}
\end{equation}
where $Q^{(p-1)}_{N}(t,x)$ collects all the terms involving $R_N^{(l)}(t,x)$, and thus vanishes in the $N\to\infty$ limit.
\end{proof}
\par\medskip
At this point, we have at our disposal all the ingredients needed for making explicit our approach. 
Using \eqref{eq:qp_derrida} in \eqref{eq:SNt_derrida}, we get
\begin{equation}\label{lim_DtS_1}
	\partial_t S_{N,p}(t,x)=-\frac12\left(\frac1N \partial_x + \langle q_{12}\rangle_{t,x}\right)^{p-1} \langle q_{12}\rangle_{t,x}+R_N(t,x)-\frac12 Q_N^{(p-1)}(t,x).
\end{equation}
Deriving \eqref{lim_DtS_1} with respect to the spatial coordinate $x$, we have
\begin{equation*}
	\partial _t \partial_x S_{N,p} (t,x) =-\frac12\partial_x\left(\frac1N \partial_x + \langle q_{12}\rangle_{t,x}\right)^{p-1} \langle q_{12}\rangle_{t,x}+V_N(t,x),
\end{equation*}
where $V_N(t,x):=-\partial_x(R_N(t,x)-\frac12 Q_N^{(p-1)}(t,x))$, vanishing in the $N\to\infty$ limit. We can then write the following equation 
\begin{equation}\label{BurgerN}
	\partial _t \langle q_{12}\rangle_{t,x}-\frac12\partial_x\left(\frac1N \partial_x + \langle q_{12}\rangle_{t,x}\right)^{p-1} \langle q_{12}\rangle_{t,x}=V_N(t,x).
\end{equation}
On the l.h.s. we recognize a Burgers hierarchy structure, while on the r.h.s. we have a source term (which further vanishes in the thermodynamic limit). The equilibrium dynamics of the Derrida models is then realized taking the limit $N\to\infty$, as summarized in the following
\begin{Thm}
The expectation value of the order parameter for the $p$-spin Derrida model under the RS ansatz is given by the function $\bar q(\beta)= u(\beta^2,0)$, where $u(t,x)$ is the solution of the inviscid limit of the $p$-th element Burgers hierarchy with initial profile \eqref{eq:initialvalue_pSK}, {\it i.e.}
	\begin{equation}\label{eq:BurgersPDC_pSK}
		\begin{cases}
			\partial _t u(t,x) -\frac 12 \partial_x u^p(t,x) =0\\
			u(0,x)=\mathbb E_J \tanh^2 (\sqrt x J)
		\end{cases}.
	\end{equation}
\end{Thm}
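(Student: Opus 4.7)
The plan is to pass to the thermodynamic limit in equation \eqref{BurgerN}, which is already established in the text via Lemma~\ref{Prp:Sderivs_pSK}, Lemma~\ref{Prp:Streaming_pSK}, and Proposition~\ref{Prp:degreep_equality_pSK}, and then to compute the initial profile at $t=0$ directly from the one-body structure of the interpolating Hamiltonian at vanishing $t$.

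First, I would set $u_N(t,x) := \langle q_{12}\rangle_{t,x}$ and rewrite \eqref{BurgerN} as
\begin{equation*}
\partial_t u_N(t,x) - \frac{1}{2}\partial_x\left(\frac{1}{N}\partial_x + u_N(t,x)\right)^{p-1} u_N(t,x) = V_N(t,x).
\end{equation*}
Under the RS ansatz, the fluctuation functions $R^{(1,l)}_N$ and $R^{(2,l)}_N$ appearing in the proof of Proposition~\ref{Prp:degreep_equality_pSK} vanish as $N \to \infty$ (since $\langle \Delta(q_{ab})\Delta(q_{cd})\rangle_{t,x} \to 0$), and the same holds for the subleading contribution $R_N(t,x)$ in Lemma~\ref{Prp:Sderivs_pSK} coming from Remark~\ref{Rmk:sum}. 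Hence $V_N(t,x) \to 0$. Moreover, the viscous term $\frac{1}{N}\partial_x$ is formally suppressed in the limit. Denoting $u(t,x) = \lim_{N\to\infty} u_N(t,x)$, one obtains the inviscid hierarchy
\begin{equation*}
\partial_t u(t,x) - \frac{1}{2}\partial_x u^p(t,x) = 0,
\end{equation*}
which coincides with the PDE in \eqref{eq:BurgersPDC_pSK}.

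Next, I would compute the initial profile. At $t=0$ the interpolating Hamiltonian reduces to the one-body term $-\sqrt{x}\sum_{i=1}^N J_i \sigma_i$, so spins decouple and for each site the Boltzmann average factorizes as $\omega_{0,x}(\sigma_i) = \tanh(\sqrt{x}\, J_i)$. For two replicas, by independence, $\Omega^{(2)}_{0,x}(\sigma_i^{(1)}\sigma_i^{(2)}) = \tanh^2(\sqrt{x}\, J_i)$. Averaging over the Rademacher/Gaussian disorder and over $i$, one gets
\begin{equation*}
u(0,x) = \lim_{N\to\infty} \frac{1}{N}\sum_{i=1}^N \mathbb{E}_{J_i}\tanh^2(\sqrt{x}\, J_i) = \mathbb{E}_J \tanh^2(\sqrt{x}\, J),
\end{equation*}
which matches the initial data in \eqref{eq:BurgersPDC_pSK}. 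Finally, by the Remark stating that the original Derrida thermodynamics is recovered for $t=\beta^2$, $x=0$, one identifies $\bar q(\beta) = u(\beta^2,0)$.

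The main obstacle is the rigorous justification of the inviscid limit: the nonlinear hierarchy may develop shocks (exactly as noted in the warm-up for $p$-spin ferromagnets, where the gradient catastrophe signals the phase transition), so the convergence $u_N \to u$ and $V_N \to 0$ should be understood in a weak/distributional sense until the critical temperature, and the RS assumption is precisely what guarantees a well-defined classical solution $u(t,x)$ away from the shock locus. Modulo these analytic subtleties, which are standard in this framework, the conclusion follows from combining the vanishing of $V_N$, the suppression of the $\frac{1}{N}\partial_x$ viscosity, the direct one-body computation of the initial profile, and the parameter identification $t=\beta^2$, $x=0$.
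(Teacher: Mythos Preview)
Your proposal is correct and follows essentially the same route as the paper: take $N\to\infty$ in \eqref{BurgerN} so that $V_N\to 0$ and the $\tfrac1N\partial_x$ viscosity drops out, then compute the $t=0$ profile from the one-body problem and identify $\bar q(\beta)=u(\beta^2,0)$. The only minor difference is in the initial-profile computation: the paper evaluates $A_{N,p}(0,x)$, forms $S_N(0,x)$, and then differentiates in $x$ (using Wick--Isserlis to turn $\mathbb E_J[J\tanh(\sqrt{x}J)]$ into $\sqrt{x}\,\mathbb E_J[1-\tanh^2(\sqrt{x}J)]$), whereas you go directly through the factorized Boltzmann average $\omega_{0,x}(\sigma_i)=\tanh(\sqrt{x}J_i)$ and square for two replicas; both are valid and yield the same $u(0,x)$. (A tiny slip: the auxiliary $J_i$ are standard Gaussians, not Rademacher.)
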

\begin{proof}
By making the limit of the previous equation \eqref{BurgerN} for $N\to\infty$ and recalling that $V_N(t,x)\rightarrow 0$ for $N\to\infty$ we get 
\begin{equation}\label{pde}
\partial_t u(t,x)-\frac 1 2 \partial_x u^p(t,x)=0,
\end{equation}
where $u(t,x):=\lim_{N\to \infty}\langle q_{12}\rangle_{t,x}$. The initial profile of the Cauchy problem associated to the PDE \eqref{pde} is easily determined, since for $t=0$ the partition function reduces to a $1$-body problem. Thus, we have to compute $u(0,x)=\lim_{N\to\infty} \langle q_{12}\rangle_{0,x}$. To this aim, we start from the partition function evaluated at $t=0$, which is
\begin{equation}
	\begin{split}
		Z_{N,p,\bb J}(0,x)&=\sum_{\bb \sigma}\exp\bigl(\sqrt x \sum_{i=1}^N J_i\sigma_i\bigr)= \prod_{i=1}^N 2\cosh (\sqrt x J_i).
	\end{split}
\end{equation}
Taking the logarithm and averaging over the quenched disorder $\bb J$, we have the intensive pressure:
\begin{equation}
	\begin{split}
		A_{N,p}(0,x)&=\frac1N \qavJ \log \prod_{i=1}^N 2\cosh (\sqrt x J_i)= \log 2+\frac2N \sum_{i=1}^N\qavJ \log \cosh (\sqrt x J_i).
	\end{split}
\end{equation}
Recalling that the Guerra's action is defined as \eqref{eq:GuerraAction_pSK} and that the $J_i$ are i.i.d. so the sum of quenched averages of functions of $J_i$ is $N$ times the average w.r.t. a {\it single} quenched variables $J\sim \mathcal N (0,1)$, we get 
\begin{equation}
	S_N (0,x)=2\log 2+2\mathbb E_J \log\cosh (\sqrt x J)-x.
\end{equation}
Finally, taking the derivative w.r.t. the spatial coordinate, we finally have the initial profile for the overlap expectation value, which reads
\begin{equation}\label{eq:initialvalue_pSK}
	u(0,x)=\lim_{N\to\infty}\langle q_{12}\rangle_{0,x}= \mathbb E_J \tanh^2 (\sqrt x J).
\end{equation}
Here, we used again the Wick-Isserlis theorem for normally distributed random variables. Putting together \eqref{pde} and \eqref{eq:BurgersPDC_pSK}, we get the thesis.
\end{proof}

\begin{Prp}
	The implicit solution of the inviscid Burgers hierarchy \eqref{eq:BurgersPDC_pSK} is the self-consistency equation for the order parameter $\bar q(\beta)$ for the $p$-spin model under the RS ansatz.
\end{Prp}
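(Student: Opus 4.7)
The plan is to exploit the quasilinear nature of the inviscid PDE in \eqref{eq:BurgersPDC_pSK} and to solve it by the method of characteristics, then specialise the resulting implicit relation to the physical point $(t,x)=(\beta^2,0)$ indicated in Def.~\ref{def:DM}.

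Concretely, I would first rewrite the conservative form $\partial_t u - \tfrac12\partial_x u^p = 0$ in the non-conservative guise
\begin{equation*}
\partial_t u - \frac{p}{2}\, u^{p-1}\, \partial_x u = 0,
\end{equation*}
so that the local wave speed $c(u)=-\tfrac{p}{2}u^{p-1}$ becomes manifest. The characteristic ODEs are then $\dot t = 1$, $\dot x = -\tfrac{p}{2}u^{p-1}$ and $\dot u = 0$. Since $u$ is conserved along each characteristic and the characteristics are therefore straight lines $x(t)=x_0-\tfrac{p}{2}u^{p-1}t$, one can immediately write the implicit solution
\begin{equation*}
u(t,x) = u_0\!\left( x + \frac{p}{2}\, u(t,x)^{p-1}\, t\right),
\end{equation*}
with $u_0$ the initial profile fixed in \eqref{eq:initialvalue_pSK}.

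To conclude, I would set $t=\beta^2$, $x=0$ and denote $\bar q(\beta)=u(\beta^2,0)$, obtaining
\begin{equation*}
\bar q(\beta) = u_0\!\left(\frac{p\beta^2}{2}\,\bar q(\beta)^{p-1}\right)= \mathbb E_J\, \tanh^2\!\left( \beta J \sqrt{\tfrac{p}{2}\, \bar q(\beta)^{p-1}} \right),
\end{equation*}
which I recognise as the standard RS self-consistency equation for the $p$-spin Derrida/SK model. The only real caveat is that the characteristic representation is valid in the smooth regime only, {\it i.e.} before any characteristic crossing occurs. Gradient catastrophe and shock formation on the Burgers side correspond precisely to the onset of RS breaking on the statistical-mechanics side, consistent with the RS ansatz already adopted throughout this section; this is the main conceptual point to check, rather than a computational obstacle.
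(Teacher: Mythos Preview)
Your proof is correct and follows essentially the same approach as the paper: both rewrite the conservation law in quasilinear form $\partial_t u - \tfrac{p}{2}u^{p-1}\partial_x u = 0$, extract the implicit solution $u(t,x)=u_0\bigl(x+\tfrac{p}{2}u^{p-1}t\bigr)$ (you via explicit characteristics, the paper by directly quoting the travelling-wave ansatz $u(t,x)=u_0(x-v(t,x)t)$ with $v=-\tfrac{p}{2}u^{p-1}$), and then evaluate at $(t,x)=(\beta^2,0)$ to recover the RS self-consistency equation. Your added remark on the range of validity before shock formation is a welcome clarification but not part of the paper's own argument.
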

\begin{proof}
	Let us rewrite the differential equation \eqref{eq:BurgersPDC_pSK} as
	\begin{equation}
		\partial _t u -\frac p2  u^{p-1}\partial_x u =0.
	\end{equation}
	This is a non-linear wave equation and, as well-known, it admits a solution of the form $u(t,x)= u_0 (x-v(t,x)t)$, where $u_0$ is the initial profile and $v(t,x)$ is the effective velocity. For the case under consideration, we have $v(t,x)=-\frac p2 u^{p-1}(t,x)$, thus
	\begin{equation}
		u(t,x)=\mathbb E_J \tanh^2 \left(\sqrt{x+ t \frac p2 u(t,x)^{p-1}} J\right).
	\end{equation}
	Recalling that $\bar q(\beta)=u(\beta^2,0)$, we finally have
	\begin{equation}\label{eq:pspin_sc}
		\bar q = \mathbb E_J \tanh^2 \left(\beta \sqrt{\tfrac p2 \bar q^{p-1}} J\right),
	\end{equation}
	which is precisely the self-consistency equation for the $p$-spin glass model, as reported also in \cite{agliari2012notes}.
\end{proof}

\begin{Cor}
	The (ergodicity breaking) phase transition of the $p$-spin model coincides with the gradient catastrophe of the Cauchy problem \eqref{eq:BurgersPDC_pSK}, and the critical temperature is determined by the system
	\begin{equation}\label{eq:psk_criticality}
		\begin{cases}
			T_c= \sqrt{ - F'(\bar \xi)}\\
			\bar \xi= \frac{F(\bar\xi)}{F'(\bar \xi)},
		\end{cases}
	\end{equation}
	where $F(\xi)= -\tfrac p2\mathbb E_J \tanh^2 (\sqrt x J)$.
\end{Cor}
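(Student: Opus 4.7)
The plan is to apply the method of characteristics to the inviscid hierarchy \eqref{eq:BurgersPDC_pSK} and to identify the locus of the gradient catastrophe in the $(t,x)$-plane with the physical critical point $(\beta_c^2,0)$. Physically, the breakdown of the single-valued smooth solution is precisely the moment at which the self-consistency equation \eqref{eq:pspin_sc} ceases to be solvable by a single regular branch, which is the algebraic signature of the ergodicity-breaking phase transition. The two equations in \eqref{eq:psk_criticality} will emerge, respectively, from requiring that the characteristic through $(t,x)=(\beta_c^2,0)$ hits the shock and that at that point the $x$-derivative of $u$ blows up.

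Concretely, I would first rewrite the PDE in quasi-linear form $\partial_t u + v(u)\partial_x u = 0$ with $v(u)=-\tfrac p2 u^{p-1}$, as already done in the preceding proposition, so that the implicit solution takes the form $u=u_0(\xi)$ with $u_0(\xi)=\mathbb E_J\tanh^2(\sqrt\xi\, J)$ and $\xi$ defined by the characteristic equation $\xi = x - F(\xi)\,t$, where $F(\xi) := v(u_0(\xi))$ is, up to the understood corrections to the statement, the function appearing in the corollary. Differentiating this implicit relation with respect to $x$ yields $\partial_x \xi = (1+F'(\xi)t)^{-1}$ and hence $\partial_x u = u_0'(\xi)/(1+F'(\xi)t)$, so that the gradient diverges exactly when $1+F'(\xi)t=0$.

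Imposing simultaneously the shock condition $1+F'(\bar\xi)\,t_c=0$ and the characteristic condition at the physical point $x=0$, namely $\bar\xi = -F(\bar\xi)\,t_c$, produces $t_c = -1/F'(\bar\xi)$ and $\bar\xi = F(\bar\xi)/F'(\bar\xi)$. Translating the critical time back into temperature via the identification $t=\beta^2$ gives $T_c = 1/\beta_c = \sqrt{-F'(\bar\xi)}$, which together with the previous equation is exactly the system \eqref{eq:psk_criticality}.

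The one non-routine point is the identification of the first shock with the physical transition: one must verify that the value $\bar\xi$ determined by the system lies in the interior of the range where the map $\xi\mapsto x-F(\xi)t$ is still invertible, and that the critical time $t_c$ is indeed the earliest at which any characteristic emanating from $t=0$ focuses at $x=0$. Given the sign and monotonicity properties of the initial datum $u_0(\xi)=\mathbb E_J\tanh^2(\sqrt\xi\, J)$ (non-negative, monotonically increasing, concave and bounded by one), this is a standard but necessary consistency check; all the remaining manipulations parallel verbatim the ferromagnetic case already worked out in Section 3.
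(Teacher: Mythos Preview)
Your proposal is correct and follows the same route as the paper: the paper's proof consists of a single sentence deferring to the characteristics analysis already carried out for the ferromagnetic case in Section~3 (and in \cite{fachechi2021pde}), and your derivation spells out precisely that computation --- including your correct observation that the stated $F$ should read $F(\xi)=-\tfrac{p}{2}\bigl[\mathbb E_J\tanh^2(\sqrt{\xi}\,J)\bigr]^{p-1}$, i.e.\ $F=v\circ u_0$, in order for the system \eqref{eq:psk_criticality} to be literally consistent with the characteristics argument. Nothing further is needed.
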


\begin{proof}
	The determination of the critical temperature can be achieved with the usual analysis of intersecting characteristics of the Cauchy problem \eqref{eq:BurgersPDC_pSK}, and follows the same lines of \cite{fachechi2021pde}.
\end{proof}

As a comparison, in Fig. \ref{fig:Derrida_solutions}, we reported the solutions of the self-consistency equations \eqref{eq:pspin_sc} for $p=2,\dots,8$ (solid curves), and the critical temperatures as predicted by the system \eqref{eq:psk_criticality} (dashed lines).

\begin{figure}
	\centering
	\includegraphics[width=0.5\textwidth]{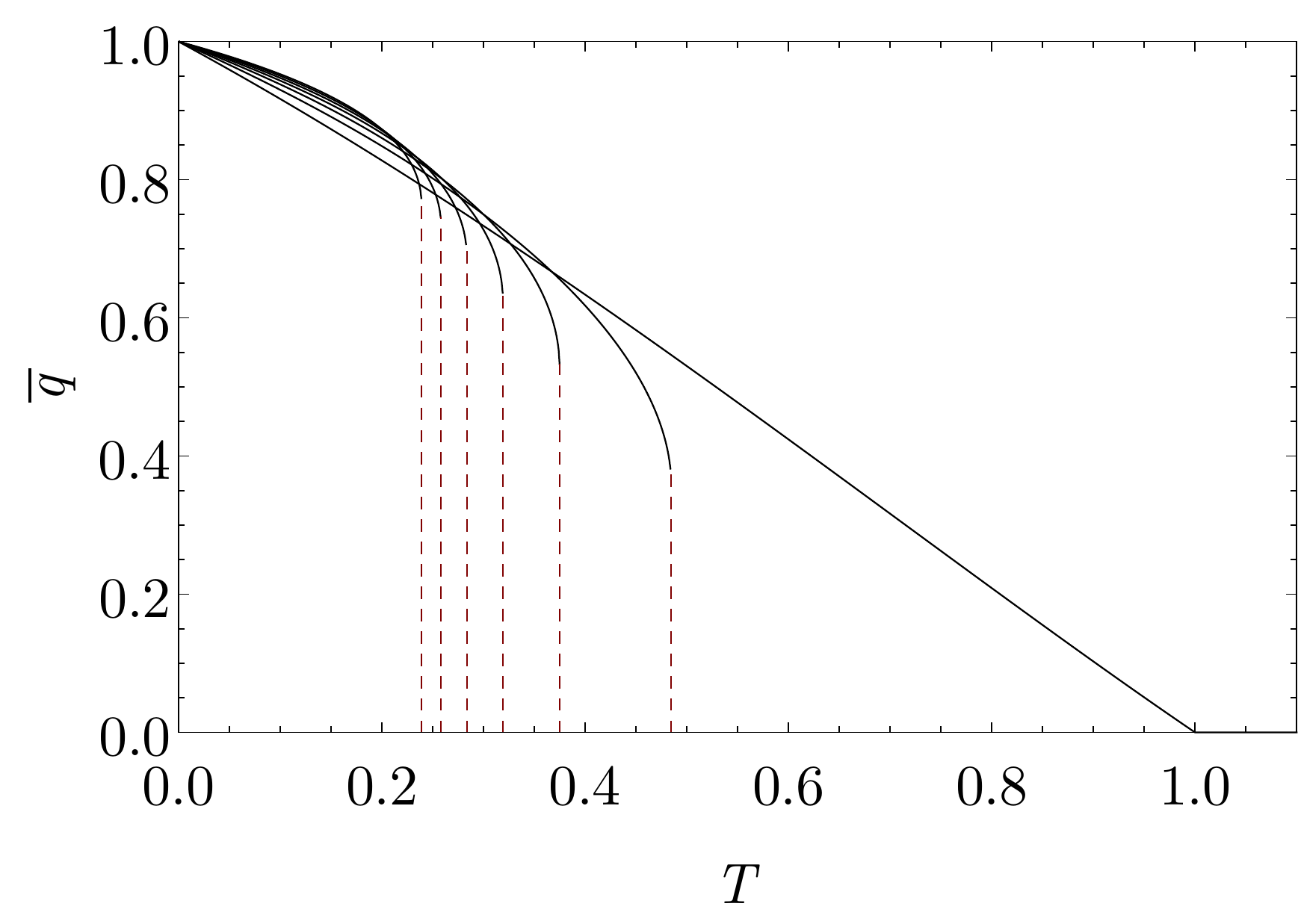}
	\caption{Solutions of the self-consistency equations \eqref{eq:pspin_sc} for $p = 2,\dots,8$ (solid curves), and the critical temperatures as predicted by the system \eqref{eq:psk_criticality} (dashed lines).}\label{fig:Derrida_solutions}
\end{figure}

\section{Application to Dense Associative Memories}
Going beyond the pure spin-glass case, in this Section we will approach the DAMs that are the main focus of this work.

\begin{Def}\label{H_DHN}
Let $\bb\sigma $ be the generic point in the configuration space $\Sigma_N \equiv \{-1,+1\}^N$ of the system. Given $K$ random patterns $\{\bxi^\mu\}_{\mu=1}^K$ each made of $N$ i.i.d. binary entries drawn with equal probability $P(\xi^\mu_i=-1)=P(\xi^\mu_i=1)=\frac 1 2$ $\forall i=1,\dots,N$, the Hamiltonian of the $p$-th order DAM is 
\begin{equation}\label{Hp}
H_{N,p, \bxi,K}(\bsigma):=-\frac{1}{N^{p-1}}\sum_{\mu=1}^K\sum_{i_1,\dots,i_p}^N\xi^\mu_{i_1}\dots\xi^\mu_{i_p}\sigma_{i_1}\dots\sigma_{i_p}.
\end{equation}
\end{Def}
\begin{Rmk}
The normalization factor $\frac{1}{N^{p-1}}$ ensures the linear extensivity of the Hamiltonian, in the volume of the network $N$, i.e. $\lim_{N\to\infty}\left\vert\frac{H_{N,p, \bxi, K}}{N}\right\vert\in(0,+\infty)$.
\end{Rmk}

As already stated in Sec.~\ref{sec:intro}, DAMs with $p$-order interactions are able to store at most a number of patterns $K=\alpha N^{p-1}$, with $\alpha<\alpha_c$ \cite{baldi1987number,bovier2001spin} and, in the following, we will study the model in two different regimes, that is, setting $K$ finite and setting $K$ such that $\alpha$ is finite, corresponding to, respectively, a simple and a complex scenario.

\subsection{Low storage}\label{LS_Hop}
Let us start the analysis  of the network in a low-load regime, storing a finite number of patterns. Again, the goal is to use interpolation techniques and derive PDEs equations able to describe the thermodynamics of the system. To do this, let start by defining 
\begin{Def}\label{m}
The order parameters used to describe the macroscopic behavior of the model are the so-called Mattis magnetizations, defined as
\begin{equation}
m_\mu(\bsigma):=\frac 1 N\sum_{i=1}^N \xi^\mu_i\sigma_i \quad \forall \mu=1,\dots,K
\end{equation} 
measuring the overlap between the network configuration and the stored patterns. 
\end{Def}
\begin{Rmk}
	The Hamilton function \eqref{Hp} in terms of the Mattis magnetizations is 
	$$
	H_{N,p,\bb\xi,K}(\bb\sigma)=-N \sum_{\mu=1}^K\big(m_\mu(\bb \sigma)\big)^p.
	$$
\end{Rmk}
Next, we define the basic objects of our investigations within the interpolating framework.

\begin{Def} \label{def:DAM_p}
	Given $(t,\vx)\in \mR^{K+1}$, the spacetime Guerra's interpolating partition function for the DAM model (in the low-load regime) reads as
	\begin{eqnarray}
		Z_{N,p,\bb\xi,K}(t,\vx)&=&\sum_{\bb\sigma\in\Sigma_N}\exp\big(-H_{N,p,\bb\xi,K}(t,\vx)\big),\label{eq:dense_guerra_1} \\
		H_{N,p,\bb\xi,K}(t,\vx)&=& t N \sum_{\mu =1}^Km_\mu (\bsigma)^p-N\sum_{\mu =1}^K x_\mu m_\mu (\bsigma).
	\end{eqnarray}
\end{Def}
\begin{Rmk}
	Clearly, the spacetime Guerra's interpolating partition function recovers the one related to the DAM by setting $t=-\beta$ and $\vx=\bb 0$.
\end{Rmk}

\begin{Rmk}
We recall that, for the CW model, the Guerra mechanical analogy consists in interpreting the statistical pressure as the Burgers hierarchy describing the motion of viscid non-linear waves in $1 + 1$-dimensional space. In the case of the DAMs, we have $K$ Mattis magnetizations, and the dual mechanical system describes non-linear waves travelling in a $K + 1$-dimensional space. 
\end{Rmk}

\begin{Def}
For each configuration $\bsigma \in \Sigma_N$ of the system, the Boltzmann factor corresponding to the partition function \eqref{eq:dense_guerra_1} is 
\begin{equation}\label{eq:dense_guerra_5}
	B_{N,p,\bb\xi,K}( t,\vx)=\exp\Big(-t N \sum_{\mu =1}^Km_\mu (\bsigma)^p+N\sum_{\mu =1}^K x_\mu m_\mu (\bsigma)\Big).
\end{equation}
\end{Def}
\begin{Prp}
	The first-order spacetime derivatives of the Guerra intensive pressure associated to the partition function \eqref{eq:dense_guerra_1} read as
	\begin{eqnarray}
	\partial_t A_{N,p,\bb\xi,K}(t,\vx)&=&-\sum _{\mu=1}^K\average{t,\vx}{m_\mu (\bsigma)^p},\label{eq:dense_guerra_3}\\
	\partial_{\mu} A_{N,p,\bb\xi,K}(t,\vx)&=&\average{t,\vx}{m_\mu (\bsigma)}.\label{eq:dense_guerra_4}
	\end{eqnarray}
	where $\partial_\mu:=\partial_{x^\mu}$.
\end{Prp}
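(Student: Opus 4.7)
The plan is a direct differentiation of the intensive log-partition function, exploiting the standard thermodynamic identity that relates parameter-derivatives of $\log Z$ to Boltzmann averages of the corresponding parameter-derivatives of the Hamiltonian. Since the patterns $\bxi$ are held fixed at this stage (no quenched average appears in the proposition) and since the interpolating Hamiltonian depends linearly on $t$ and on each $x_\mu$, the computation is elementary and parallels the opening step of the $p$-spin ferromagnetic case recalled in Section 3.

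Concretely, for any parameter $\lambda\in\{t,x_1,\dots,x_K\}$ I would start from
\begin{equation*}
\partial_\lambda A_{N,p,\bxi,K}(t,\vx)
= \frac{1}{N\,Z_{N,p,\bxi,K}(t,\vx)}\sum_{\bsigma\in\Sigma_N}\partial_\lambda B_{N,p,\bxi,K}(t,\vx)
= -\frac{1}{N}\,\average{t,\vx}{\partial_\lambda H_{N,p,\bxi,K}(t,\vx)},
\end{equation*}
where the minus sign comes from $B=\exp(-H)$ and where I used the definition \eqref{B_average} of $\omega_{t,\vx}$. Then I would specialise to $\lambda=t$ and $\lambda=x_\mu$: differentiating $H_{N,p,\bxi,K}(t,\vx)=tN\sum_{\mu}m_\mu(\bsigma)^p-N\sum_{\mu}x_\mu m_\mu(\bsigma)$ with respect to $t$ produces $N\sum_\mu m_\mu(\bsigma)^p$, and with respect to $x_\mu$ produces $-N m_\mu(\bsigma)$. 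Dividing by $N$, inserting the sign coming from $B=\exp(-H)$, and using the linearity of $\omega_{t,\vx}$ yields precisely \eqref{eq:dense_guerra_3} and \eqref{eq:dense_guerra_4}.

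I do not expect a genuine obstacle: no Wick--Isserlis manipulation on the disorder is needed (contrary to the Derrida computation, here $\bxi$ is quenched but not averaged over, so the derivatives act only on the Boltzmann factor), and the Mattis magnetizations $m_\mu(\bsigma)$ are simply bounded observables on $\Sigma_N$. The only book-keeping point to watch is the coupled sign convention $B=\exp(-H)$ together with the explicit sign pattern ``$+tNm_\mu^p-Nx_\mu m_\mu$'' in $H_{N,p,\bxi,K}(t,\vx)$, which produces the single minus sign on the right-hand side of \eqref{eq:dense_guerra_3} and no minus sign on the right-hand side of \eqref{eq:dense_guerra_4}. The only mild ``structural'' difference compared with the $p$-spin ferromagnet of Section 3 is the presence of a sum over $\mu=1,\dots,K$, which is why \eqref{eq:dense_guerra_3} carries a $\sum_\mu$ and why there are $K$ distinct spatial derivatives, one per Mattis magnetization $m_\mu$.
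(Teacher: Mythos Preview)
Your proposal is correct and matches the paper's own proof essentially line for line: the paper also simply differentiates $A_{N,p,\bxi,K}=\frac1N\log Z$ directly, pulling the $t$- and $x_\mu$-derivatives onto the Boltzmann factor $B=\exp(-H)$ and reading off the Boltzmann averages of $-N\sum_\mu m_\mu^p$ and $N m_\mu$, respectively. Your abstract packaging via $\partial_\lambda A=-\frac1N\,\omega_{t,\vx}(\partial_\lambda H)$ is just a one-line compression of that same computation.
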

\begin{proof}
	Recalling the definition of the intensive pressure $ A_{N,p,\bb\xi,K} (t,\vx)= \frac1N \log \sum_{\bb\sigma\in\Sigma_N} 	B_{N,p,\bb\xi,K}( t,\vx)$, along with eq.~\ref{eq:dense_guerra_5}, the proof follows
	straightforward computations. The temporal derivative reads 
	\begin{equation*}
	\begin{split}
	\partial _t A_{N,p,\bb\xi,K}(t,\vx)&=\frac1N  Z^{-1}_{N,p,\bb\xi,K}(t,\vx)\sum_{\bb\sigma\in\Sigma_N}\Big(-N\sum_{\mu=1}^Km_\mu (\bsigma)^p\Big)B_{N,p,\bb\xi,K}( t,\vx)=-\sum _{\mu=1}^K\average{t,\vx}{m_\mu (\bb\sigma)^p}.
	\end{split}
	\end{equation*}
	while the spatial derivative reads
		\begin{equation*}
	\begin{split}
	\partial _{\mu} A_{N,p,\bb\xi,K}(t,\vx)&=\frac1N  Z^{-1}_{N,p,\bb\xi,K}(t,\vx)\sumsigma\Big(N m_\mu (\bsigma)\Big)B_{N,p,\bb\xi,K}( t,\vx)=\average{t,\vx}{m_\mu (\bb\sigma)}.
	\end{split}
	\end{equation*}
\end{proof}

\begin{Prp}
The higher (non-centered) momenta of the Mattis magnetizations are realized as
\begin{equation}
\label{eq:dense_guerra_7}
\average{t,\vx}{m_\mu^{s+1}}=\Big(\frac1N \partial_\mu+\average{t,\vx}{m_\mu}\Big)\average{t,\vx}{m_\mu ^s},
\end{equation}
for each integer $s\ge1$.
\end{Prp}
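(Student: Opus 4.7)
The plan is to prove the identity by direct computation, differentiating the Boltzmann average $\average{t,\vx}{m_\mu^s}$ with respect to $x_\mu$ and recognizing that the resulting connected correlator gives exactly $\average{t,\vx}{m_\mu^{s+1}}$ up to the $\frac1N$ factor.

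First, I would write out the definition $\average{t,\vx}{m_\mu^s}=Z_{N,p,\bb\xi,K}(t,\vx)^{-1}\sum_{\bb\sigma\in\Sigma_N}m_\mu(\bb\sigma)^s B_{N,p,\bb\xi,K}(t,\vx)$ and note from the explicit form of the Boltzmann factor \eqref{eq:dense_guerra_5} that $\partial_\mu B_{N,p,\bb\xi,K}(t,\vx)=N m_\mu(\bb\sigma) B_{N,p,\bb\xi,K}(t,\vx)$, and similarly $\partial_\mu Z_{N,p,\bb\xi,K}(t,\vx)=N\sum_{\bb\sigma\in\Sigma_N}m_\mu(\bb\sigma)B_{N,p,\bb\xi,K}(t,\vx)$ (since the $t$-dependent and the other $x_\nu$-dependent factors in $B$ are independent of $x_\mu$, only the linear-in-$x_\mu$ term survives differentiation).

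Second, applying the quotient rule I would obtain
\begin{equation*}
\partial_\mu \average{t,\vx}{m_\mu^s}=N\,\average{t,\vx}{m_\mu^{s+1}}-N\,\average{t,\vx}{m_\mu^s}\,\average{t,\vx}{m_\mu},
\end{equation*}
which is just the standard statement that $\frac1N\partial_\mu$ generates the connected correlator of $m_\mu$ with the inserted observable. Solving algebraically for $\average{t,\vx}{m_\mu^{s+1}}$ yields
\begin{equation*}
\average{t,\vx}{m_\mu^{s+1}}=\frac{1}{N}\partial_\mu\average{t,\vx}{m_\mu^s}+\average{t,\vx}{m_\mu}\average{t,\vx}{m_\mu^s}=\Big(\frac{1}{N}\partial_\mu+\average{t,\vx}{m_\mu}\Big)\average{t,\vx}{m_\mu^s},
\end{equation*}
which is exactly \eqref{eq:dense_guerra_7}.

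There is really no obstacle here: the identity is a purely formal consequence of the fact that $x_\mu$ enters the Hamiltonian linearly through $-N x_\mu m_\mu(\bb\sigma)$, so $x_\mu$ plays the role of a source (generating function variable) for $m_\mu$. The only minor point worth stating carefully is that the sum over configurations commutes with $\partial_\mu$ (trivially true since $\Sigma_N$ is finite and $x_\mu$-independent), and that $s\ge 1$ is not actually needed for the derivation — the same calculation works for $s=0$ as well, reproducing \eqref{eq:dense_guerra_4}.
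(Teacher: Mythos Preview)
Your proof is correct and follows essentially the same approach as the paper: differentiate the Boltzmann average with respect to the source variable, apply the quotient rule to obtain the connected correlator identity $\partial_\mu \average{t,\vx}{m_\mu^s}=N\bigl(\average{t,\vx}{m_\mu^{s+1}}-\average{t,\vx}{m_\mu^s}\average{t,\vx}{m_\mu}\bigr)$, and rearrange. The only cosmetic difference is that the paper first computes the more general derivative $\partial_\nu \average{t,\vx}{m_\mu^s}$ and then specializes to $\nu=\mu$, whereas you go directly to the diagonal case.
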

\begin{proof}
We start by computing the spatial derivative of the Mattis magnetizations expectation value:
\begin{equation*}
\begin{split}
\partial_\nu \average{t,\vx}{m_\mu^s}&=\partial_\nu\Big( Z^{-1}_{N,p,\bb\xi,K} (t,\vx)\sumsigma m_\mu(\bsigma)^s B_{N,p,\bb\xi,K}(t,\vx) \Big)=\\&=N Z^{-1}_{N,p,\bb\xi,K} (t,\vx)\sumsigma m_\mu(\bsigma)^s m_\nu (\bsigma) B_{N,p,\bb\xi,K}(t,\vx)\\&-NZ^{-1}_{N,p,\bb\xi,K} (t,\vx)\sumsigma m_\mu(\bsigma)^s B_{N,p,\bb\xi,K}(t,\vx)\cdot Z^{-1}_{N,p,\bb\xi,K} (t,\vx)\sumsigmap m_\nu (\bsigma') B_{N,p,\bb\xi,K}(t,\vx)=\\&=
N\average{t,\vx}{m_\mu^s m_\nu}-N\average{t,\vx}{m_\mu^s}\average{t,\vx}{m_\nu}.
\end{split}
\end{equation*}
In particular, for $\nu=\mu$, we have
$$
\partial_\mu \average{t,\vx}{m_\mu^s}=N[\average{t,\vx}{m_\mu^{s+1}}-\average{t,\vx}{m_\mu^s}\average{t,\vx}{m_\mu}].
$$
Expressing the higher order moment in terms of the other quantities, we reach the thesis.
\end{proof}
By calling $u^{(s)}_\mu(t,\vx):=\average{t,\vx}{m_\mu(\bsigma)^s}$, we can express all $u^{(s)}_\mu(t,\vx)$ in terms of $u_\mu^{(1)}(t,\vx):=u_\mu(t,\vx)$ for each $s>1$. Indeed
\begin{equation}
\label{eq:dense_guerra_8}
u_\mu ^{(s+1)}(t,\vx)=\Big(\frac1N \partial_\mu+u_\mu(t,\vx)\Big)u_\mu ^{(s)}(t,\vx)=\Big(\frac1N \partial_\mu+u_\mu(t,\vx)\Big)^s u_\mu(t,\vx).
\end{equation}
To simplify the notation, we define the operator $D_\mu:= \frac1N \partial_\mu+u_\mu(t,\vx)$.

\begin{Thm}
The expectation value of the Mattis magnetizations of the interpolated DAM model (\ref{def:DAM_p}) satisfies the non-linear evolutive equations
\begin{equation}
\label{eq:dense_guerra_9}
\partial_t u_\mu (t,\vx)=-\sum_{\nu=1}^K \partial_\mu D_\nu ^{p-1}u_\nu(t,\vx).
\end{equation}
\end{Thm}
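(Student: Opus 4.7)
The plan is to derive the evolution equation by exploiting the equality of mixed partial derivatives of the Guerra intensive pressure. The two ingredients are already in place from the previous propositions: the spacetime gradients
\begin{equation*}
\partial_t A_{N,p,\bb\xi,K}(t,\vx) = -\sum_{\nu=1}^K \omega_{t,\vx}(m_\nu(\bb\sigma)^p), \qquad \partial_\mu A_{N,p,\bb\xi,K}(t,\vx) = \omega_{t,\vx}(m_\mu(\bb\sigma)) = u_\mu(t,\vx),
\end{equation*}
together with the moment-raising identity $u_\nu^{(s+1)}(t,\vx) = D_\nu^s u_\nu(t,\vx)$, which in particular yields $\omega_{t,\vx}(m_\nu^p) = u_\nu^{(p)}(t,\vx) = D_\nu^{p-1} u_\nu(t,\vx)$.

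First I would invoke Schwarz's theorem (the intensive pressure is $C^\infty$ in $(t,\vx)$, being the logarithm of a finite positive sum of exponentials with smooth exponents) to commute the mixed partials:
\begin{equation*}
\partial_t u_\mu(t,\vx) = \partial_t \partial_\mu A_{N,p,\bb\xi,K}(t,\vx) = \partial_\mu \partial_t A_{N,p,\bb\xi,K}(t,\vx).
\end{equation*}
Substituting the temporal derivative of the intensive pressure gives
\begin{equation*}
\partial_t u_\mu(t,\vx) = -\partial_\mu \sum_{\nu=1}^K \omega_{t,\vx}(m_\nu(\bb\sigma)^p).
\end{equation*}

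Finally, replacing each $\omega_{t,\vx}(m_\nu^p)$ by $D_\nu^{p-1} u_\nu(t,\vx)$ through the moment identity, and interchanging $\partial_\mu$ with the finite sum over $\nu$, I obtain the claimed hierarchy
\begin{equation*}
\partial_t u_\mu(t,\vx) = -\sum_{\nu=1}^K \partial_\mu D_\nu^{p-1} u_\nu(t,\vx).
\end{equation*}

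There is essentially no obstacle: the argument is a clean application of Clairaut's theorem followed by substitution of already-proved identities. The only point worth double-checking is that $\partial_\mu$ commutes with the operator $D_\nu^{p-1}$ when $\mu \ne \nu$ (straightforward, since $D_\nu$ is built from $\partial_\nu$ and the multiplication operator $u_\nu$, both commuting with $\partial_\mu$ modulo the obvious action on $u_\nu$), and that one must \emph{not} pull $\partial_\mu$ inside $D_\nu^{p-1}$ when $\mu = \nu$ — this asymmetry is precisely what gives the Burgers-like nonlinearity and matches the structure found in Sections 3 and 4.
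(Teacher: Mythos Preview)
Your proof is correct and follows essentially the same route as the paper: apply the moment identity $\omega_{t,\vx}(m_\nu^p)=D_\nu^{p-1}u_\nu$ to rewrite $\partial_t A_{N,p,\bb\xi,K}$, then differentiate in $x_\mu$ and use $\partial_\mu\partial_t A=\partial_t\partial_\mu A=\partial_t u_\mu$. Your closing remark about commuting $\partial_\mu$ with $D_\nu^{p-1}$ is superfluous here, since the theorem as stated keeps $\partial_\mu$ on the outside and no commutation is required.
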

\begin{proof}
First, we put $s=p-1$ in \eqref{eq:dense_guerra_8}, so that
$$
u_\nu ^{(p)}(t,\vx)=D_\nu^{p-1} u_\nu(t,\vx).
$$
Now, recall that $u_\nu ^{(p)}(t,\vx)=\average{t,\vx}{m_\nu(\bsigma)^p}$ and $\partial_t A_{N,p,\bb\xi,K}(t,\vx)=-\sum _{\mu=1}^K\average{t,\vx}{m_\mu (\bsigma)^p}$, thus
$$
\partial_t A_{N,p,\bb\xi,K}(t,\vx)=-\sum_{\nu=1}^K u_\nu ^{(p)}(t,\vx)=-\sum_{\nu=1}^KD_\nu^{p-1} u_\nu(t,\vx).
$$
Taking the derivative $\partial_\mu$, commuting $\partial_t$ and $\partial _\mu$ and recalling that $\partial_\mu A_{N,p,\bb\xi,K}(t,\vx)=\average{t,\vx}{m_\mu (\bsigma)}=u_\mu(t,\vx)$, we directly reach the thesis.
\end{proof}

\begin{Lem}
The evolutive equations \eqref{eq:dense_guerra_9} can be linearized by means of the Cole-Hopf transform.
\end{Lem}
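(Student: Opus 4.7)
The plan is to identify the Cole--Hopf transform in this multi-field setting with (essentially) the interpolating partition function itself: set
\begin{equation*}
\phi(t,\vx):=Z_{N,p,\bb\xi,K}(t,\vx),
\end{equation*}
so that, by the earlier identity $u_\mu(t,\vx)=\partial_\mu A_{N,p,\bb\xi,K}(t,\vx)$ combined with $A_{N,p,\bb\xi,K}=\tfrac1N\log\phi$, one automatically has
\begin{equation*}
u_\mu(t,\vx)=\frac{1}{N}\frac{\partial_\mu\phi}{\phi}.
\end{equation*}
The goal is then to rewrite the evolutive equation \eqref{eq:dense_guerra_9} as a linear PDE for $\phi$ alone.

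The key technical lemma is the operator identity
\begin{equation*}
D_\nu f=\frac{1}{N\phi}\,\partial_\nu(\phi f),
\end{equation*}
which follows from a one-line computation: expand $D_\nu=\tfrac{1}{N}\partial_\nu+u_\nu$, insert the Cole--Hopf expression for $u_\nu$, and recognize the Leibniz rule. A straightforward induction on $k\ge1$ then yields
\begin{equation*}
D_\nu^k f=\frac{1}{N^k\phi}\,\partial_\nu^k(\phi f),
\end{equation*}
since each extra application of $D_\nu$ is just conjugation of $\tfrac{1}{N}\partial_\nu$ by multiplication by $\phi$. Specializing to $k=p-1$ and $f=u_\nu$, and combining with \eqref{eq:dense_guerra_8}, one gets
\begin{equation*}
\average{t,\vx}{m_\nu(\bsigma)^p}=u_\nu^{(p)}(t,\vx)=D_\nu^{p-1}u_\nu=\frac{\partial_\nu^p\phi}{N^p\,\phi}.
\end{equation*}

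Finally, I substitute into \eqref{eq:dense_guerra_3}: the left-hand side reads $\partial_t A_{N,p,\bb\xi,K}=\tfrac{1}{N}\partial_t\log\phi$, while the right-hand side becomes $-\sum_\nu\partial_\nu^p\phi/(N^p\phi)$, giving
\begin{equation*}
\frac{1}{N}\frac{\partial_t\phi}{\phi}=-\sum_{\nu=1}^K\frac{\partial_\nu^p\phi}{N^p\,\phi}.
\end{equation*}
Multiplying by $N\phi$ produces the linear evolution equation
\begin{equation*}
\partial_t\phi(t,\vx)+\frac{1}{N^{p-1}}\sum_{\nu=1}^K\partial_\nu^p\phi(t,\vx)=0,
\end{equation*}
a generalized (backward) heat-type equation of order $p$ in each of the $K$ spatial directions. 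Differentiating $\tfrac{1}{N}\partial_\mu\log\phi$ with respect to $t$ and using the operator identity recovers \eqref{eq:dense_guerra_9}, completing the linearization. The main obstacle is purely bookkeeping: verifying the induction for $D_\nu^k$ and tracking the powers of $N$ so that the Cole--Hopf substitution yields a clean diagonal sum over $\nu$, with no cross-field interaction terms; this works here precisely because each operator $D_\nu$ involves only $\partial_\nu$ and the matching field $u_\nu$.
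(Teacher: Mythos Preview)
Your proof is correct and follows essentially the same route as the paper: both rest on the conjugation identity $D_\nu^{k}f=\tfrac{1}{N^{k}\phi}\partial_\nu^{k}(\phi f)$ (the paper states the special case $f=\partial_\nu\Psi/\Psi$) and arrive at the same linear equation $\partial_t\phi+\tfrac{1}{N^{p-1}}\sum_\nu\partial_\nu^{p}\phi=0$. Your version is slightly more direct, since you identify $\phi=Z_{N,p,\bb\xi,K}$ from the outset and substitute into \eqref{eq:dense_guerra_3} rather than into \eqref{eq:dense_guerra_9}, thereby avoiding the paper's intermediate step of setting the $\partial_\mu$-primitive to zero; the paper recovers this identification only in the subsequent Remark.
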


\begin{proof}
To proof our assertion, we use the basic identities
\begin{eqnarray*}
\Big(\partial_\mu +\frac{\partial_\mu \Psi}{\Psi}\Big)^s\frac{\partial_\mu \Psi}{\Psi}&=& \frac{\partial_\mu ^{s+1}\Psi}{\Psi},\\
\partial _t \frac{\partial_\mu \Psi}{\Psi}&=&\partial_\mu\frac{\partial_t \Psi}{\Psi}.
\end{eqnarray*}
Performing the multi-dimensional Cole-Hopf transform $u_\mu(t,\vx)=\frac1N \partial_\mu (\log \Psi)$, we have
\begin{equation*}
	\begin{split}
	\frac1N\partial_t \frac{\partial_\mu \Psi}{\Psi}=-\frac1{N^p}\sum_{\nu=1}^K\partial_\mu \Big(\partial_\mu +\frac{\partial_\mu \Psi}{\Psi}\Big)^{p-1}\frac{\partial_\mu \Psi}{\Psi},
	\end{split}
\end{equation*}
and using the previous properties we have
\begin{equation*}
\partial_\mu\Big( \frac{\partial_t \Psi}{\Psi}+\sum_{\nu=1}^K \frac{\partial_\mu^p \Psi}{\Psi}\Big)=0,
\end{equation*}
Setting the argument of the spatial derivative to zero and assuming $\Psi\neq0$, we have
$$
\partial_t \Psi+\frac1{N^{p-1}} \sum_{\nu=1}^K \partial_\nu^p \Psi=0.
$$
\end{proof}

\begin{Rmk}
	In the proof, the function $\Psi$ is nothing but Guerra's interpolating partition function, as can be understood by comparing the definitions $u_\mu(t,\vx)=\frac1N \partial_\mu\log \Psi(t,\vx)$ and $u_\mu(t,\vx) = \partial_\mu A_{N,p,\bb\xi,K}(t,\vx)$. Indeed, by computing the derivatives of the partition function we easily get
\begin{eqnarray*}
	\partial_t Z_{N,p,\bb\xi,K} (t,\vx)&=& -N\sum_{\mu=1}^K  \sumsigma  m_\mu^p (\bsigma) B_{N,p,\bb\xi,K}(t,\vx),\\
	\partial_\mu^p Z_{N,p,\bb\xi,K} (t,\vx)&=& N^p \sumsigma m_\mu^p (\bsigma) B_{N,p,\bb\xi,K}(t,\vx).
\end{eqnarray*}
A direct comparison shows that Guerra's interpolating partition function satisfies the same differential equation of the $\Psi$ potential.
\end{Rmk}

\begin{Rmk}
The case $K=1$ corresponds to the $p$-spin CW model treated in \cite{fachechi2021pde}. Indeed, the partition function of the system can be handled as
\begin{equation*}
	\begin{split}
	Z_{N,p,\bb\xi,K=1} (t,\vx)&=\sumsigma\exp\Big(-t N (\tfrac1N \sum_i \xi^1_i\sigma_i)^p+N x (\tfrac1N \sum_i \xi^1_i\sigma_i) \Big)=\\
	&=\sumsigma\exp\Big(-t N (\tfrac1N \sum_i\sigma_i)^p+N x (\tfrac1N \sum_i \sigma_i) \Big)=Z_{N,p} ^{(CW)}(t,\vx),
	\end{split}
\end{equation*}
where we used the invariance of the partition function under the transformation $\sigma_i \to \xi^1_i \sigma_i$. 
In this particular case we recover the Burgers hierarchy with viscosity parameter $1/N$: calling $x_1=x$ and $u_1(t,\vx)=u(t,\vx)$, the family \eqref{eq:dense_guerra_9} reduces to
$$
\partial_t u+\partial_x \Big(\frac1N\partial_x+u\Big)^{p-1} u=0.
$$
\end{Rmk}
Within this framework, we generate multi-dimensional generalization of Burgers hierarchy, see Appendix \ref{BB:particular cases} for further details and examples.

\subsection{High-storage}
Here we will study the $p$-spin DAMs in the high-load regime $\lim_{N\to \infty}\frac{K}{N^{p-1}}=\alpha>0$ for even $p$, which now behaves as a complex system with global non-trivial properties. Let us start by observing that, in this case, the partition function related to the Hamiltonian \eqref{Hp} can also be written in the following form\footnote{Notice the little abuse of notation in the expression $Z_{N,p,\bb\xi,\alpha}(\beta)$: in the subscript, $\alpha$ ismeant as the ratio $K/N^{p-1}$ by-passing the thermodynamic limit}:
\begin{equation}\label{Z_DHN}
\begin{split}
Z_{N,p,\bb\xi,\alpha}(\beta)&:=\sum_{\bsigma\in\Sigma_N}\exp{\left(-\beta H_{N,p,\bb\xi,\alpha}(\bsigma)\right)}=\\&=\sum_{\bsigma\in\Sigma_N}\exp{\Bigg[\frac{\beta}{p!N^{p-1}}\sum_{\mu=1}^K\sum_{i_1,\dots,i_p}^N\xi^\mu_{i_1}\dots\xi^\mu_{i_p}\sigma_{i_1}\dots\sigma_{i_p}
\Bigg]}=\\&=\sum_{\bsigma\in\Sigma_N}\exp{\Bigg[\frac{\beta}{p!N^{p-1}}\sum_{\mu=1}^K\Big(\sum_{i_1,\dots,i_{p/2}}^N\xi^\mu_{i_1}\dots\xi^\mu_{i_{p/2}}\sigma_{i_1}\dots\sigma_{i_{p/2}}\Big)^2\Bigg]},
\end{split}
\end{equation}
and, by Hubbard-Stratonovich transforming, we get
\begin{equation}\label{Z_DHN_g}
\begin{split}
Z_{N,p,\bb\xi,\alpha}(\beta)=\sum_{\bsigma\in\Sigma_N} \prod_{\mu=1}^K \int d\tau_\mu\frac{e^{-\frac{ \tau_\mu^2}{2}}}{\sqrt{2\pi}}\exp{\Bigg(\sqrt{\frac{2\beta}{p!N^{p-1}}}\sum_{i_1,\dots,i_{p/2}}^N\xi^\mu_{i_1}\dots \xi^\mu_{i_{p/2}}\sigma_{i_1}\dots\sigma_{i_{p/2}}\tau_\mu
\Bigg)}.
\end{split}
\end{equation}
Here, we used the index $\alpha$ rather than $K$ in order to distinguish between the partition function of low and high storage regimes.
\begin{Rmk}
	As standard in statistical mechanics of (complex) neural networks, we will assume that a single pattern is candidate to be retrieved, say $\bb\xi^1$. Under this assumption, we can treat separately the Mattis magnetization $m:=m_1$ corresponding to the recalled pattern from those associated to non-retrieved ones.
\end{Rmk}

\begin{Def}
	Given $(t,\vx)\in \mR^{4}$, the spacetime Guerra's interpolating partition function for the DAM model (in the high storage regime) reads as
	\begin{eqnarray}
		Z_{N,p,\bb\xi,\alpha}(t,\vx)&=&\sum_{\bsigma\in\Sigma_N} \prod_{\mu=1}^K \int d\tau_\mu\frac{e^{-\frac{ \tau_\mu^2}{2}}}{\sqrt{2\pi}}\exp\big(-H_{N,p,\bb\xi,\alpha}(t,\vx)\big),\\
		H_{N,p,\bb\xi,\alpha}(t,\vx)&=&-\frac{tN}{2}m^p-\sqrt{\frac{t}{N^{p-1}}}\sum_{\mu=2}^K\sum_{i_1,\dots,i_{p/2}}^N\xi_{i_1}^\mu\dots\xi^\mu_{i_{p/2}}\sigma_{i_1}\dots\sigma_{i_{p/2}}\tau_\mu\\ &-&\sqrt x \sum_{i=1}^N \eta_i \sigma_i -\sqrt{N^{1-p/2}y} \sum_{\mu=1}^K \theta_\mu \tau_\mu -\frac{N^{1-p/2}}{2}(t-t_0+y)\sum_{\mu=1}^K \tau_\mu^2 -\frac N 2  z m,\notag
	\end{eqnarray}
where $\vx =(x,y,z)$.
\end{Def}


\begin{Rmk}
Clearly, the partition function of the original DAM model \eqref{Z_DHN_g} is recovered by setting $t_0=\frac{2\beta}{p!}$ and $(t,x,y,z) =(\frac{2\beta}{p!},0,0,0)$.
\end{Rmk}
In the case under consideration, the Boltzmann average w.r.t. the interpolating measure is
\begin{equation}
\omega_{t,\bb x}(O)=\frac{\sum_{\bsigma}\int d\mu(\btau)\,O (\bb\sigma,\bb\tau) B_{N,p,\bb\xi,\alpha}(t,\vx)}{Z_{N,p,\bb\xi,\alpha}(t,\bb x)},
\end{equation}
where $O(\bb\sigma,\bb\tau)$ is a generic observables in the configuration space of the system, and $B_{N,p}$ is the generalized Boltzmann factor for the $p$-spin model defined as 
\begin{equation*}
\begin{split}
B_{N,p,\bb\xi,\alpha}(t,\vx):&=\exp\biggl(\frac{tN}{2}m^p+\sqrt{\frac{t}{N^{p-1}}}\sum_{\mu=2}^K\sum_{i_1,\dots,i_{p/2}}^N\xi_{i_1}^\mu\dots\xi^\mu_{i_{p/2}}\sigma_{i_1}\dots\sigma_{i_{p/2}}\tau_\mu\\ &+\sqrt x \sum_{i=1}^N \eta_i \sigma_i +\sqrt{N^{1-p/2}y} \sum_{\mu=1}^K \theta_\mu \tau_\mu -\frac{N^{1-p/2}}{2}(t-t_0+y)\sum_{\mu=1}^K \tau_\mu^2 +\frac N 2  z m\biggl).
\end{split}
\end{equation*}
%
As we said in Sec.~\ref{sec:intro}, the high storage regime of associative neural networks exhibits both ferromagnetic and spin-glass features. Thus, besides the usual Mattis magnetisations, we need the overlap for the two sets of relevant variables in the integral formulation of the partition function \eqref{Z_DHN_g}:
\begin{Def}\label{order_para}
The order parameters used to describe the macroscopic behavior of the model are the overlap $m$ (already defined in \eqref{m} and used to quantify the retrieval capability of the network), the replica overlap in the $\bsigma$ variables 
\begin{equation}
	\label{overlap_q}
	q_{12}:=\frac 1 N\sum_{i=1}^N\sigma_i^{(1)}\sigma_i^{(2)},
\end{equation}
and the replica overlap in the $\btau$'s variables
\begin{equation}
	\label{overlap_p}
	p_{12}:=\frac 1 {N^{p/2}}\sum_{\mu=1}^K\tau_\mu^{(1)}\tau_\mu^{(2)}.
\end{equation}
\end{Def}

With all these ingredients at our hand, we now move to the formulation of the PDE duality of DAMs in the high-storage limit.
\begin{Def}
For all even $p\ge 2$, Guerra's action functional is defined as
\begin{equation}\label{S}
S_{N,p,\alpha}(t,\vx):=2A_{N,p,\alpha}(t,\vx)-x.
\end{equation}
\end{Def}

\begin{Lem}\label{derSN}
The partial derivatives of Guerra's action $S_{N,p}(t,\vx)$ can be expressed in terms of the generalized expectations of the order parameters as
\begin{equation}\label{dA_DHN}
\begin{split}
\partial_t S_{N,p,\alpha}&= \langle m^p\rangle_{t,\vx}- \langle p_{12}q^{p/2}_{12}\rangle_{t,\vx},\\
\partial_x S_{N,p,\alpha}&=- \langle q_{12}\rangle_{t,\vx},\\
\partial_y S_{N,p,\alpha}&=- \langle p_{12}\rangle_{t,\vx},\\
\partial_z S_{N,p,\alpha}&=\langle m\rangle_{t,\vx}.
\end{split}
\end{equation}
\end{Lem}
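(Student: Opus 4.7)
The plan is to reduce each partial derivative of $S_{N,p,\alpha}=2A_{N,p,\alpha}-x$ to one of the intensive pressure $A_{N,p,\alpha}=\frac1N\E_{\bb\xi}\log Z_{N,p,\bb\xi,\alpha}$. Since the Gaussian measure $d\mu(\btau)$ does not depend on $(t,\vx)$, one has
\begin{equation*}
\partial_\bullet A_{N,p,\alpha}(t,\vx)=\frac1N\,\E_{\bb\xi}\,\omega_{t,\vx}\bigl(\partial_\bullet(-H_{N,p,\bb\xi,\alpha})\bigr),
\end{equation*}
so the strategy is to expand each $\partial_\bullet(-H)$ term by term and evaluate the resulting $\omega$-expectations by Gaussian integration by parts (IBP), applied either to the i.i.d.\ $\mathcal N(0,1)$ auxiliary quenched variables $\eta_i$ and $\theta_\mu$, or to the $\tau_\mu$'s against their Gaussian prior, together with the replica identity $\omega(f)\omega(g)=\Omega_{t,\vx}^{(2)}(f^{(1)}g^{(2)})$.

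First I would dispatch the three simpler derivatives. Directly, $-\partial_z H=\tfrac{N}{2}m$ gives $\partial_z S=\langle m\rangle_{t,\vx}$. For $\partial_x$, $-\partial_x H=\tfrac{1}{2\sqrt x}\sum_i\eta_i\sigma_i$; IBP on each $\eta_i$ yields $\E[\eta_i\omega(\sigma_i)]=\sqrt x\,\E[1-\omega(\sigma_i)^2]$, and the replica rewriting $\sum_i\omega(\sigma_i)^2=N\,\Omega_{t,\vx}^{(2)}(q_{12})$ gives $\partial_x A=\tfrac12(1-\langle q_{12}\rangle_{t,\vx})$, whence $\partial_x S=-\langle q_{12}\rangle_{t,\vx}$. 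For $\partial_y$, $\partial_y(-H)=\tfrac12\sqrt{N^{1-p/2}/y}\sum_\mu\theta_\mu\tau_\mu-\tfrac{N^{1-p/2}}{2}\sum_\mu\tau_\mu^2$; IBP on $\theta_\mu$ converts the first piece into $\sqrt{N^{1-p/2}y}(\omega(\tau_\mu^2)-\omega(\tau_\mu)^2)$, the $\omega(\tau_\mu^2)$ contributions cancel exactly against the second piece, and what remains is $-\tfrac{N^{1-p/2}}{2}\sum_\mu\E[\omega(\tau_\mu)^2]=-\tfrac12\langle p_{12}\rangle_{t,\vx}$ after rewriting with two replicas.

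The $t$-derivative is the substantive step. Splitting $\partial_t(-H)=\tfrac{N}{2}m^p+\tfrac{\sqrt N}{2\sqrt t}\sum_{\mu=2}^K m_\mu^{p/2}\tau_\mu-\tfrac{N^{1-p/2}}{2}\sum_\mu\tau_\mu^2$, the first piece yields $\tfrac12\langle m^p\rangle_{t,\vx}$ immediately. For the spin-$\tau$ coupling I would apply IBP on $\tau_\mu$ using
\begin{equation*}
\partial_{\tau_\mu}(-H)=\sqrt{tN}\,m_\mu^{p/2}+\sqrt{N^{1-p/2}y}\,\theta_\mu-N^{1-p/2}(t-t_0+y)\,\tau_\mu\qquad(\mu\geq 2),
\end{equation*}
producing an identity for $\omega(m_\mu^{p/2}\tau_\mu)$ whose $\theta_\mu$-piece is then reduced by a second IBP on $\theta_\mu$, generating the cross-replica combination $\omega(m_\mu^{p/2}\tau_\mu)-\omega(m_\mu^{p/2})\omega(\tau_\mu)$. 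Collecting all contributions, the ``diagonal'' $\sqrt{tN}\,\omega(m_\mu^p)$ pieces combine with the residual $\omega(\tau_\mu^2)$ terms so that only the genuinely two-replica object $\sum_\mu\tau_\mu^{(1)}\tau_\mu^{(2)}(m_\mu^{(1)}m_\mu^{(2)})^{p/2}$ survives; after normalization by $N^{p/2}$ and $\bb\xi$-averaging this coalesces into $\langle p_{12}q_{12}^{p/2}\rangle_{t,\vx}$, yielding $\partial_t A=\tfrac12\langle m^p\rangle_{t,\vx}-\tfrac12\langle p_{12}q_{12}^{p/2}\rangle_{t,\vx}$ and hence the claimed $\partial_t S$.

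The main obstacle will be coordinating the two IBP identities (on $\tau_\mu$ and $\theta_\mu$), which are mutually recursive through the $-N^{1-p/2}(t-t_0+y)\tau_\mu$ back-reaction and hence generate self-referential denominators $1+N^{1-p/2}(t-t_0+y)$, together with the combinatorial bookkeeping needed to show that the $\bb\xi$-averaged sums $\sum_\mu(m_\mu^{(1)}m_\mu^{(2)})^{p/2}$ over non-condensed patterns concentrate on the leading $q_{12}^{p/2}$ power. These corrections are expected to contribute only $o(1)$ remainders in $N$, analogous to the $R_N(t,x)$ term appearing in Lemma \ref{Prp:Sderivs_pSK} for the Derrida case; modulo such terms the identities hold as stated.
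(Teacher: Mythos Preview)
Your treatment of the $x$, $y$, $z$ derivatives is correct and coincides with what the paper would do (the paper only spells out $\partial_t$ in detail, but the others follow by the same Wick--Isserlis mechanism on $\eta_i$ and $\theta_\mu$ that you describe).

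For $\partial_t$, however, you take a genuinely different route from the paper, and your sketch does not close. The paper's argument is one step: it invokes \emph{universality of the quenched noise} to treat the whole product $\xi^\mu_{i_1}\cdots\xi^\mu_{i_{p/2}}$ as an effective $\mathcal N(0,1)$ variable for the non-retrieved patterns, and then applies Wick--Isserlis directly to that product. This yields
\[
\E\bigl[\xi^\mu_{i_1}\cdots\xi^\mu_{i_{p/2}}\,\omega_{t,\vx}(\sigma_{i_1}\cdots\sigma_{i_{p/2}}\tau_\mu)\bigr]
=\sqrt{\tfrac{t}{N^{p-1}}}\,\E\bigl[\omega_{t,\vx}(\tau_\mu^2)-\omega_{t,\vx}(\sigma_{i_1}\cdots\sigma_{i_{p/2}}\tau_\mu)^2\bigr],
\]
so that the $\omega(\tau_\mu^2)$ piece cancels exactly against $-\tfrac12\langle p_{11}\rangle$, and the squared term, after summing over $i_1,\dots,i_{p/2}$, becomes $(Nq_{12})^{p/2}\tau_\mu^{(1)}\tau_\mu^{(2)}$ \emph{without any further concentration argument}. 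The $q_{12}^{p/2}$ structure appears because the IBP is on the spin-carrying variable.

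Your scheme integrates by parts on $\tau_\mu$ (an annealed variable) and on $\theta_\mu$ instead. The $\tau_\mu$-IBP only produces \emph{single-replica} moments like $\omega(m_\mu^p)$ and feeds back into itself through the $-N^{1-p/2}(t-t_0+y)\tau_\mu$ term; the $\theta_\mu$-IBP then generates the cross-replica combination $\omega(m_\mu^{p/2})\omega(\tau_\mu)=\Omega^{(2)}\bigl((m_\mu^{(1)})^{p/2}\tau_\mu^{(2)}\bigr)$, \emph{not} the object $\Omega^{(2)}\bigl((m_\mu^{(1)}m_\mu^{(2)})^{p/2}\tau_\mu^{(1)}\tau_\mu^{(2)}\bigr)$ that you assert. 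Neither IBP ever touches the $\sigma$--$\sigma$ correlations, so $q_{12}$ cannot materialise from these moves alone; you still need a $\bb\xi$-step to convert $m_\mu$-products into $q_{12}$-powers, which is exactly the universality argument you were trying to bypass. In short, the paper's single $\bb\xi$-IBP replaces your entire recursive $\tau/\theta$ chain and delivers $\langle p_{12}q_{12}^{p/2}\rangle$ directly; your route, as written, produces the wrong two-replica object and defers rather than avoids the key step.
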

The computation of the spacetime derivatives is pretty lengthy but straightforward. We report the computation of the derivatives in Appendix \ref{AA:ProofLem_derSN}.
\par\medskip
In order to derive differential identities for the expectation values of the order parameters, we need to compute the spatial derivatives of a generic function of two replicas $O(\bsigma^{(1)},\bsigma^{(2)},\btau^{(1)},\btau^{(2)})$.
\begin{Prp}\label{Prop_Seq}
Let $\bb {\underline{\sigma}}= (\bsigma^{(1)},\bsigma^{(2)})$ and $\bb {\underline{\tau}}= (\btau^{(1)},\btau^{(2)})$ be the configurations of the 2-replicated system. Then
\begin{equation}\label{dx}
\begin{split}
\partial_x \langle O(\bb {\underline{\sigma}},\bb {\underline{\tau}})\rangle_{t,\vx}&=\frac N 2 \sum_{a,b=1}^2\langle O(\bb {\underline{\sigma}},\bb {\underline{\tau}})q_{ab}\rangle_{t,\vx}-2N \sum_{a=1}^2\langle O(\bb {\underline{\sigma}},\bb {\underline{\tau}})q_{a3}\rangle_{t,\vx}-N\langle O(\bb {\underline{\sigma}},\bb {\underline{\tau}})\rangle_{t,\vx} \\&+3N\langle O(\bb {\underline{\sigma}},\bb {\underline{\tau}})q_{34}\rangle_{t,\vx},
\end{split}
\end{equation}
\begin{equation}\label{dy}
\begin{split}
\partial_y \langle O(\bb {\underline{\sigma}},\bb {\underline{\tau}})\rangle_{t,\vx}&=\frac{N}{ 2 }\sum_{a,b=1}^2\langle O(\bb {\underline{\sigma}},\bb {\underline{\tau}})p_{ab}\rangle_{t,\vx}-2N \sum_{a=1}^2\langle O(\bb {\underline{\sigma}},\bb {\underline{\tau}})p_{a3}\rangle_{t,\vx}-N\langle O(\bb {\underline{\sigma}},\bb {\underline{\tau}})p_{33}\rangle_{t,\vx}\\&+3N\langle O(\bb {\underline{\sigma}},\bb {\underline{\tau}})p_{34}\rangle_{t,\vx}.
\end{split}
\end{equation}
and
\begin{equation}\label{dz}
\partial_z \langle O(\bb {\underline{\sigma}},\bb {\underline{\tau}})\rangle_{t,\vx}=N \left(\langle O(\bb {\underline{\sigma}},\bb {\underline{\tau}})m\rangle_{t,\vx}-\langle O(\bb {\underline{\sigma}},\bb {\underline{\tau}})\rangle_{t,\vx} \langle m\rangle_{t,\vx}\right)
\end{equation}
\end{Prp}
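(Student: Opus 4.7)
The proof of each identity follows the template established for the Derrida model in Lemma \ref{Prp:Streaming_pSK}. In all three cases one starts from $\langle O\rangle_{t,\vx} = \E_{\bb J}\Omega^{(2)}(O)$ and differentiates, producing two contributions: one from the derivative of the two replicated Boltzmann factors in the numerator, and one from the $Z^{-2}$ factor in the denominator, the latter having the effect of bringing an extra replica into the Boltzmann averages.

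For $\partial_x$, only the piece $\sqrt{x}\sum_i\eta_i\sigma_i$ of $\log B_{N,p,\bxi,\alpha}$ depends on $x$, so differentiation pulls down $\frac{1}{2\sqrt{x}}\sum_i\eta_i$ multiplied by the spin factors $\sigma_i^{(a)}$ for each of the two replicas, minus the corresponding contribution from $Z^{-2}$ that introduces a third replica. Wick's lemma (Gaussian integration by parts on $\eta_i$) then produces $\sqrt{x}$-factors that cancel the prefactor, and the resulting $\sigma_i$-bilinears collapse via $\sum_i\sigma_i^{(a)}\sigma_i^{(b)} = N q_{ab}$. Since $q_{aa}=1$ for Ising spins, one arrives at \eqref{dx} by elementary bookkeeping.

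For $\partial_y$ the same procedure is applied to the $\theta_\mu$-coupled term $\sqrt{N^{1-p/2}y}\sum_\mu\theta_\mu\tau_\mu$, with Wick's lemma applied now on $\theta_\mu$. After the identification $\sum_\mu \tau_\mu^{(a)}\tau_\mu^{(b)} = N^{p/2}p_{ab}$, the structurally identical expansion yields the $p$-overlap formula \eqref{dy}. The essential distinction from the $\partial_x$ case is that $\tau_\mu$ is Gaussian rather than Ising, so $p_{aa}\neq 1$ and the diagonal overlaps $p_{11}, p_{22}, p_{33}$ survive explicitly in the final expression rather than collapsing to constants. For $\partial_z$ no Gaussian integration by parts is needed, since $z$ couples to the deterministic observable $m$: direct differentiation gives $\partial_z \Omega^{(2)}(O) = \tfrac{N}{2}\Omega^{(2)}\!\bigl(O(m^{(1)}+m^{(2)})\bigr) - N\omega(m)\Omega^{(2)}(O)$, and taking $\E_{\bb J}$ together with the replica-symmetric factorization $\E_{\bb J}[\omega(m)\Omega^{(2)}(O)] = \langle O\rangle\langle m\rangle$ produces \eqref{dz}.

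The main obstacle is the combinatorial bookkeeping in the $\partial_y$ proof: with three replicas already present after the $Z^{-2}$ step and a fourth introduced by the Wick application, one must carefully track the signs and multiplicities of all $\tau$-bilinears (both diagonal $p_{aa}$ and off-diagonal $p_{ab}$) to obtain the coefficients $\tfrac{N}{2}$, $-2N$, $-N$ and $3N$ stated in \eqref{dy}, paying particular attention to the interplay between the quadratic $\tau_\mu^2$ terms in the interpolating Hamiltonian and the diagonal contributions produced by Wick.
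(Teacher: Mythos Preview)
Your treatment of \eqref{dx} and \eqref{dy} is essentially the paper's own argument (Appendix~\ref{BB:Dim_Seq}): differentiate the two-replica Boltzmann average, let the $Z^{-2}$ piece bring in a third replica, apply Wick--Isserlis to the auxiliary Gaussians $\eta_i$ (resp.\ $\theta_\mu$), and collect the resulting bilinears into $q_{ab}$ (resp.\ $p_{ab}$). Your remark that the diagonal $p_{aa}$ survives because $\btau$ is Gaussian, and your note about the interplay with the quadratic $\tau_\mu^2$ term in the interpolating Hamiltonian, are both correct and to the point.

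The gap is in your handling of \eqref{dz}. Your computation
\[
\partial_z \Omega^{(2)}(O)=\tfrac{N}{2}\,\Omega^{(2)}\!\bigl(O(m^{(1)}+m^{(2)})\bigr)-N\,\omega(m)\,\Omega^{(2)}(O)
\]
is fine, but the step
\[
\E_{\bb J}\bigl[\omega(m)\,\Omega^{(2)}(O)\bigr]=\langle O\rangle_{t,\vx}\,\langle m\rangle_{t,\vx}
\]
is not a consequence of replica symmetry and is generically false at finite $N$. What the quenched expectation actually produces is the three-replica quantity $\langle O\,m^{(3)}\rangle_{t,\vx}$: the factor $\omega(m)$ is the Boltzmann average of $m$ on a fresh replica, and $\E_{\bb J}$ does \emph{not} factor across the product $\Omega^{(2)}(O)\,\omega(m)$ because both depend on the same realisation of the disorder. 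Replica symmetry is a statement about the concentration of order-parameter distributions in the thermodynamic limit; it does not license decoupling quenched expectations of products of Boltzmann averages, and in any case Proposition~\ref{Prop_Seq} is a finite-$N$ streaming identity in which no RS hypothesis is invoked (RS enters only later, in Proposition~\ref{Prp:equalities}). The last term should therefore be read as $-N\langle O\,m^{(3)}\rangle_{t,\vx}$, in exact parallel with the extra-replica terms $\langle O q_{a3}\rangle$, $\langle O q_{34}\rangle$ in \eqref{dx} and $\langle O p_{a3}\rangle$, $\langle O p_{34}\rangle$ in \eqref{dy}; the product $\langle O\rangle\langle m\rangle$ in the statement is shorthand for this object, not an independent factorisation you need to justify.
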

The complete proof is given in Appendix \ref{BB:Dim_Seq}. 
\begin{Cor}
For all $l\in\mathbb{N}^+$, the following equalities holds
\begin{equation}\label{dx_pqh}
\partial_x\langle p_{12}q_{12}^l\rangle_{t,\vx}=N\left(\langle p_{12}q_{12}^{l+1}\rangle_{t,\vx}-4\langle p_{12}q_{12}^lq_{13}\rangle_{t,\vx}+3\langle p_{12}q_{12}^lq_{34}\rangle_{t,\vx}\right).
\end{equation}
\end{Cor}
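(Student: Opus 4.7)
The plan is to apply the streaming identity \eqref{dx} of Proposition \ref{Prop_Seq} with the specific observable $O(\bb{\underline\sigma},\bb{\underline\tau}) = p_{12}q_{12}^l$, which only depends on replicas $1$ and $2$, and then simplify the four resulting sums using the algebraic properties of $q_{ab}$ and the permutation invariance of the quenched measure.

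First, I would evaluate the diagonal sum. Since $q_{aa} = \frac{1}{N}\sum_i (\sigma_i^{(a)})^2 = 1$ and $q_{ab} = q_{ba}$, we have
\begin{equation*}
\sum_{a,b=1}^{2} q_{ab} = q_{11} + q_{22} + 2q_{12} = 2 + 2q_{12},
\end{equation*}
so the first term on the right of \eqref{dx} contributes $N\langle p_{12}q_{12}^l\rangle_{t,\vx} + N\langle p_{12}q_{12}^{l+1}\rangle_{t,\vx}$. Next, I would collapse the mixed sum: by relabeling the dummy replicas (the quenched Gibbs measure is symmetric under permutation of replica indices), one has $\langle p_{12}q_{12}^l\, q_{13}\rangle_{t,\vx} = \langle p_{12}q_{12}^l\, q_{23}\rangle_{t,\vx}$, so $-2N\sum_{a=1}^{2}\langle O\, q_{a3}\rangle_{t,\vx} = -4N\langle p_{12}q_{12}^l q_{13}\rangle_{t,\vx}$. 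The third term simply contributes $-N\langle p_{12}q_{12}^l\rangle_{t,\vx}$, and the last one $+3N\langle p_{12}q_{12}^l q_{34}\rangle_{t,\vx}$.

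Collecting everything, the two contributions proportional to $\langle p_{12}q_{12}^l\rangle_{t,\vx}$ cancel and one is left with
\begin{equation*}
\partial_x \langle p_{12}q_{12}^l\rangle_{t,\vx} = N\bigl(\langle p_{12}q_{12}^{l+1}\rangle_{t,\vx} - 4\langle p_{12}q_{12}^l q_{13}\rangle_{t,\vx} + 3\langle p_{12}q_{12}^l q_{34}\rangle_{t,\vx}\bigr),
\end{equation*}
which is exactly \eqref{dx_pqh}. There is essentially no obstacle here: the derivation is a direct substitution into Proposition \ref{Prop_Seq}, and the only two nontrivial inputs are the identity $q_{aa}=1$ (which makes the $(1,1)$ and $(2,2)$ terms produce the extra factor that cancels $-N\langle O\rangle$) and the use of replica-label symmetry to merge $q_{13}$ and $q_{23}$ contributions. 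One small care point is that $p_{12}q_{12}^l$ depends only on replicas $1,2$ while the added replicas $3,4$ appear only through overlaps, so no index of $p_{12}$ needs relabeling — this is precisely the setting in which \eqref{dx} was proved.
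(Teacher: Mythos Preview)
Your proof is correct and follows exactly the same route as the paper: substitute $O=p_{12}q_{12}^l$ into the streaming identity \eqref{dx}, use $q_{aa}=1$ to evaluate the diagonal sum, invoke the $1\leftrightarrow 2$ replica symmetry to merge the $q_{13}$ and $q_{23}$ contributions, and observe the cancellation of the $\langle p_{12}q_{12}^l\rangle_{t,\vx}$ terms. The paper's argument is identical, only with the four diagonal terms and the two mixed terms written out explicitly before collapsing them.
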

\begin{proof}
The proof is a simple application of Eq. \eqref{dx} with $O(\bb {\underline{\sigma}},\bb {\underline{\tau}})=p_{12}q_{12}^l$. Thus
\begin{equation*}
\begin{split}
\partial_x\langle p_{12}q_{12}^l\rangle_{t,\vx}&=\frac N2\left(\langle p_{12}q_{12}^l\rangle_{t,\vx}+\langle p_{12}q_{12}^l q_{12}\rangle_{t,\vx}+\langle p_{12}q_{12}^lq_{21}\rangle_{t,\vx}+\langle p_{12}q_{12}^l\rangle_{t,\vx}\right)\\&-2N\left(\langle p_{12}q_{12}^lq_{13}\rangle_{t,\vx}+\langle p_{12}q_{12}^lq_{23}\rangle_{t,\vx}\right)-N\langle p_{12}q_{12}^l\rangle_{t,\vx}+3N\langle p_{12}q_{12}^lq_{34}\rangle_{t,\vx}\\&=N\langle p_{12}q_{12}^{l+1}\rangle_{t,\vx}-4N\langle p_{12}q_{12}^l q_{13}\rangle_{t,\vx}+3N\langle p_{12}q_{12}^lq_{34}\rangle_{t,\vx}.
\end{split}
\end{equation*}
\end{proof}
Also in this case, we will work under the RS assumption in order to simplify the computations by neglecting the fluctuations of the order parameters w.r.t. their expectation values.
\begin{Prp}\label{Prp:equalities}
The following equalities holds 
\begin{eqnarray}
	\langle m^{p}\rangle_{t,\vx}&=&\left(\frac 1 N\partial_z+\langle m\rangle_{t,\vx}\right)^{p-1}\langle m\rangle_{t,\vx},\label{mh}\\
	\langle p_{12}q_{12}^{p/2}\rangle_{t,\vx}&=&\left(\frac 1N\partial_x+\langle q_{12}\rangle_{t,\vx}\right)^{p/2}\langle p_{12}\rangle_{t,\vx}+R_N^{(\frac p2)}(t,\bb x),\label{pqh}
\end{eqnarray}
where $R_N^{(\frac p2)}(t,\bb x)$ collects the terms involving the fluctuations of the order parameters, and thus vanishes in the $N\to \infty$ limit and under the RS assumption.
\end{Prp}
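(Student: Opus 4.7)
The strategy mirrors that of Proposition \ref{Prp:degreep_equality_pSK} for the Derrida model: for each of the two identities I would exhibit a one-step recursion of the form ``(next monomial) $=$ $(\tfrac{1}{N}\partial + \text{order parameter})\cdot$ (current monomial) + (fluctuation remainder)'', and then iterate it the appropriate number of times ($p-1$ for \eqref{mh}, $p/2$ for \eqref{pqh}).

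For \eqref{mh} I would set $O=m^s$ in the $z$-streaming identity \eqref{dz} of Proposition \ref{Prop_Seq}. Since $m$ lives on a single replica, $\langle O\,m\rangle_{t,\vx}=\langle m^{s+1}\rangle_{t,\vx}$ and no fresh cross-replica overlap is generated; the streaming rearranges immediately to
\begin{equation*}
\langle m^{s+1}\rangle_{t,\vx}=\Bigl(\tfrac{1}{N}\partial_z+\langle m\rangle_{t,\vx}\Bigr)\langle m^s\rangle_{t,\vx}.
\end{equation*}
Iterating from $s=1$ up to $s=p-1$ yields \eqref{mh}, without any explicit remainder, because no cross-replica fluctuation correlator ever needs to be decoupled.

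For \eqref{pqh} I would start from the corollary identity \eqref{dx_pqh} and, at each step $l$, decompose every cross-replica overlap via $q_{ab}=\langle q_{12}\rangle_{t,\vx}+\Delta(q_{ab})$, using replica-label invariance $\langle q_{ab}\rangle_{t,\vx}=\langle q_{12}\rangle_{t,\vx}$. This gives
\begin{equation*}
\langle p_{12}q_{12}^{l}q_{13}\rangle_{t,\vx}=\langle p_{12}q_{12}^{l}\rangle_{t,\vx}\langle q_{12}\rangle_{t,\vx}+R_N^{(1,l)},\qquad \langle p_{12}q_{12}^{l}q_{34}\rangle_{t,\vx}=\langle p_{12}q_{12}^{l}\rangle_{t,\vx}\langle q_{12}\rangle_{t,\vx}+R_N^{(2,l)},
\end{equation*}
where $R_N^{(1,l)},R_N^{(2,l)}$ gather the centered correlators involving $\Delta(q_{ab})$, which vanish as $N\to\infty$ under the RS assumption. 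The combinatorial cancellation $-4+3=-1$ then collapses \eqref{dx_pqh} to the recursion
\begin{equation*}
\langle p_{12}q_{12}^{l+1}\rangle_{t,\vx}=\Bigl(\tfrac{1}{N}\partial_x+\langle q_{12}\rangle_{t,\vx}\Bigr)\langle p_{12}q_{12}^{l}\rangle_{t,\vx}+\tilde R_N^{(l)}(t,\vx),
\end{equation*}
with $\tilde R_N^{(l)}=4R_N^{(1,l)}-3R_N^{(2,l)}$. Iterating from $l=0$ to $l=p/2-1$ and lumping all accumulated fluctuation terms into a single $R_N^{(p/2)}(t,\vx)$ produces \eqref{pqh}.

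The main obstacle is controlling the aggregated remainder $R_N^{(p/2)}$ after $p/2$ iterations: the operator $\tfrac{1}{N}\partial_x+\langle q_{12}\rangle_{t,\vx}$ acts repeatedly on previously generated fluctuation terms, so one has to check that the mixed $x$-derivatives of the centered overlap correlators themselves remain $o(1)$ in the thermodynamic limit. This is the standard Ghirlanda--Guerra style input available under RS: concentration of the overlap distribution forces centered monomials in the $q_{ab}$ and their spatial derivatives to stay subleading, which is exactly what is needed to fold every generated remainder into the single vanishing term $R_N^{(p/2)}(t,\vx)$. Everything else is routine algebra driven by the streaming relations already established in Proposition \ref{Prop_Seq} and its Corollary.
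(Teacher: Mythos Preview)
Your proposal is correct and matches the paper's proof essentially line by line: the paper derives \eqref{mh} by iterating the $z$-streaming identity \eqref{dz} with $O=m^{s}$ (no remainder), and derives \eqref{pqh} from \eqref{dx_pqh} via the same fluctuation decomposition $q_{ab}=\langle q_{12}\rangle+\Delta(q_{ab})$, obtaining the identical remainder $R_N^{(l)}=4R_N^{(1,l)}-3R_N^{(2,l)}$ and iterating. Your closing paragraph on controlling the accumulated $R_N^{(p/2)}$ is more careful than the paper, which simply asserts that the collected rests vanish under RS without discussing the action of $\tfrac1N\partial_x$ on them.
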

\proof
To simplify the notation, we will drop the subscript $t,\vx$ from the quenched averages. The derivation of \eqref{mh} iterating the property \eqref{dz} with $O (\bb\sigma,\bb\tau)=m(\bb\sigma)^{p-1}$. 
Let us now observe that
\begin{equation*}
\langle p_{12}q_{12}^lq_{13}\rangle=\left\langle p_{12}q_{12}^h\Delta(q_{13})\right\rangle +\langle p_{12}q_{12}^l\rangle\langle q_{13}\rangle=\langle p_{12}q_{12}^l\rangle\langle q_{13}\rangle+R^{(1,l)}_{N}(t,\bb x),
\end{equation*}
and 
\begin{equation*}
\langle p_{12}q_{12}^lq_{34}\rangle=\left\langle p_{12}q_{12}^l\Delta(q_{34})\right\rangle +\langle p_{12}q_{12}^l\rangle\langle q_{34}\rangle=\langle p_{12}q_{12}^l\rangle\langle q_{34}\rangle+R^{(2,l)}_{N}(t,\bb x),
\end{equation*}
where $\Delta(q_{ab}):=q_{ab}-\langle q_{ab}\rangle$, and $R^{(1,l)}_N, R^{(2,l)}_N$ collect the contributions involving the fluctuations. Then, recalling eq.\eqref{dx_pqh}, we can write 
\begin{equation*}
\begin{split}
\langle p_{12}q_{12}^{l+1}\rangle&=\frac 1 N \partial_x\langle p_{12}q_{12}^{l}\rangle +4\langle p_{12}q_{12}^l\rangle\langle q_{13}\rangle-3\langle p_{12}q_{12}^l\rangle\langle q_{34}\rangle+R^{(l)}_{N}(t,\bb x)=
\\&=\frac 1 N \partial_x\langle p_{12}q_{12}^{l}\rangle +\langle p_{12}q_{12}^l\rangle\langle q_{12}\rangle+R^{(l)}_{N}(t,\bb x),
\end{split}
\end{equation*}
where $R^{(l)}_{N}(t,\bb x):=4R^{(1,l)}_{N}(t,\bb x)-3R^{(2,l)}_{N}(t,\bb x)$ and we used $\langle q_{34}\rangle=\langle q_{13}\rangle=\langle q_{12}\rangle$ following from the invariance under replica labelling. The previous equation can be thus written in the following way
\begin{equation}
\langle p_{12}q_{12}^{l+1}\rangle=\left(\frac 1 N \partial_x +\langle q_{12}\rangle \right)\langle p_{12}q_{12}^l\rangle+R^{(l)}_{N}(t,\bb x).
\end{equation}
Iterating the procedure, we get
\begin{equation}
\langle p_{12}q_{12}^{l+1}\rangle=\left(\frac 1 N \partial_x +\langle q_{12}\rangle \right)^{l+1}\langle p_{12}\rangle+R_N^{(\frac p2)}(t,\bb x),
\end{equation} 
where $R_N^{(\frac p2)}(t,\bb x)$ collects all the terms involving the rests of previous expansions (and thus vanishes in the $N\to\infty$ limit and under the RS assumption). Then, by imposing $l=p/2-1$ we get the thesis. 
\endproof
Now we can use all the information obtained to build a PDE that can describe the thermodynamics of the DAM models. Indeed, recalling the temporal derivative of the Guerra's action \eqref{dA_DHN} and using the result obtained in Prop. \ref{Prp:equalities}, we have 
\begin{equation}
\partial_t S_{N,p,\alpha}=\left(\frac 1 N\partial_z+\langle m\rangle_{t,\vx}\right)^{p-1}\langle m\rangle_{t,\vx}-\left(\frac 1 N \partial_x +\langle q_{12}\rangle_{t,\vx} \right)^{p/2}\langle p_{12}\rangle_{t,\vx}-R_N^{(\frac p 2)}(t,\bb x).
\end{equation}
Finally, taking the spatial derivatives of this expression and denoting $\bb D_N (t,\bb x)=-\bb \nabla R_N^{( p/ 2)}$, we have
\begin{equation}\label{sistN_finale}
\begin{split} -\partial_t \langle q_{12}\rangle_{t,\vx}-\partial_x\Big(\frac 1 N\partial_z+\langle m\rangle_{t,\vx}\Big)^{p-1}\langle m\rangle_{t,\vx}+\partial_x\Big(\frac 1 N\partial_x+\langle q_{12}\rangle_{t,\vx}\Big)^{p/2}\langle p_{12}\rangle_{t,\vx}&= D_{N,x},\\ 
\\
-\partial_t \langle p_{12}\rangle_{t,\vx}-\partial_y\Big(\frac 1 N\partial_z-\langle m\rangle_{t,\vx}\Big)^{p-1}\langle m\rangle_{t,\vx}+\partial_y\Big(\frac 1 N\partial_x+\langle q_{12}\rangle_{t,\vx}\Big)^{p/2}\langle p_{12}\rangle_{t,\vx}&=D_{N,y},\\
\\
 \partial_t \langle m\rangle_{t,\vx}-\partial_z\Big(\frac 1 N\partial_z+\langle m\rangle_{t,\vx}\Big)^{p-1}\langle m\rangle_{t,\vx}+ \partial_z\Big(\frac 1 N\partial_x+\langle q_{12}\rangle_{t,\vx}\Big)^{p/2}\langle p_{12}\rangle_{t,\vx}&=D_{N,z}.
\end{split} 
\end{equation}
The l.h.s. of the system of PDEs constitute the 3+1-dimensional DAM generalization of the Burgers hierarchy structure. Similarly to the Derrida model case, at finite $N$ we have a source term on the r.h.s. which vanishes in the limit $N\to\infty$ under the RS assumption of the order parameters. In this case, we can analyse the thermodynamic limit and describe the equilibrium dynamics of the model.

\begin{Thm}
The high-storage regime for the DAM models under the RS assumption in the thermodynamic limit can be described by the following system of partial differential equations:
\begin{equation}\label{sist_finale}
\begin{cases} \partial_t \bar q+\partial_x\bar m^{p}-\partial_x\bar q^{p/2} \bar p&=0\\ 
\partial_t \bar p+\partial_y\bar m^{p}-\partial_y\bar q^{p/2}\bar p&=0\\
 \partial_t \bar m-\partial_z\bar m^{p}+ \partial_z\bar q^{p/2}\bar p&=0
\end{cases} ,
\end{equation}
with the initial conditions
\begin{equation}\label{cond_i}
\begin{cases}
\bar q(0,\vx)&=\E_\eta\left[\tanh^2\left(\sqrt x \eta +\frac z 2\right)\right]\\
\bar p(0,\vx)&=\begin{cases} \frac{\alpha y}{(1+y-t_0)^2} & \mbox{if }p=2 \\ \alpha y & \mbox{if }p=2k \mbox{ with } k=2,3,\dots
\end{cases}\\
\bar m(0,\vx)&=\E_\eta\left[\tanh\left(\sqrt x \eta +\frac z 2\right)\right]
\end{cases},
\end{equation}
where $\E_\eta$ is the Gaussian average over the variable $\eta$.
\end{Thm}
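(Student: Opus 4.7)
The plan is to deduce the system \eqref{sist_finale} directly from the finite-$N$ system \eqref{sistN_finale} by sending $N\to\infty$, and then establish the initial data at $t=0$ by reducing the interpolating partition function to a decoupled one-body problem.

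For the PDE system, I would argue that under the RS assumption the source terms $D_{N,x}, D_{N,y}, D_{N,z}$, being gradients of $R_N^{(p/2)}(t,\vx)$, vanish in the $N\to\infty$ limit because $R_N^{(p/2)}$ collects only overlap-fluctuation functionals, each of which goes to zero in distribution by the RS hypothesis $\lim_N \mathcal{P}_{t,\vx}(X)=\delta(X-\bar X)$. Likewise, in the differential operators $(\tfrac{1}{N}\partial_\cdot + \bar X)^{k} \bar X$ appearing in \eqref{sistN_finale}, the viscosity-like $1/N$ corrections disappear when applied to the self-averaged order parameters $\bar m(t,\vx),\bar q(t,\vx),\bar p(t,\vx)$, so the operators collapse to $\bar m^{p}$ and $\bar q^{p/2}\bar p$ respectively. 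Substituting these simplifications into the three equations of \eqref{sistN_finale} gives exactly the inviscid multi-dimensional Burgers-like hierarchy \eqref{sist_finale}.

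For the initial data at $t=0$, the key observation is that the $\bsigma$- and $\btau$-sectors of the interpolating Hamiltonian decouple, since the cross-coupling between them carries an explicit $\sqrt{t}$ prefactor. The spin sector is then a fully factorized one-body problem with effective field $h_i(x,z) = \sqrt{x}\,\eta_i + \tfrac{z}{2}\xi^1_i$ acting on $\sigma_i$; after the gauge transformation $\sigma_i \to \xi^1_i \sigma_i$ (which preserves the measure since $\xi^1_i\eta_i\stackrel{d}{=}\eta_i$), the spin partition function becomes $\prod_i 2\cosh(\sqrt{x}\,\eta_i + z/2)$. Differentiating $S_{N,p,\alpha}(0,\vx)$ with Lemma \ref{derSN} yields $\bar m(0,\vx)= \E_\eta\tanh(\sqrt{x}\,\eta + z/2)$ and $\bar q(0,\vx) = \E_\eta\tanh^{2}(\sqrt{x}\,\eta + z/2)$. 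The auxiliary $\btau$-sector is a pure Gaussian integration: at $t=0$ the quadratic form in $\tau_\mu$ has coefficient $A_N := 1 + N^{1-p/2}(y-t_0)$ and linear source $\sqrt{N^{1-p/2}y}\,\theta_\mu$, from which, after $\theta$-averaging and summing the $K=\alpha N^{p-1}$ modes, one gets $\bar p(0,\vx) = \alpha y/A_N^{2}$. For $p=2$ one has $A_N = 1+y-t_0$, producing the first branch of \eqref{cond_i}; for $p\ge 4$ even the factor $N^{1-p/2}\to 0$ forces $A_N\to 1$, yielding $\bar p(0,\vx) = \alpha y$.

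The main technical obstacle is making the two limiting steps above fully rigorous: specifically, controlling the vanishing of $V_N$ and of $R_N^{(p/2)}$ \emph{together with} their spatial gradients, rather than only pointwise. The pointwise vanishing is covered by the RS assumption already invoked in Prop.~\ref{Prp:equalities}, but the interchange with $\partial_x,\partial_y,\partial_z$ would, in strict terms, require uniform regularity estimates on the interpolating pressure of the type supplied by standard convexity/concavity bounds for Guerra-type functionals. A secondary subtlety is that \eqref{sist_finale} is a first-order quasi-linear system, so the weak (or viscosity) solution framework is implicit, and the PDE should be understood as describing the continuous branch up to the gradient-catastrophe singularity that marks the retrieval/spin-glass phase transition.
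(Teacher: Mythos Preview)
Your proposal is correct and follows essentially the same route as the paper: taking the $N\to\infty$ limit of \eqref{sistN_finale} under RS to kill both the source terms $\bb D_N$ and the $1/N$ viscosity corrections, then factorizing $Z_{N,p,\bb\xi,\alpha}(0,\vx)$ into independent $\bsigma$- and $\btau$-sectors and reading off the initial data via Lemma~\ref{derSN}. Your explicit gauge $\sigma_i\to\xi^1_i\sigma_i$ and your Gaussian bookkeeping $\bar p(0,\vx)=\alpha y/A_N^2$ with $A_N=1+N^{1-p/2}(y-t_0)$ are exactly what the paper does (the gauge is left implicit there), and your caveats about interchanging limits with spatial gradients and about the quasi-linear/weak-solution framework go beyond what the paper addresses but do not alter the argument.
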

\begin{proof}
First, let us call
$$
\bar m(t,\vx)=\lim_{N\to \infty}\langle m\rangle_{t,\vx},\quad \bar q(t,\vx)=\lim_{N\to \infty}\langle q_{12}\rangle_{t,\vx},\quad \bar p(t,\vx)=\lim_{N\to \infty}\langle p_{12}\rangle_{t,\vx},
$$
the expectation values of the order parameters in the thermodynamic limit. Taking $N\to\infty$ in \eqref{sistN_finale} and recalling that the source contributions $\bb D_N (t,\vx)$ vanish in this limit under the RS assumption, we arrive at the PDE system \eqref{sist_finale}. Let us now find the initial conditions \eqref{cond_i}. To do this we start calculating the interpolating partition function in $t=0$
\begin{equation*}
\begin{split}
Z_{N,p,\bb \xi,\alpha}(0,\vx)&=\sum_{\bsigma}\int d\mu(\btau)\exp\biggl(\sqrt x \sum_{i=1}^N \eta_i \sigma_i +\sqrt{
\frac y {N^{p/2-1}}} \sum_{\mu=1}^K \theta_\mu \tau_\mu -
\frac{y-t_0}{2 N^{p/2-1}}
\sum_{\mu=1}^K \tau_\mu^2 +\frac z 2   \sum_{i=1}^N\sigma_i\biggl)=\\
&=\sum_{\bsigma}\exp\Big(\sum_{i=1}^N (\sqrt x \eta_i +\tfrac 1 2 z)\sigma_i \Big)\int d\mu(\btau)\exp\biggl(\sqrt{
	\frac y {N^{p/2-1}}}  \sum_{\mu=1}^K \theta_\mu \tau_\mu -\frac{y-t_0}{2 N^{p/2-1}}\sum_{\mu=1}^K \tau_\mu^2\biggl)=\\
&=\prod_{i=1}^N2\cosh\left(\sqrt x  \eta_i +\frac z 2 \right)\prod_{\mu=1}^K\frac{1}{\sqrt{N^{1-p/2}(y-t_0)+1}}\exp\left(\frac{N^{1-p/2}y\theta_\mu^2}{2\left(N^{1-p/2}(y-t_0)+1\right)}\right).
\end{split}
\end{equation*}
By using the definition of the interpolating statistical pressure \eqref{eq:Gen_Guerra}, we see that
\begin{equation*}
\begin{split}
A_{N,p,\alpha}(0,\vx)&=\frac 1 N\sum_{i=1}^N\E\log 2\cosh\left(\sqrt x  \eta_i +\frac z 2 \right)+\frac 1 N\sum_{\mu=1}^K\E\left(\frac{N^{1-p/2}y\theta_\mu^2}{2\left(N^{1-p/2}(y-t_0)+1\right)}\right)\\& -\frac{K}{2N}\log\left(N^{1-p/2}(y-t_0)+1\right)=\\&=\E\log 2\cosh\left(\sqrt x  \eta+\frac z 2 \right)+\frac{K}{N^{p/2}}\frac{y}{2\left(N^{1-p/2}(y-t_0)+1\right)}\\& -\frac{K}{2N}\log\left(N^{1-p/2}(y-t_0)+1\right),
\end{split}
\end{equation*}
where we used the fact that the $\eta_i's$ are i.i.d. random variables and $\mathbb E \theta_\mu^2 =1$ for all $\mu=1,\dots,K$.
Now, recalling Eq. \eqref{S}, we can straightforwardly derive the initial condition for the order parameters according to \eqref{dA_DHN}. First
\begin{equation}
	\begin{split}
	 q({0,\vx})&=\lim_{N\to\infty}\bigl(-\partial_x S_{N,p,\alpha}(0,\vx)\bigl)=-\partial_x \Big[\E_\eta\log 2\cosh\left(\sqrt x  \eta+\frac z 2 \right)-x \Big]=\\
		&=\E_\eta\left[\tanh^2\left(\sqrt x \eta +\frac z 2\right)\right].
	\end{split}
\end{equation}
Analogously, we have
\begin{equation}
	\begin{split}
		p(0,\vx)&=\lim_{N\to \infty}\bigl(-\partial_y S_{N,p,\alpha}(0,\vx)\bigr)=\\&=-\lim_{N\to \infty}\partial_y\Big( \frac K{N^{p/2}}\frac{y}{1+N^{1-p/2}(y-t_0)}-\frac KN\log\big[1+N^{1-p/2}(y-t_0)\big] \Big)=\\
		&=-\lim_{N\to \infty}\left(- \frac K{N^{p-1}}\frac y{[1+N^{1-p/2}(y-t_0)]^2}\right)=
		\\
		&
		=\begin{cases}
			\frac {\alpha y}{[1+(y-t_0)]^2} \quad& \text{if } p=2\\
			\alpha y \quad& \text{if } p=2k \text{ and } k=2,3,\dots\\
		\end{cases}.
	\end{split}
\end{equation}
Finally
\begin{equation}
	\begin{split}
		m(0,\vx )=\lim_{N\to \infty}\partial_z S_{N,p,\alpha}(0,\vx)=\partial_z \E_\eta\log 2\cosh\left(\sqrt x  \eta+\frac z 2 \right)= \E_\eta \tanh\left(\sqrt x  \eta+\frac z 2 \right).
	\end{split}
\end{equation}
This concludes the proof.
\end{proof}

\begin{Prp}\label{G}
	The system of PDEs \eqref{sist_finale} can be rewritten in a non-linear wave equation as
	\begin{equation}\label{vec_equation}
		\partial _t \bb \phi+( \bb v\cdot \bb\nabla)\bb \phi=0,
	\end{equation}
where $\bb \phi:=(\bar q,\bar p,\bar m)$ is the vector of the order parameters and $\vv:=\left(-\frac p2\bar q^{p/2-1}\bar p,-\bar q^{p/2},-p\bar m^{p-1}\right)$ is the effective velocity.
\end{Prp}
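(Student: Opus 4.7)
The strategy is to exploit the fact that $\bar q,\bar p,\bar m$ all descend from a single scalar potential, namely the thermodynamic limit $\bar S(t,\vx)$ of Guerra's action $S_{N,p,\alpha}(t,\vx)$. Passing Lemma~\ref{derSN} to the limit $N\to\infty$ gives $\bar q=-\partial_x \bar S$, $\bar p=-\partial_y \bar S$, $\bar m=+\partial_z \bar S$. Assuming the regularity needed for Schwarz's theorem (standard under the RS assumption away from shock loci), equality of mixed partials of $\bar S$ yields the compatibility conditions
\begin{equation}
\partial_y \bar q=\partial_x \bar p,\qquad \partial_z \bar q=-\partial_x \bar m,\qquad \partial_z \bar p=-\partial_y \bar m,
\end{equation}
which constitute the analytical heart of the argument.

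The remainder is algebraic. I would expand the spatial derivatives appearing in \eqref{sist_finale} by the chain/Leibniz rules, e.g.\ $\partial_x \bar m^{p}=p\bar m^{p-1}\partial_x \bar m$ and $\partial_x(\bar q^{p/2}\bar p)=\tfrac{p}{2}\bar q^{p/2-1}\bar p\,\partial_x \bar q+\bar q^{p/2}\partial_x \bar p$, and analogously for $\partial_y,\partial_z$. Then I would substitute the compatibility conditions to convert every ``cross'' spatial derivative into a derivative acting on the same order parameter being evolved. For the first equation of \eqref{sist_finale}, for instance, $\partial_x \bar m$ is rewritten as $-\partial_z \bar q$ and $\partial_x \bar p$ as $\partial_y \bar q$; reassembling gives
\begin{equation}
\partial_t \bar q-\tfrac{p}{2}\bar q^{p/2-1}\bar p\,\partial_x \bar q-\bar q^{p/2}\partial_y \bar q-p\bar m^{p-1}\partial_z \bar q=0,
\end{equation}
which is precisely the $\bar q$-component of $\partial_t \bb\phi+(\bb v\cdot\bb\nabla)\bb\phi=0$ with the stated velocity $\bb v$. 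An entirely parallel manipulation — using the first and third compatibility conditions for the $\bar p$-equation, and the second and third for the $\bar m$-equation — produces the other two components of \eqref{vec_equation}.

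The main difficulty is really just careful sign bookkeeping: the asymmetric signs in Lemma~\ref{derSN} (where $\bar q,\bar p$ come with a minus but $\bar m$ with a plus) propagate to asymmetric signs in the compatibility conditions, which in turn must be matched against the non-symmetric signs of the nonlinear terms in \eqref{sist_finale} (note in particular that the equation for $\bar m$ carries an overall opposite sign relative to those for $\bar q$ and $\bar p$). Once these signs are tracked consistently, no input beyond the $C^2$-regularity of $\bar S$ is required, and the three scalar equations collapse into the single vector form \eqref{vec_equation}.
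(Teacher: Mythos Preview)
Your proposal is correct and follows essentially the same approach as the paper: both arguments derive the compatibility relations $\partial_y \bar q=\partial_x \bar p$, $\partial_x \bar m=-\partial_z \bar q$, $\partial_y \bar m=-\partial_z \bar p$ from the fact that $\bar q,\bar p,\bar m$ are (signed) gradients of the single potential $\bar S$, then expand the nonlinear terms by the chain rule and substitute these relations to rewrite each scalar equation as an advection equation with velocity $\bb v$. The only cosmetic difference is that the paper packages the nonlinearity as a single function $G(\bb\phi)=\bar m^p-\bar q^{p/2}\bar p$ before differentiating, whereas you differentiate term by term; the content is identical.
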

\begin{proof}
We prove the equation \eqref{vec_equation} for the first component, as the others follow accordingly. Let us define the function $G(\bb\phi)=\bar m^p-\bar q^{p/2}\bar p$, so that the PDE for the order parameter $\bar q$ can be rewritten as
$$
\partial_t \bar q +\partial_x G(\bb \phi)=0.
$$
The $x$-derivative of the function $G$ is straightforwardly computed:
$$
\partial _x G(\bb\phi)=\partial_x (\bar m^p-\bar q^{p/2}\bar p)=p \bar m^{p-1}\partial_x \bar m - \frac p2 \bar q^{p/2-1} \bar p \ \partial_x \bar q-\bar q^{p/2}\partial_x \bar p.
$$
Now
$$
\partial_ x \bar p =\partial_x\left(-\lim_{N\to \infty}\partial_y S_{N,p,\alpha}(t,\vx)\right)=\partial_y \left(-\lim_{N\to \infty}\partial_x S_{N,p,\alpha}(t,\vx)\right)=\partial_y \bar q.
$$
In the same way
$$
\partial_ x \bar m= \partial_x\left(\lim_{N\to \infty}\partial_z S_{N,p,\alpha}(t,\vx)\right)= \partial_z\left(\lim_{N\to \infty}\partial_x S_{N,p,\alpha}(t,\vx)\right)=-\partial _z \bar q.
$$
With these results, we have
$$
\partial_x G(\bb\phi)=-\frac p2 \bar q^{p/2-1}\bar p \ \partial _x \bar q-\bar q^{p/2}\partial_y \bar q-p \bar m^{p-1}\partial_z \bar q= (v_x \partial_x +v_y \partial_y +v_z \partial_z)\bar q= (\bb v\cdot \bb \nabla)\bar q.
$$
This leads to
$$
\partial_t \bar q +(\bb v \cdot \bb \nabla )\bar q=0.
$$
\end{proof}

\begin{Prp}\label{pr_final}
The equilibrium dynamics of the DAM models is given by the set of self-consistency equations
\begin{equation}\label{auto_p}
\begin{split}
&\bar q=\E\left[\tanh^2\left(\sqrt{\beta'\frac p2\bar q^{p/2-1}\bar p}\eta+\beta'\frac p2\bar m^{p-1}\right)\right],\\
&\bar m=\E\left[\tanh\left(\sqrt{\beta'\frac p2\bar q^{p/2-1}\bar p}\eta+\beta'\frac p2\bar m^{p-1}\right)\right],\\
&\bar p=\begin{cases} \frac{\alpha\beta\bar q}{[1-\beta(1-\bar q)]^2}, & \mbox{if }p=2 \\ \alpha\beta'\bar q^{p/2}  & \mbox{if }p=2k \mbox{ with } k=2,3,\dots
\end{cases}\\
\end{split}
\end{equation}
where $\beta':=\frac{2\beta}{p!}$.
\end{Prp}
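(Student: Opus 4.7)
The plan is to solve the system of non-linear wave equations obtained in Proposition \ref{G} via the method of characteristics and then evaluate the resulting implicit solution at the physical point of parameter space. Since Proposition \ref{G} puts the system in the form $\partial_t \bb\phi + (\bb v\cdot \bb\nabla)\bb\phi = 0$ with effective velocity $\bb v = (-\tfrac{p}{2}\bar q^{p/2-1}\bar p, -\bar q^{p/2}, -p\bar m^{p-1})$, each component of $\bb\phi = (\bar q,\bar p,\bar m)$ is constant along the characteristic curves generated by $\bb v$. Consequently, one may write the solution in the implicit form
\begin{equation*}
\bb\phi(t,\bb x) = \bb\phi_0\big(\bb x - \bb v(t,\bb x)\, t\big),
\end{equation*}
where $\bb\phi_0$ denotes the initial profiles encoded in \eqref{cond_i}. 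This mirrors exactly the resolution strategy already used for the ferromagnetic $p$-spin (around \eqref{eq:pferr_sc}) and the Derrida model (Proposition just after \eqref{eq:BurgersPDC_pSK}), now lifted to three coupled components.

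Next, I would recall that the original DAM partition function \eqref{Z_DHN_g} is recovered by setting $t = t_0 = 2\beta/p!$ and $\bb x = \bb 0$, so that $\beta' := 2\beta/p! = t_0$. Evaluating the implicit solution at this point yields shifted arguments
\begin{equation*}
x - v_x t_0 = \beta' \tfrac{p}{2}\bar q^{p/2-1}\bar p, \qquad y - v_y t_0 = \beta' \bar q^{p/2}, \qquad \tfrac{1}{2}(z - v_z t_0) = \beta' \tfrac{p}{2}\bar m^{p-1},
\end{equation*}
which, once plugged into the initial profiles $\bar q_0(\bb x) = \E_\eta[\tanh^2(\sqrt{x}\eta + z/2)]$ and $\bar m_0(\bb x) = \E_\eta[\tanh(\sqrt{x}\eta + z/2)]$, immediately reproduce the first two lines of \eqref{auto_p}.

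The third line requires a small but important bifurcation between $p = 2$ and $p \geq 4$ even, mirroring the case distinction already present in \eqref{cond_i}. For $p = 2$ one has $\beta' = \beta$ and $t_0 = \beta$, so that substituting $y - v_y t_0 = \beta \bar q$ into $\bar p_0(y) = \alpha y / (1 + y - t_0)^2$ gives $\bar p = \alpha\beta\bar q / (1 - \beta(1-\bar q))^2$, as desired. For $p = 2k$ with $k \geq 2$, the initial profile is linear, $\bar p_0(y) = \alpha y$, and one obtains directly $\bar p = \alpha\beta' \bar q^{p/2}$. Combining the three evaluations yields \eqref{auto_p}.

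The main obstacle is essentially bookkeeping rather than a conceptual difficulty: one must ensure that the three characteristic shifts are computed consistently, given that the effective velocity $\bb v$ itself depends on $\bb\phi$, and one has to carefully track the factor $\beta' = 2\beta/p!$ against $t_0$ to reproduce the $p = 2$ denominator correctly. Beyond this, I would note that the implicit nature of the solution of \eqref{vec_equation} is intrinsic to the non-linear wave problem and is precisely what encodes the self-consistency structure of the replica-symmetric equilibrium equations, in complete analogy with the mechanism already exploited in Sections 3 and 4 for the $p$-spin ferromagnet and the Derrida model.
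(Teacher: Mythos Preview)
Your proposal is correct and follows essentially the same approach as the paper: both solve the vector wave equation \eqref{vec_equation} via the implicit characteristic solution $\bb\phi(t,\bb x)=\bb\phi_0(\bb x-\bb v t)$, evaluate at $t=t_0=2\beta/p!$, $\bb x=\bb 0$, and treat the $p=2$ versus $p=2k$ ($k\ge2$) cases separately for $\bar p$. The only cosmetic difference is that the paper first writes the implicit solution at generic $(t,\bb x)$ before specializing, whereas you specialize immediately; the substance is identical.
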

\proof
In order to proof our assertion, we use the vector PDE \eqref{vec_equation}, whose solution can be given in implicit form as
$$
\bb\phi (t,\vx)= \bb\phi_0 (\vx-\vv t),
$$
where $\bb\phi_0(\bb x)$ is the initial profile given by the conditions \eqref{cond_i}. For the first component, we have
\begin{equation}
	\begin{split}
		\bar q(t,\vx)&= \phi_{0,x}(\bb x -\bb v t)=\E_\eta\tanh^2\left(\sqrt{ x-v_x t}\eta +\frac {z-v_z t} 2\right)=\\
		&=\E_\eta\tanh^2\left(\sqrt{ x+\frac p2 \bar q^{p/2-1}\bar p t}\eta +\frac {z+p \bar m^{p-1}t} 2\right).
	\end{split}
\end{equation}
Analogously,
\begin{equation}
	\begin{split}
		\bar m(t,\vx)& = \phi_{0,z}(\bb x -\bb v t)=\E_\eta\tanh\left(\sqrt{ x+\frac p2 \bar q^{p/2-1}\bar p t}\eta +\frac {z+p \bar m^{p-1}t} 2\right).
	\end{split}
\end{equation}
Finally, if $p=2$, we have
\begin{equation}
	\begin{split}
		\bar p(t,\vx )&=\phi_{0,y}(\vx -\vv t)=\frac{\alpha (y-v_y t)}{(1+y-v_y t-t_0)^2}=
		\frac{\alpha (y+\bar q t)}{(1+y+\bar q t-t_0)^2},
	\end{split}
\end{equation}
while, for $p=2k$ with $k\ge2$, the same order parameter satisfies the self-consistency equation
\begin{equation}
	\begin{split}
		\bar p(t,\vx )&=\phi_{0,y}(\vx -\vv t)={\alpha (y-v_y t)}={\alpha (y+\bar q t)}.
	\end{split}
\end{equation}
Recalling that the thermodynamics of the DAM models is reproduced when $t=t_0=\frac{2\beta}{p!}$ and $\vx =\bb 0$, we easily get the thesis.
\endproof

\begin{Rmk}
We stress that the self-consistency equations in Prop. \ref{pr_final} are in agreement with those obtained by Gardner in \cite{gardner1987} by means of replica trick.
\end{Rmk}
%

\begin{figure}
	\centering
\includegraphics[width=0.49\textwidth]{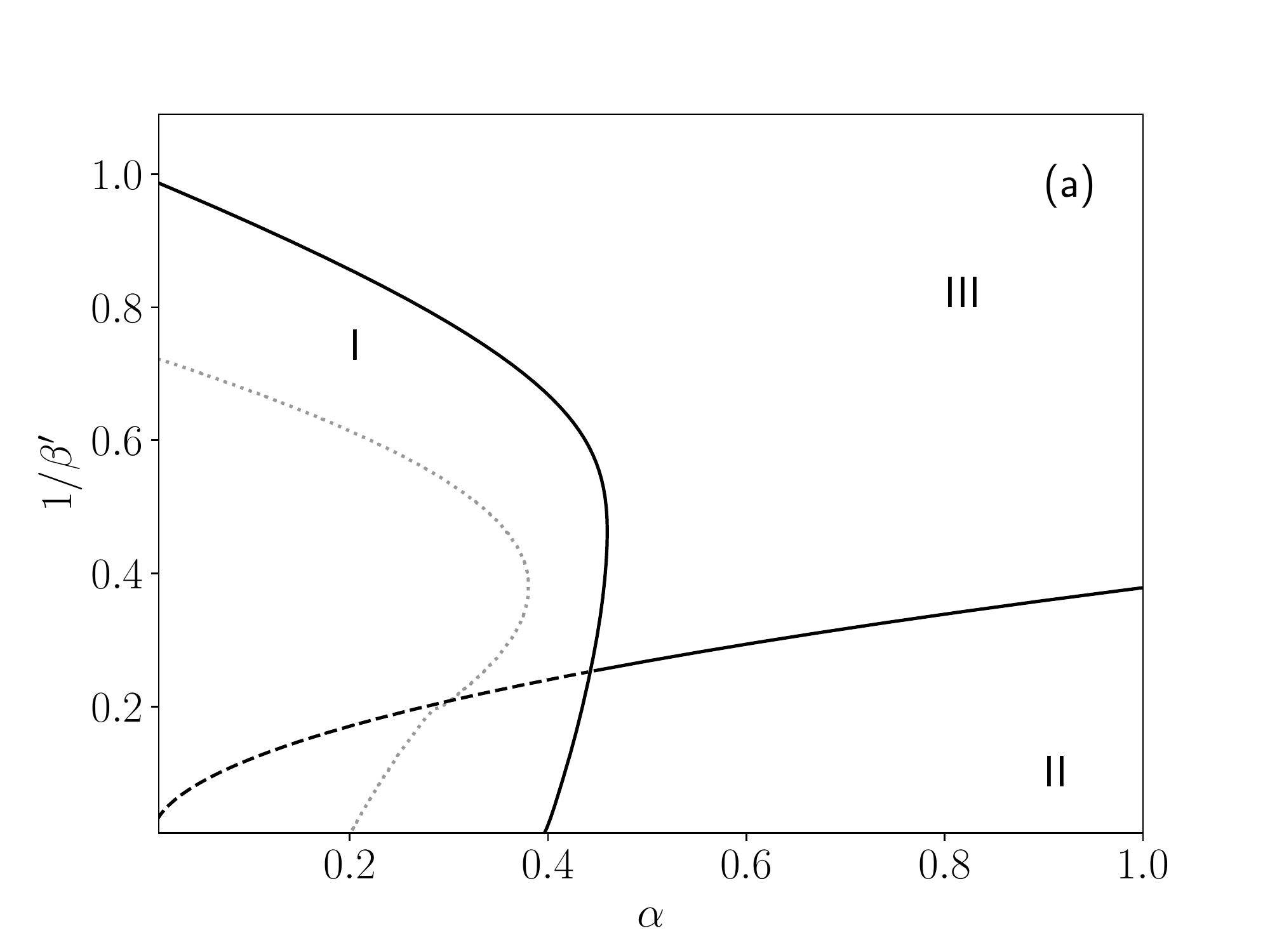}
\includegraphics[width=0.49\textwidth]{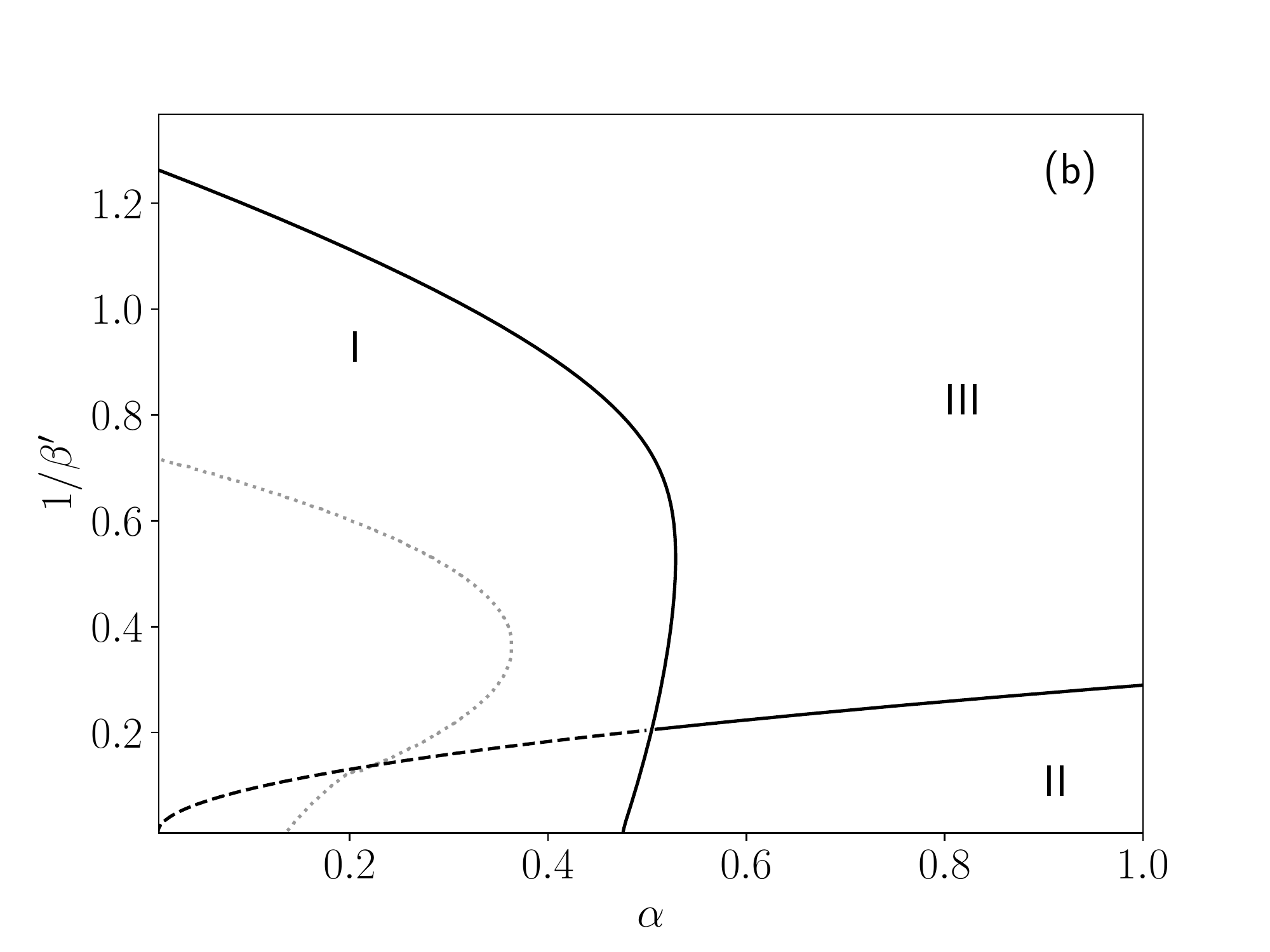}\\
\includegraphics[width=0.49\textwidth]{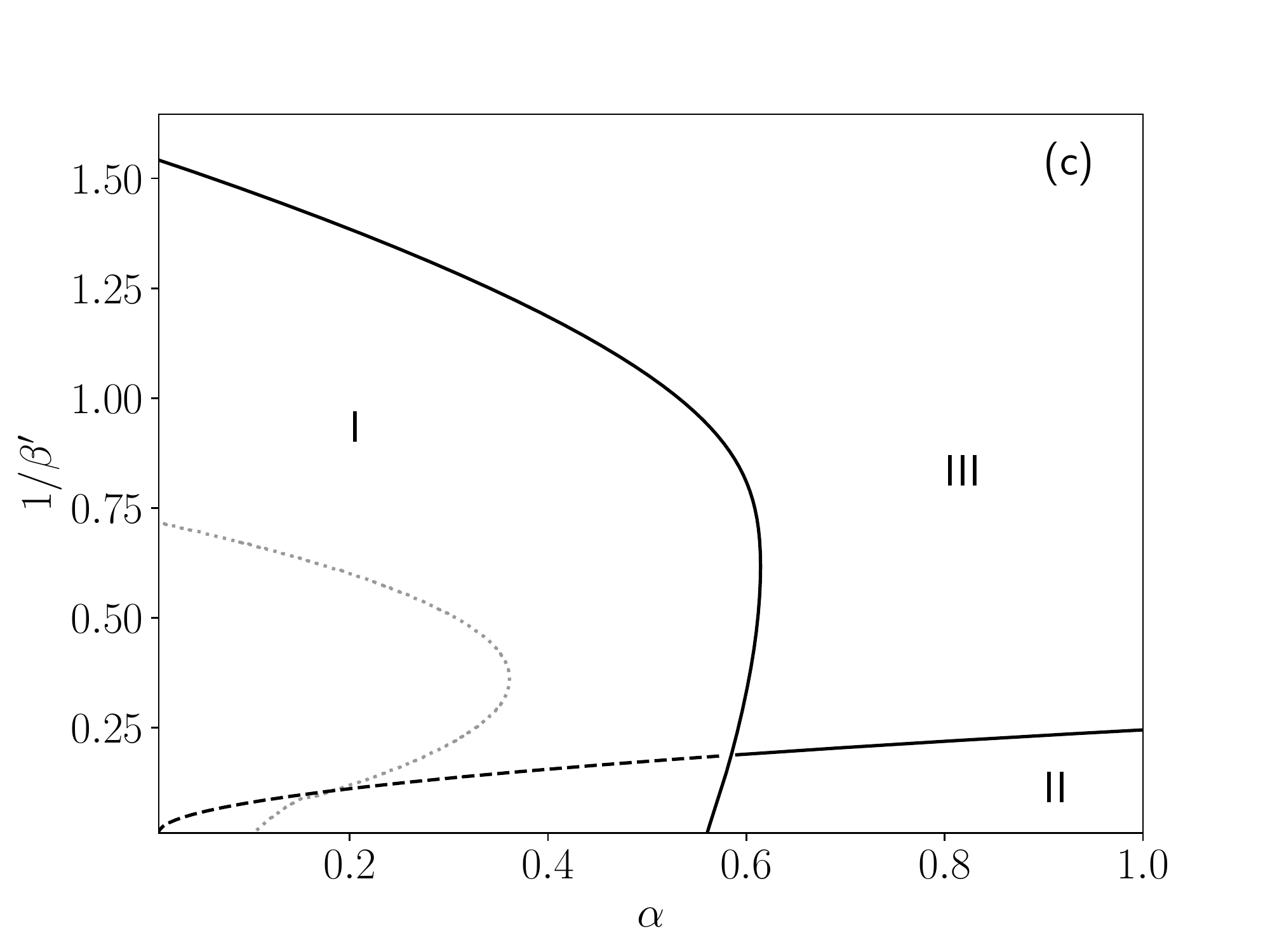}
\includegraphics[width=0.49\textwidth]{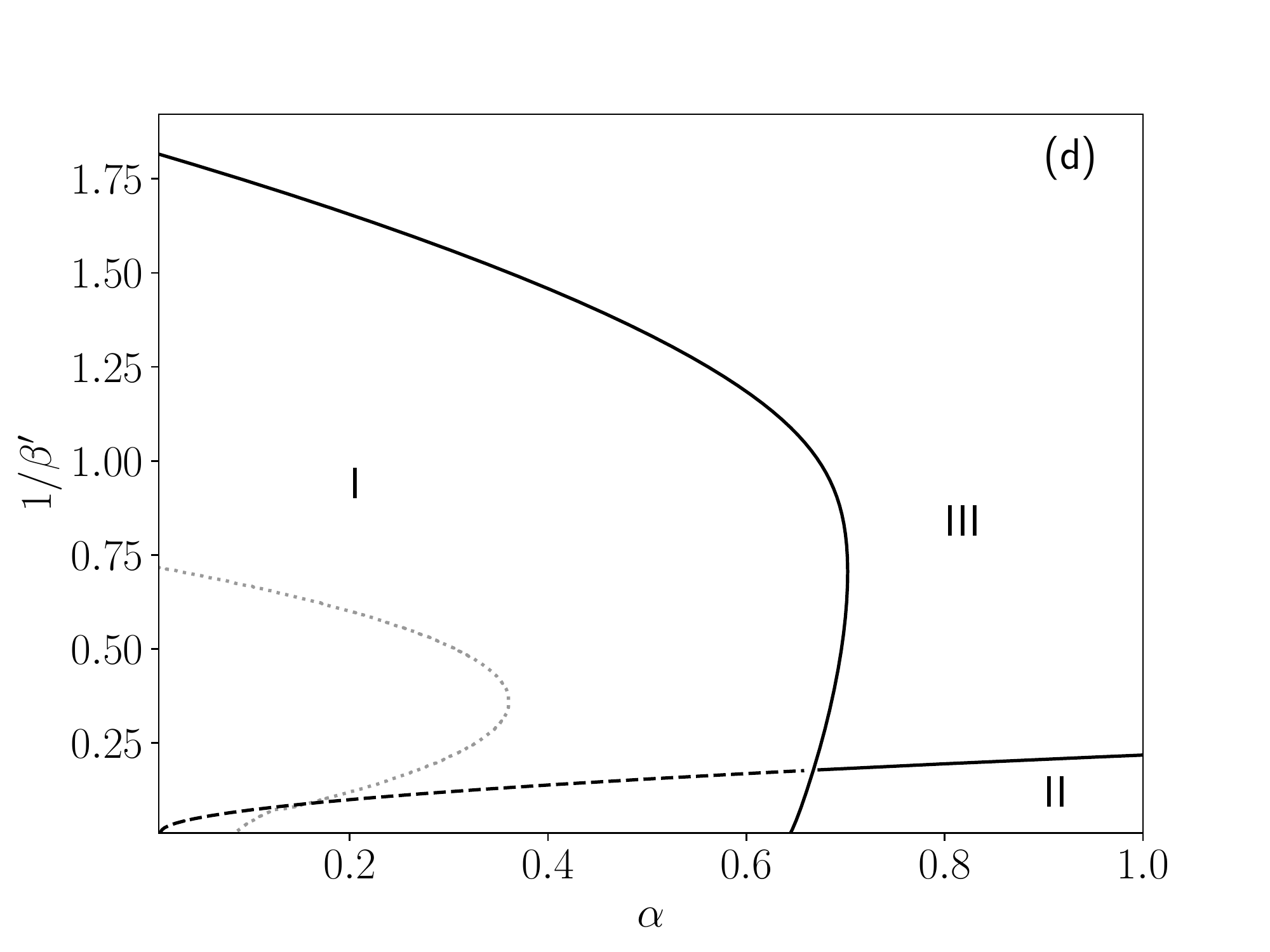}
	\caption{Phase diagram for $p=4$ (panel a), $p=6$ (panel b), $p=8$ (panel c) and $p=10$ (panel d), in the space of the control parameters $\alpha$ and $\beta' = 2\beta/p!$. The region I, delimited by the solid black line, is the retrieval region, while region II and region III are respectively the spin-glass and the ergodic phases. The dashed line is the prolongation of the spin-glass phase inside the retrieval region. Finally, the lighter dotted line inside the region I identifies the portion of the parameter space in which the retrieval states are global minima for the free energy. Notice that the indentation that can be seen in the transition line delimiting the retrieval phase is a spurious effects due to the RS approximation \cite{albanese2021}. 
}\label{p_var.}
\end{figure}

By solving these self-consistency equations \eqref{auto_p} numerically by a fixed-point method for a given $p$ and tuning the parameters $T$ and $\alpha$, we obtain the phase diagrams shown in Fig.~\ref{p_var.}. As expected, the diagrams exhibit the existence of three different regions: 
\begin{itemize}
\item For high levels of noise $T$, no matter the value of storage $\alpha$, the only stable solution is given by $\bar m = 0$, $\bar q = 0$, thus the system is ergodic (III);
\item At lower temperatures and with relatively high load, the system exhibits spin-glass behaviors (II), and the solution is characterized by $\bar m = 0$ and $\bar q\neq0$;
\item For relatively small values of $\alpha$ and $T$, we have $\bar m,\bar q \neq 0$ and the system is located in the retrieval phase (I). In this situation, the system behaves as an associative neural network performing spontaneously pattern recognition. In particular, we can see that the retrieval region observed for values of $T$ and $\alpha$ relatively small can be further split in a pure retrieval region , where pure states are global minima for the free energy, and in a mixed retrieval region, where pure states are local minima, yet their attraction basin is large enough for the system to end there if properly stimulated. 
\end{itemize}
Thus, by increasing $p$ we need to afford higher costs in terms of resources, since the number of connection weights to be properly set grows as $\binom{N}{p}$, but we also have a reward on a coarse scale, since the number of storable patterns grows as $K \sim N^{p-1}$, as well as on a fine scale, since the critical load $\alpha_c$ is also increasing with $p$.

\section{Conclusions}
In this work we focused on DAMs, that are neural-network models widely used for pattern recognition tasks and characterized by high-order (higher than quadratic) interactions between the constituting neurons. Extensive empirical evidence has shown that these models significantly outperform non-dense networks (displaying only quadratic interactions), especially as for the ability to correctly recognize adversarial or extremely noisy examples \cite{Krotov2016DenseAM,Krotov2018DAMS,AABCF-PRL2020,AD-EPJ2020} hence making these models particularly suitable for detecting and cope with malicious attacks.
From the theoretical side, results are sparse and mainly based on the possibility of recasting these networks as spin-glass-like models with spins interacting $p$-wisely; these models can, in turn, be effectively faced by tools stemming from the statistical mechanics of disordered systems (\emph{e.g.}, \cite{Bovier-JPA1994,AABF-NN2020}). Here, we pave this way and develop analytical techniques for their investigation. 
More precisely, we translate the original statistical-mechanical problem into an analytical-mechanical one where control parameters play as spacetime coordinates, the free-energy plays as an action and the macroscopic observables that assess whether the system can be used for pattern recognition tasks play as effective velocities and are shown to fulfil a set of nonlinear partial differential equations. In this framework, transitions from different regimes ({\it e.g.}, from a region in the control parameter space where the system performs correctly and another one where information processing capabilities are lost) appear as the emergence of shock waves.

A main advantage of this route is that it allows for rigorous investigations in a field where most knowledge is based on (pseudo) heuristic approaches, with a wide set of already available methods and strategies to rely upon. Further, by bridging two different perspectives, the statistical-mechanics and the analytical one, we anticipate a cross-fertilization that may lead to a deeper comprehension of the system subtle mechanisms and ultimately progress in the development of a complete theory for (deep) machine learning.

\section*{Acknowledgments}
E.A. and A.F. would like to thank A. Moro for useful discussions.
The authors acknowledge Sapienza University of Rome for financial support (RM120172B8066CB0, RM12117A8590B3FA, AR12117A623B0114).

\appendix
\section{Proof of Lemma \ref{Prp:Sderivs_pSK}}\label{AA:Sderivs_pSK}
\begin{proof}
	First of all, we compute the temporal derivative:
	\begin{equation}
	\begin{split}
		\partial _t A_{N,p} (t,x)&=\frac1N \mathbb E_{\bb J} \frac1{Z_{N,p,\bb J}(t,x)}\sum_{\bb \sigma}\sqrt{\frac{p!}{2 N^{p-1}}} \frac1 {2\sqrt t}	\sum_{1\le i_1<\dots<i_p \le N}J_{i_1\dots i_p}\sigma_{i_1}\dots \sigma_{i_p} B_{N,p,\bb J} (t,x)=\\
		&=\frac1N \sqrt{\frac{p!}{2 N^{p-1}}}{\frac1{2\sqrt t}}\sum_{1\le i_1<\dots<i_p \le N}\mathbb E _{\bb J} J_{i_1\dots i_p}\omega_{t,x}(\sigma_{i_1}\dots \sigma_{i_p}).
	\end{split}
	\end{equation}
	Here, we can use the Wick-Isserlis theorem for normally distributed random variables, ensuring that $\mathbb E _{\bb J} J_l f(\bb J)=\mathbb E _{\bb J} \partial_{J_l} f(\bb J) $ for each function $f$ of the quenched disorder $\bb J$. Thus
	\begin{equation}
		\begin{split}
			\partial _t A_{N,p}(t,x)&=\frac1N \sqrt{\frac{p!}{2 N^{p-1}}}{\frac1{2\sqrt t}}\sum_{1\le i_1<\dots<i_p \le N}\mathbb E _{\bb J}\partial_{ J_{i_1\dots i_p}}\omega_{t,x}(\sigma_{i_1}\dots \sigma_{i_p})=\\
			&=\frac1N \sqrt{\frac{p!}{2 N^{p-1}}}{\frac1{2\sqrt t}} \sqrt{\frac{t p!}{2N^{p-1}}}\sum_{1\le i_1<\dots<i_p \le N}\mathbb E _{\bb J}(\omega_{t,x}(1)-\omega_{t,x}(\sigma_{i_1}\dots \sigma_{i_p})^2)=\\
			&=\frac{p!}{4 N^p}\sum_{1\le i_1<\dots<i_p \le N}(1-\mathbb E_{\bb J}\omega_{t,x}(\sigma_{i_1}\dots \sigma_{i_p})^2).
		\end{split}
	\end{equation}
	The non-trivial contribution in the round brackets in the last equality can be expressed in terms of the overlap order parameter. Indeed
	\begin{equation}
		\begin{split}
			\mathbb E_{\bb J}\omega_{t,x}(\sigma_{i_1}\dots \sigma_{i_p})^2 &=\mathbb E_{\bb J}\omega_{t,x}^{(1)}(\sigma_{i_1}^{(1)}\dots \sigma_{i_p}^{(1)})\ \omega_{t,x}^{(2)}(\sigma_{i_1}^{(2)}\dots \sigma_{i_p}^{(2)})=\\
			&= \mathbb E_{\bb J} \Omega^{(2)}_{t,x}(\sigma_{i_1}^{(1)} \sigma_{i_1}^{(2)} \dots \sigma_{i_p}^{(1)} \sigma_{i_p}^{(2)})=\\
			&= \langle \sigma_{i_1}^{(1)} \sigma_{i_1}^{(2)} \dots \sigma_{i_p}^{(1)} \sigma_{i_p}^{(2)}\rangle_{t,x},
		\end{split}
	\end{equation}
	where $1,2$ are the replica indices. Now, using Rem. \ref{Rmk:sum}, in the thermodynamic the following equality holds:
	\begin{equation}
		\begin{split}
			\partial _t A_{N,p}(t,x)&=\frac1{4 N^p}\sum_{i_1,\dots,i_p =1}^N(1-\langle \sigma_{i_1}^{(1)} \sigma_{i_1}^{(2)} \dots \sigma_{i_p}^{(1)} \sigma_{i_p}^{(2)}\rangle_{t,x})=\\
			&=\frac14(1-\langle\bigl(\frac1N\sum_{i=1}^N\sigma_i^{(1)}\sigma_i^{(2)}\bigr) ^p\rangle_{t,x})=\frac14 (1-\langle q_{12 }^p\rangle_{t,x}).
		\end{split}
	\end{equation}
	Concerning the spatial derivative, we proceed in the same way:
	\begin{equation}
		\begin{split}
			\partial_x A_{N,p}(t,x)&= \frac1N \mathbb E_{\bb J}\frac1{Z_{{N,p,\bb J}}(t,x)}\sum_{\bb \sigma} \frac1{2\sqrt x}\sum_{i=1}^N J_i \sigma_i B_{N,p,\bb J} ( t,x)=\\
			&=\frac1{2N \sqrt x} \sum_{i=1}^N\mathbb E_{\bb J} J_i \omega_{t,x}(\sigma_i)=\frac1{2N \sqrt x} \sum_{i=1}^N\mathbb E_{\bb J} \partial_{J_i }\omega_{t,x}(\sigma_i)=\\
			&=\frac1{2N}\sum_{i=1}^N (1-\mathbb E_{\bb J}\omega_{t,x}(\sigma_i)^2).
		\end{split}
	\end{equation}
	Also in this case, we express $\mathbb E_{\bb J}\omega_{t,x}(\sigma_i)^2=\langle \sigma_i ^{(1)} \sigma_i ^{(2)}\rangle_{t,x}$, thus
	\begin{equation}
		\begin{split}
			\partial_x A_{N,p}(t,x)&=\frac12 (1-\langle q_{12}\rangle_{t,x}).
		\end{split}
	\end{equation}
By simply exploiting Def. \ref{Def:GuerraAction_pSK} and the Rem.\ref{Rmk:sum} we get the thesis.
\end{proof}

\section{Particular cases of low storage DAMs}\label{BB:particular cases}
In this Appendix, having clear the equations describing the general case of the DAM models in the low storage regime, we will study two special cases namely the standard case where $p = 2$ and the more complex case with $p = 3$. In particular, we will observe that these two cases can be described by the Burgers and Sharma-Tasso-Olver equations in a $(K+1)$-dimensional space, respectively. To start this study, however, we first need the following definition and Lemma. 

\begin{Lem}\label{prop_com}
	For all $\mu,\nu=1,\dots,K$,  we have 
	\begin{enumerate}
		\item $[D_\mu,D_\nu]=0 $;
		\item $[\partial_\mu, D_\nu ^s]=N[D_\nu ^s,u_\mu] $ $\forall s>0,$
	\end{enumerate}
where $[\cdot,\cdot]$ is the usual commutator.
\end{Lem}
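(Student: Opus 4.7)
The plan is to verify both identities by direct expansion of the operators, exploiting the crucial fact that the fields $u_\mu(t,\vx)$ arise as gradients of the same scalar potential, namely Guerra's interpolating pressure $A_{N,p,\bb\xi,K}(t,\vx)$, as established in \eqref{eq:dense_guerra_4}. This potential structure immediately gives the mixed-derivative (Schwarz) identity
\begin{equation}
\partial_\mu u_\nu = \partial_\mu \partial_\nu A_{N,p,\bb\xi,K} = \partial_\nu \partial_\mu A_{N,p,\bb\xi,K} = \partial_\nu u_\mu,
\end{equation}
which turns out to be the only non-trivial ingredient.

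For claim (1), I would evaluate $[D_\mu,D_\nu]$ on an arbitrary smooth test function $f$ by expanding the four terms of $D_\mu D_\nu f$ arising from $D_\mu = \tfrac{1}{N}\partial_\mu + u_\mu$, and likewise those of $D_\nu D_\mu f$. The pure second-derivative piece $\tfrac{1}{N^2}\partial_\mu\partial_\nu f$ and the mixed pieces $\tfrac{1}{N}u_\mu \partial_\nu f$, $\tfrac{1}{N}u_\nu \partial_\mu f$ and $u_\mu u_\nu f$ all match between the two orderings, so that
\begin{equation}
[D_\mu,D_\nu]f = \tfrac{1}{N}\bigl((\partial_\mu u_\nu) - (\partial_\nu u_\mu)\bigr)f,
\end{equation}
which vanishes by the gradient identity above.

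For claim (2), I would proceed by induction on $s$. The base case $s=1$ reads
\begin{equation}
[\partial_\mu, D_\nu] = [\partial_\mu, \tfrac{1}{N}\partial_\nu + u_\nu] = \partial_\mu u_\nu = \partial_\nu u_\mu = N[\tfrac{1}{N}\partial_\nu, u_\mu] = N[D_\nu, u_\mu],
\end{equation}
again via the potential identity (and noting that $u_\nu$ commutes with $u_\mu$). For the inductive step, factor $D_\nu^{s+1} = D_\nu \cdot D_\nu^s$ and apply the derivation rule $[A,BC] = [A,B]C + B[A,C]$ to both sides. The left-hand side becomes $[\partial_\mu, D_\nu]D_\nu^s + D_\nu[\partial_\mu, D_\nu^s]$, while $N[D_\nu^{s+1}, u_\mu] = N[D_\nu, u_\mu]D_\nu^s + N D_\nu[D_\nu^s, u_\mu]$; combining the base case with the inductive hypothesis matches the two expressions term by term.

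I do not expect any serious obstacle: the essential observation is simply that $\bb u$ is a gradient field, a fact already supplied by Guerra's interpolation. The only mild care point is to keep the operator orderings straight when distributing $D_\mu$ and $D_\nu^s$ across products involving the multiplicative operators $u_\mu$, but this is bookkeeping rather than a genuine difficulty, and Lemma 1 of claim (1) can in fact be viewed as a special instance of claim (2) with $s=1$ and the roles of $\partial_\mu$ replaced by $D_\mu = \partial_\mu/N + u_\mu$.
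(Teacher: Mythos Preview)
Your proof of claim (1) is essentially identical to the paper's: both compute $[D_\mu,D_\nu]=\tfrac1N(\partial_\mu u_\nu-\partial_\nu u_\mu)$ directly and invoke that $u_\mu=\partial_\mu A_{N,p,\bb\xi,K}$ is a gradient field.

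For claim (2), your induction is correct, but the paper takes a shorter route. Having established claim (1), it observes that $[D_\mu,D_\nu^s]=0$ for all $s$, and then simply splits $D_\mu=\tfrac1N\partial_\mu+u_\mu$ inside this vanishing commutator:
\[
0=[D_\mu,D_\nu^s]=\tfrac1N[\partial_\mu,D_\nu^s]+[u_\mu,D_\nu^s],
\]
which rearranges immediately to $[\partial_\mu,D_\nu^s]=N[D_\nu^s,u_\mu]$. This avoids the inductive bookkeeping altogether and makes transparent that claim (2) is really just claim (1) rewritten. Your approach has the minor advantage of being self-contained (it never appeals to $[D_\mu,D_\nu^s]=0$ for $s>1$), but the paper's argument is the cleaner one-liner.
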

\begin{proof}
	The proof of the statement $1.$ works by direct computation. Indeed:
	\begin{equation*}
		\begin{split}
			[D_\mu,D_\nu]&=\Big[\frac1N \partial_\mu+u_\mu,\frac1N\partial_\nu+u_\nu\Big]=\\&
			=\frac1{N^2} [\partial_\mu,\partial_\nu]+\frac1N [\partial_\mu,u_\nu]+\frac1N [u_\mu,\partial_\nu]+[u_\mu,u_\nu]=\\
			&=\frac1N(\partial_\mu u_\nu -\partial_\nu u_\mu).
		\end{split}
	\end{equation*}
	Since the field $u_\mu(t,\vx)$ is conservative, {\it i.e.} $u_\mu(t,\vx)=\partial_\mu A_{N,p,\bb\xi,K} (t,\vx)$, we have
	$$
	\partial_\mu u_\nu -\partial_\nu u_\mu= (\partial _\mu \partial_\nu- \partial_\nu \partial_\mu )A_{N,p,\bb\xi,K}(t,\vx)=0,
	$$
	meaning that $[D_\mu,D_\nu]=0.$
	Let as now prove the property $2.$ In this case the proof works by exploiting the property $[D_\mu,D_\nu]=0$ (from which it follows also $[D_\mu,D_\nu^s]=0$) and the definition $D_\mu =\frac1N\partial_\mu+u_\mu$. Indeed
	$$
	0=[D_\mu,D_\nu^s]= \frac1N [\partial_\mu, D^s_\nu]+[u_\mu, D_\nu ^s].
	$$
	By rearranging the equality we easily get the thesis.
\end{proof}

From the previous lemma we can then prove the following two propositions

\begin{Prp}
	In the $p=2$ case with a generic finite $K$, the evolutive equations \eqref{eq:dense_guerra_9} reduces to the multidimensional Burgers equation.
\end{Prp}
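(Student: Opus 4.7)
The plan is to substitute $p=2$ into the general evolution equation \eqref{eq:dense_guerra_9} and unpack the operator $D_\nu^{p-1}$, reducing the nonlinear term to a form involving only first derivatives and squares of the fields. The starting point is
\begin{equation*}
\partial_t u_\mu = -\sum_{\nu=1}^{K}\partial_\mu D_\nu u_\nu = -\sum_{\nu=1}^{K}\partial_\mu\Bigl(\tfrac{1}{N}\partial_\nu u_\nu + u_\nu^2\Bigr),
\end{equation*}
and the goal is to recognise the right-hand side as a Burgers-type operator acting on $u_\mu$ in the $K$-dimensional spatial domain $\bb x=(x_1,\dots,x_K)$.

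The key step is to exploit the fact that the velocity field $u_\mu$ is conservative, i.e.\ $u_\mu(t,\vx)=\partial_\mu A_{N,p,\bb\xi,K}(t,\vx)$, which immediately yields the symmetry $\partial_\mu u_\nu = \partial_\nu u_\mu$ (this is precisely the identity that underlies point $1$ of Lemma \ref{prop_com}, guaranteeing that the $D_\mu$'s commute). Using it twice inside the previous expression gives
\begin{equation*}
\partial_\mu\partial_\nu u_\nu = \partial_\nu \partial_\mu u_\nu = \partial_\nu^2 u_\mu, \qquad \partial_\mu(u_\nu^2) = 2 u_\nu \partial_\mu u_\nu = 2 u_\nu \partial_\nu u_\mu.
\end{equation*}
Plugging back into the evolution equation yields the \emph{multidimensional viscous Burgers equation}
\begin{equation*}
\partial_t u_\mu + 2\sum_{\nu=1}^{K} u_\nu\,\partial_\nu u_\mu + \frac{1}{N}\sum_{\nu=1}^{K}\partial_\nu^2 u_\mu = 0,
\end{equation*}
with viscosity parameter $1/N$, precisely the $(K+1)$-dimensional generalization of the equation obtained in \cite{fachechi2021pde} for the CW model (which is recovered at $K=1$, as noted in the paper).

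I do not anticipate any genuine obstacle: the only non-routine ingredient is the conservativity of $\bb u$, which is already guaranteed by the definition $u_\mu=\partial_\mu A_{N,p,\bb\xi,K}$, and its consequence is exactly what is needed to turn the mixed derivatives $\partial_\mu\partial_\nu u_\nu$ into Laplacian-type terms and the gradient of $u_\nu^2$ into a convective term involving $u_\nu\partial_\nu u_\mu$. If one wants to present the derivation more compactly, it can be phrased as a single application of Lemma \ref{prop_com}: since $[D_\mu,D_\nu]=0$, one may commute $\partial_\mu$ with $D_\nu$ up to $N[D_\nu,u_\mu]$, which produces exactly the convective and viscous terms in a unified manner.
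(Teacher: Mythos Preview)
Your proof is correct and follows essentially the same route as the paper: both reduce $\partial_\mu D_\nu u_\nu$ using the conservativity $\partial_\mu u_\nu=\partial_\nu u_\mu$ to obtain the Laplacian and convective terms. The only cosmetic difference is that the paper first invokes Lemma~\ref{prop_com} (the commutator identity $[\partial_\mu,D_\nu]=N[D_\nu,u_\mu]$) before expanding, whereas you expand $D_\nu u_\nu=\tfrac{1}{N}\partial_\nu u_\nu+u_\nu^2$ directly and then differentiate; as you yourself note at the end, these two presentations are equivalent.
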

\begin{proof}
	In the $p=2$ case, the equations \eqref{eq:dense_guerra_9} reduces to
	$$
	\partial_t u_\mu = -\sum_{\nu=1}^K\partial_\mu D_\nu u_\nu.
	$$
	Now, using the second claim of Lemma \ref{prop_com} with $s=1$, we have 
	$$
	\partial_\mu D_\nu u_\nu=\Big(D_\nu \partial_\mu+N[D_\nu,u_\mu] \Big) u_\nu.
	$$
	Recalling the definition of the $D$ operator, we have
	\begin{eqnarray*}
		D_\nu \partial_\mu u_\nu &=& \Big( \frac1N \partial_\nu +u_\nu\Big)\partial_\mu u_\nu =\frac1N \partial_\nu \partial_\mu u_\nu+u_\nu \partial_\mu u_\nu, \\
		\big[D_\nu,u_\mu\big]&=&\big[\frac1N\partial_\nu +u_\nu,u_\mu\big]=\frac1N\big[\partial_\nu ,u_\mu\big]=\frac1N \partial_\nu u_\mu.
	\end{eqnarray*}
	Thus
	$$
	\partial_t u_\mu = -\sum_{\nu=1}^K \Big(\frac1N \partial_\nu \partial_\mu u_\nu+u_\nu \partial_\mu u_\nu+u_\nu\partial_\nu u_\mu \Big).
	$$
	But now
	\begin{eqnarray*}
		\partial_\nu \partial_\mu u_\nu &=&\partial_\nu \partial_\mu \partial_\nu A_{N,p=2,\bb\xi,K}(t,\vx)=\partial_\nu ^2 u_\mu,\\
		u_\nu \partial_\mu u_\nu&=&u_\nu \partial_\mu \partial_\nu A_{N,p=2,\bb\xi,K}(t,\vx)=u_\nu\partial_\nu u_\mu.
	\end{eqnarray*}
	Using these results, we can rewrite the equation as
	$$
	\partial_t u_\mu+ \sum_{\nu=1}^K \Big(\frac1N \partial_\nu^2 u_\mu+2u_\nu \partial_\nu u_\mu \Big)=0,
	$$
	or, in vector form
	$$
	\partial _t \bb u+\frac1N \bb\nabla^2 \bb u+2(\bb u\cdot \bb \nabla) \bb u=0,
	$$
	which is precisely the Burgers equation in $K+1$ spacetime.
\end{proof}

\begin{Prp}
	In the $p=3$ case (and generic $K$), the evolutive equations \eqref{eq:dense_guerra_9} reduces to the multidimensional Sharma-Tasso-Olver (STO) equation \cite{olver1977evolution, tasso1976cole}.
\end{Prp}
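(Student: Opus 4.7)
The plan is to mimic the $p=2$ argument that was just carried out for the Burgers case: set $s=p-1=2$ in the master equation \eqref{eq:dense_guerra_9}, expand the operator powers $D_\nu^{p-1}=D_\nu^2$ explicitly, push the external $\partial_\mu$ inside using the commutator identity from Lemma \ref{prop_com}, and then rewrite all mixed derivatives $\partial_\mu u_\nu$ as pure derivatives $\partial_\nu u_\mu$ by exploiting that $u_\mu=\partial_\mu A_{N,p,\bb\xi,K}$ is a conservative field. Collecting the resulting terms is what yields the $K+1$-dimensional STO structure.

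More concretely, first I would compute $D_\nu^2 u_\nu$ by applying $D_\nu=\tfrac{1}{N}\partial_\nu+u_\nu$ twice to $u_\nu$, obtaining
\begin{equation*}
D_\nu^2 u_\nu=\frac{1}{N^2}\partial_\nu^2 u_\nu+\frac{3}{N}\,u_\nu\partial_\nu u_\nu+u_\nu^3,
\end{equation*}
where the coefficient $3/N$ comes from adding the two contributions $\tfrac{1}{N}\partial_\nu(u_\nu^2)=\tfrac{2}{N}u_\nu\partial_\nu u_\nu$ and $u_\nu\cdot\tfrac{1}{N}\partial_\nu u_\nu$. Next I would apply $\partial_\mu$ term by term. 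Exactly as in the $p=2$ proof, the conservativity of $\bb u$ implies $\partial_\mu u_\nu=\partial_\nu u_\mu$, whence $\partial_\mu\partial_\nu^2 u_\nu=\partial_\nu^3 u_\mu$, $\partial_\mu(u_\nu\partial_\nu u_\nu)=(\partial_\nu u_\mu)(\partial_\nu u_\nu)+u_\nu\partial_\nu^2 u_\mu$, and $\partial_\mu u_\nu^3=3u_\nu^2\partial_\nu u_\mu$.

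Summing over $\nu$ and moving the resulting expression to the left-hand side of \eqref{eq:dense_guerra_9} gives
\begin{equation*}
\partial_t u_\mu+\sum_{\nu=1}^K\!\left[\frac{1}{N^2}\partial_\nu^3 u_\mu+\frac{3}{N}(\partial_\nu u_\mu)(\partial_\nu u_\nu)+\frac{3}{N}u_\nu\partial_\nu^2 u_\mu+3\,u_\nu^2\partial_\nu u_\mu\right]=0,
\end{equation*}
which for $K=1$ collapses to the classical Sharma--Tasso--Olver equation with viscosity $1/N$, namely $\partial_t u+\tfrac{1}{N^2}\partial_x^3 u+\tfrac{3}{N}(\partial_x u)^2+\tfrac{3}{N}u\,\partial_x^2 u+3u^2\partial_x u=0$. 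This is what one should recognize as the multidimensional STO system; rewriting it in compact vector form using $\bb\nabla$ and the conservativity of $\bb u$ is then a cosmetic step.

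I do not anticipate a real obstacle: the derivation is a direct application of Lemma \ref{prop_com} together with the integrability $\partial_\mu u_\nu=\partial_\nu u_\mu$, and the only mildly delicate bookkeeping is checking that the three first-order contributions combine with the correct coefficient $3/N$ so as to reproduce the STO combination rather than a Burgers-type one. If anything, the subtle point is to identify the "right" multidimensional generalization of STO, since several inequivalent $K$-component extensions exist in the literature \cite{olver1977evolution,tasso1976cole}; the one that emerges here is precisely the one dictated by the conservative structure of the interpolating pressure, so no additional input beyond Lemma \ref{prop_com} and the identities already used in the $p=2$ proposition is needed.
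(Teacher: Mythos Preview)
Your proposal is correct and follows essentially the same route as the paper: expand $D_\nu^2 u_\nu=\tfrac{1}{N^2}\partial_\nu^2 u_\nu+\tfrac{3}{N}u_\nu\partial_\nu u_\nu+u_\nu^3$, apply $\partial_\mu$ term by term, and use the conservativity $\partial_\mu u_\nu=\partial_\nu u_\mu$ to convert mixed derivatives into pure $\nu$-derivatives of $u_\mu$. In fact your cross-term $(\partial_\nu u_\mu)(\partial_\nu u_\nu)$ is the careful form of what the paper writes; note also that although you invoke Lemma~\ref{prop_com} in the plan, neither your concrete computation nor the paper's actually needs the commutator identity here --- direct expansion suffices.
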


\begin{proof}
	In the $p=3$ case, the equations \eqref{eq:dense_guerra_9} reduces to
	$$\partial_t u_\mu = -\sum_{\nu=1}^K\partial_\mu D^2_\nu u_\nu.$$
	Recalling that $D_\nu:=\left(\frac 1 N \partial_\nu+u_\nu\right)$ we have
	\begin{equation*}
		\begin{split}
			\partial_\mu D^2_\nu u_\nu&= \partial_\mu\left(\frac 1 N \partial_\nu+u_\nu\right)\left(\frac 1 N \partial_\nu+u_\nu\right)u_\nu\\&=\partial_\mu \left[\frac{1}{N^2}\partial_\nu^2 u_\nu +\frac{1}{N}\partial_\nu(u^2_\nu)+\frac{1}{N}u_\nu\partial_\nu u_\nu+u^3_\nu\right]\\&=\partial_\mu \left[\frac{1}{N^2}\partial_\nu^2 u_\nu +\frac{3}{N}u_\nu\partial_\nu u_\nu+u^3_\nu\right].
		\end{split}
	\end{equation*}
	Performing the derivative with respect to the $\mu$-th component we therefore get
	\begin{equation}
		\partial_\mu D^2_\nu u_\nu =\frac{1}{N^2}\partial_\mu\partial_\nu^2 u_\nu+\frac{3}{N} \left(\partial_\mu u_\nu \partial_\nu u_\nu +u_\nu \partial_\mu \partial_\nu u_\nu \right)+3 u_\nu^2 \partial_\mu u_\nu.
	\end{equation}
	Now, recalling that $u_\mu(t,\bb x):=\omega_{t,\bb x}(m_\nu(\bsigma))$ and \eqref{eq:dense_guerra_4}, we have
	\begin{equation*}
		\begin{split}
			 \partial_\mu\partial_\nu^2 u_\nu&=\partial_\mu\partial_\nu^3 A_{N,p=3,\bb\xi,K}(t,\bb x)=\partial_\nu^3 u_\mu\\
			 \partial_\mu u_\nu&=\partial_\mu\partial_\nu A_{N,p=3,\bb\xi,K}(t,\bb x)=\partial_\nu u_\mu\\
			 \partial_\mu\partial_\nu u_\nu&=\partial_\mu\partial_\nu^2 A_{N,p=3,\bb\xi,K}(t,\bb x)=\partial_\nu^2 u_\mu.
		\end{split}
	\end{equation*}
	Using these results, we can rewrite the equation as
	$$
	\partial_t u_\mu=- \sum_{\nu=1}^K \left(\frac{1}{N^2} \partial^3_\nu u_\mu+\frac3N(\partial_\nu u_\mu)^2 +\frac3N u_\nu\partial^2_\nu u_\mu+3 u_\nu^2 \partial_\nu u_\mu\right).
	$$
\end{proof}

\section{Proof of Lemma \ref{derSN}}\label{AA:ProofLem_derSN}
\proof
We prove the equality for the $t$-derivative of the Guerra's action, as the others follow with similar calculations. To do this, we first compute the temporal derivative of the interpolating statistical pressure:
\begin{equation*}
\begin{split}
\partial_t A_{N,p,\alpha}&=\frac 1 2\E\omega_{t,\bb x}(m^p)+\frac{1}{2N\sqrt{tN^{p-1}}}\sum_{\mu} \sum_{i_1,\dots,i_{p/2}} \E\, \xi^\mu_{i_1}\dots\xi^\mu_{i_{p/2}}\omega_{t,\bb x}(\sigma_{i_1}\dots\sigma_{i_{p/2}}\tau_\mu)-\frac12\E\omega_{t,\bb x}(p_{11}).
\end{split}
\end{equation*}
Since non-retrieved patterns constitutes a noise contribution to the system dynamics, we can assume - with standard arguments about the universality of noise \cite{Genovese_2012,Agliari_2019} - that the whole product $\xi^\mu_{i_1}\dots \xi^\mu_{i_{p/2}}$ is Gaussian-distributed as long as $N\to\infty$ and $K\to\infty$. Thus, we can apply the Wick-Isserlis theorem on the second contribution to get
\begin{equation*}
\begin{split}
\partial_t A_{N,p,\alpha}&=\frac 1 2\langle m^p\rangle_{t,\vx}+\frac{1}{2{N^{p}}}\sum_{\mu} \sum_{i_1,\dots,i_{p/2}} \E\left[\omega_{t,\bb x}(\tau_\mu^2)-\omega^2_{t,\bb x}(\sigma_{i_1}\dots\sigma_{i_{p/2}}\tau_\mu)\right]-\frac12\langle p_{11}\rangle_{t,\vx}=\\&
=\frac12 \langle m^p\rangle_{t,\vx}+\frac1{2N^p}(N^p\langle p_{11}\rangle_{t,\vx}-N^p\langle q_{12}^{p/2}p_{12}\rangle_{t,\vx})-\frac12\langle p_{11}\rangle_{t,\vx}=
\\&=\frac 1 2\langle m^p\rangle_{t,\vx}-\frac12\langle p_{12}q_{12}^{p/2}\rangle_{t,\vx},
\end{split}
\end{equation*}
where we used the definitions of the overlap order parameters \eqref{overlap_q} and \eqref{overlap_p}.
Recalling that $S_{N,p,\alpha}(t,\vx)=2A_{N,p,\alpha} (t,\vx)-x$, we finally get the result. 
\endproof

\section{Proof of Proposition \ref{Prop_Seq}}\label{BB:Dim_Seq}
\proof
We only prove the equation \eqref{dx}, the other one can be obtained in an analogous way. We will denote for simplicity of notation $\langle \cdot\rangle_{t,\vx}$ with $\langle \cdot\rangle$. Thus
\begin{equation}\label{dxF}
\begin{split}
\partial_x \langle O(\bb {\underline{\sigma}},\bb {\underline{\tau}})\rangle&=\frac{1}{2\sqrt x}\sum_{i=1}^N\sum_{a=1}^2\E_\eta\eta_i \Omega^{(2)}\left(O(\bb {\underline{\sigma}},\bb {\underline{\tau}})\sigma_i^{(a)}\right)-\frac{1}{\sqrt x}\sum_{i=1}^N\E\eta_i\Omega^{(3)}\left(O(\bb {\underline{\sigma}},\bb {\underline{\tau}})\sigma_i^{(3)}\right)=\\&=\frac{1}{2\sqrt x}\sum_{i=1}^N\sum_{a=1}^2\E_\eta\partial_{\eta_i} \Omega^{(2)}\left(O(\bb {\underline{\sigma}},\bb {\underline{\tau}})\sigma_i^{(a)}\right)-\frac{1}{\sqrt x}\sum_{i=1}^N\E\partial_{\eta_i}\Omega^{(3)}\left(O(\bb {\underline{\sigma}},\bb {\underline{\tau}})\sigma_i^{(3)}\right),
\end{split}
\end{equation}
where in the last line we used the Wick-Isserlis theorem. Now, it is simple to see that 
\begin{equation}\label{domega2}
\partial_{\eta_i} \Omega^{(2)}\left(O(\bb {\underline{\sigma}},\bb {\underline{\tau}})\sigma_i^{(a)}\right)=\sqrt x\left[\sum_{b=1}^2\Omega^{(2)}\left(O(\bb {\underline{\sigma}},\bb {\underline{\tau}})\sigma_i^{(a)}\sigma_i^{(b)}\right)-2\Omega^{(3)}\left(O(\bb {\underline{\sigma}},\bb {\underline{\tau}})\sigma_i^{(a)}\sigma_i^{(3)}\right)\right],
\end{equation}
and 
\begin{equation}\label{domega3}
\partial_{\eta_i} \Omega^{(3)}\left(O(\bb {\underline{\sigma}},\bb {\underline{\tau}})\sigma_i^{(3)}\right)=\sqrt x\left[\sum_{b=1}^3\Omega^{(3)}\left(O(\bb {\underline{\sigma}},\bb {\underline{\tau}})\sigma_i^{(3)}\sigma_i^{(b)}\right)-3\Omega^{(4)}\left(O(\bb {\underline{\sigma}},\bb {\underline{\tau}})\sigma_i^{(3)}\sigma_i^{(4)}\right)\right].
\end{equation}
By substituting \label{domega2} and \label{domega3} into \label{dxF} we get 
\begin{equation*}
\begin{split}
\partial_x \langle O(\bb {\underline{\sigma}},\bb {\underline{\tau}})\rangle&=\frac 1 2 \sum_{i=1}^N\sum_{a,b=1}^2\E\Omega^{(2)}\left(O(\bb {\underline{\sigma}},\bb {\underline{\tau}})\sigma_i^{(a)}\sigma_i^{(b)}\right)-\sum_{i=1}^N\sum_{b=1}^2\E\Omega^{(3)}\left(O(\bb {\underline{\sigma}},\bb {\underline{\tau}})\sigma_i^{(3)}\sigma_i^{(a)}\right)\\&-\sum_{i=1}^N\sum_{a=1}^3\E\Omega^{(3)}\left(O(\bb {\underline{\sigma}},\bb {\underline{\tau}})\sigma_i^{(3)}\sigma_i^{(a)}\right)+3\sum_{i=1}^N\E\Omega^{(4)}\left(O(\bb {\underline{\sigma}},\bb {\underline{\tau}})\sigma_i^{(3)}\sigma_i^{(4)}\right).
\end{split}
\end{equation*}
Recalling that $q_{ab}=\frac 1 N\sum_i\sigma_i^{(a)}\sigma_i^{(b)}$ we can write 
\begin{equation}
\begin{split}
\partial_x \langle O(\bb {\underline{\sigma}},\bb {\underline{\tau}})\rangle&=\frac N 2 \sum_{a,b=1}^2\langle O(\bb {\underline{\sigma}},\bb {\underline{\tau}}) q_{ab}\rangle-\sum_{a=1}^2\langle O(\bb {\underline{\sigma}},\bb {\underline{\tau}}) q_{a3}\rangle-\sum_{a=1}^3\langle O(\bb {\underline{\sigma}},\bb {\underline{\tau}}) q_{3a}\rangle+3N\langle O(\bb {\underline{\sigma}},\bb {\underline{\tau}}) q_{34}\rangle=
\\&= \frac N 2 \sum_{a,b=1}^2\langle O(\bb {\underline{\sigma}},\bb {\underline{\tau}}) q_{ab}\rangle-2N \sum_{a=1}^2\langle O(\bb {\underline{\sigma}},\bb {\underline{\tau}})q_{a3}\rangle-N\langle O(\bb {\underline{\sigma}},\bb {\underline{\tau}})\rangle +3N\langle O(\bb {\underline{\sigma}},\bb {\underline{\tau}})q_{34}\rangle
\end{split}
\end{equation}
thus obtaining Eq. \eqref{dx}.
\endproof

\bibliographystyle{ieeetr}
\bibliography{bib_burger}

\end{document}